\pdfoutput=1
\PassOptionsToPackage{pdftex,
pdfencoding=auto,
pdfnewwindow=true,
pdfusetitle=true,
bookmarks=true,
bookmarksnumbered=true,
bookmarksopen=false,
pdfpagemode=UseThumbs,
bookmarksopenlevel=1,
pdfpagelabels=true,
breaklinks=true,
colorlinks=true,
unicode=true,
pdfborder={0 0 0},
backref=false
}{hyperref}
\PassOptionsToPackage{dvipsnames,pdftex}{xcolor}
\documentclass[superscriptaddress,aps,pra,nofootinbib,twocolumn,twoside,reprint,letterpaper,longbibliography,showpacs,10pt,floatfix,tikz,amsmath,amsfonts,amssymb]{revtex4-2}

\AtBeginDocument{
    \newwrite\bibnotes
    \def\bibnotesext{Notes.bib}
    \immediate\openout\bibnotes=\jobname\bibnotesext
    \immediate\write\bibnotes{@CONTROL{REVTEX42Control}}
    \immediate\write\bibnotes{@CONTROL{
    apsrev42Control,author="08",editor="1",pages="1",title="0",year="1"}}
     \if@filesw
     \immediate\write\@auxout{\string\citation{apsrev42Control}}
    \fi
}

\usepackage{amsthm,mathtools,bm,graphicx,verbatim,mathrsfs,units}
\usepackage[utf8]{inputenx}
\usepackage{bbm}
\usepackage{xcolor}
\usepackage{dsfont}
\usepackage{float}
\usepackage{cancel}
\usepackage{booktabs}
\usepackage{mathrsfs}
\usepackage[alwaysadjust,flushleft,pointlessenum]{paralist}
\setdefaultenum{\bfseries1.}{}{}{}
\usepackage{changepage}
\usepackage[nolist,withpage]{acronym}

\definecolor{myurlcolor}{rgb}{0,0,0.7}
\definecolor{myrefcolor}{rgb}{0.8,0,0}
\definecolor{purple}{RGB}{128,0,128}
\definecolor{ultramarine}{RGB}{63, 0, 255}
\definecolor{medblue}{RGB}{0, 0, 100}
\definecolor{googleblue}{RGB}{34, 0, 204}
\definecolor{panblue}{RGB}{0,24,150}
\definecolor{carmine}{RGB}{150, 0, 24}
\definecolor{gray}{RGB}{150, 150, 150}
\usepackage{hyperref} 
\hypersetup{
unicode=true,
bookmarksnumbered=false,
bookmarksopen=false,
breaklinks=false,
colorlinks=true,
linkcolor=carmine,
citecolor=googleblue,
urlcolor=panblue,
anchorcolor=OliveGreen}
\usepackage{amsthm}
\usepackage{amsmath}
\usepackage{amssymb}
\usepackage{refcount}
\usepackage{graphicx}
\usepackage{color}
\usepackage{enumitem} 
\usepackage[caption=false]{subfig}
\usepackage{tikz}
\usetikzlibrary{shapes.geometric,shapes,positioning,shapes.misc}

\usepackage{pifont}
\newcommand{\cmark}{\ding{51}}%
\newcommand{\xmark}{\ding{55}}%
\usepackage{tabularx}
\usepackage{array}
\newtheorem*{principle*}{Principle}

\newcommand{\W}{\mathrm{\W}}
\newcommand{\GHZ}{\mathrm{\GHZ}}

\makeatletter
\DeclareRobustCommand{\Tr}{\operatorname{\mathrm{Tr}}\@ifstar\@firstofone\@Pr}
\newcommand{\@Tr}[1]{\ensuremath{\left(#1\right)}}
\makeatother

\newcommand{\beq}{\begin{equation}}
\newcommand{\eeq}{\end{equation}}
\newcommand{\beqa}{\begin{eqnarray}}
\newcommand{\eeqa}{\end{eqnarray}}
\newcommand{\bra}[1]{\ensuremath{\left\langle#1\right|}}
\newcommand{\ket}[1]{\ensuremath{\left|#1\right\rangle}}

\renewcommand{\today}{\number\day\space\ifcase\month\or
   January\or February\or March\or April\or May\or June\or
   July\or August\or September\or October\or November\or December\fi
   \space\number\year}

\definecolor{SkyBlue}{RGB}{135,206,235}

\usepackage{cprotect} %

\newtheorem{thm}{Theorem}
\newtheorem{theorem}[thm]{Theorem}

\newtheorem{proposition}[thm]{Proposition}
\newtheorem{lemma}[thm]{Lemma}

\newtheorem{definition}[thm]{Definition}

\newtheorem{theoreminner}{Theorem}
\makeatletter
\newenvironment{theoremrevisited}[1]{%
  \renewcommand{\thetheoreminner}{\getrefnumber{#1}}%
  \renewcommand{\theHtheoreminner}{technical.\getrefnumber{#1}}%
  \begin{theoreminner}[technical version]%
  \def\@currentlabelname{\getrefnumber{#1} (technical version)}%
}{%
  \end{theoreminner}%
}
\makeatother

\makeatletter
\newenvironment{theoremnoise}[1]{%
  \renewcommand{\thetheoreminner}{\getrefnumber{#1}}%
  \renewcommand{\theHtheoreminner}{noise.\getrefnumber{#1}}%
  \begin{theoreminner}[noise-robust version]%
  \def\@currentlabelname{\getrefnumber{#1} (noise-robust version)}%
}{%
  \end{theoreminner}%
}
\makeatother

\makeatletter
\newenvironment{replemma}[1]{%
  \begingroup
  \def\thethm{\ref*{#1}}%
  \begin{lemma}
}{%
  \end{lemma}
  \endgroup
}
\makeatother

\newtheoremstyle{defblock}{0.7\topsep}{0pt}{}{}{}{: }{0pt plus 1pt minus 1pt}{\thmname{\bfseries{#1}}\thmnumber{\bfseries{#2}}\color{medblue}\bfseries\thmnote{#3}}
\theoremstyle{defblock}

\theoremstyle{remark}

\newcommand{\tagprop}[1]{\tag{\hyperref[#1]{P\ref{#1}}}}

\usepackage{microtype}
\microtypecontext{spacing=nonfrench}
\microtypesetup{
expansion={true,nocompatibility},
protrusion={true,nocompatibility},
activate={true,nocompatibility},
tracking=true,
kerning=true,
spacing={true}
}
\usepackage[all=normal,floats=tight,mathspacing=tight,wordspacing=tight,paragraphs=normal,tracking=tight,charwidths=tight,mathdisplays=normal,sections=normal,margins=normal]{savetrees}
\everypar=\expandafter{\the\everypar\loosness=-1 }
\usepackage[style=american,autopunct=true]{csquotes}

\usepackage[scr=boondoxupr]{mathalfa} %

\usepackage{hyperref}
\hypersetup{
linkcolor=carmine,
citecolor=googleblue,
urlcolor=panblue,
anchorcolor=OliveGreen}

\AtBeginDocument{%
    \newwrite\bibnotes
    \def\bibnotesext{Notes.bib}
    \immediate\openout\bibnotes=\jobname\bibnotesext
    \immediate\write\bibnotes{@CONTROL{REVTEX42Control}}
    \immediate\write\bibnotes{@CONTROL{%
    apsrev42Control,editor="0",pages="0",title="0",year="1"}}
     \if@filesw
     \immediate\write\@auxout{\string\citation{apsrev42Control}}%
    \fi
}

\usepackage{placeins}%

\begin{document}

\begin{abstract}

We introduce the coordination principle, which states that perfect coordination, in the form of agreement on a uniformly random output, among N parties is possible only if they share a common cause. This principle is purely causal and can be viewed as a multipartite generalization of Reichenbach’s 
common cause principle. We prove that quantum information theory satisfy the
coordination principle in any network, and derive noise-tolerant Bell-like inequalities 
that certify the presence of a common cause. We further show that the principle is not a consequence of 
no-signaling and independence alone by constructing a concrete operational probabilistic theory that 
obeys both principles while still allowing perfect coordination without a common cause. 
This possibility arises only in fully general causal scenarios with intermediate transformations between 
preparations and measurements. We also formulate a genuinely quantum coordination task, showing that 
the preparation of a multipartite GHZ state requires a quantum common cause, which can be certified by 
Bell-like inequalities which are experimentally testable. Finally, we discuss the open problem of finding a quantitative, noise-tolerant version of the coordination principle that constrains approximate coordination in any reasonable causal theory. This work is the extended version of the more compact letter Ref.~\cite{CompanionPaperPRL} and provides all the technical details of the proofs.

\end{abstract}

\title{A missing causal principle: Coordination}

\author{Daniel Centeno$^{\dagger,*}$}
\affiliation{Perimeter Institute for Theoretical Physics, Waterloo, Ontario, Canada.}
\affiliation{Department of Physics and Astronomy, University of Waterloo, Waterloo, Ontario, Canada, N2L 3G1}

\author{Antoine Coquet$^{\dagger,\star}$}
\affiliation{Inria, CPHT, LIX, CNRS, École polytechnique, Institut Polytechnique de Paris, Palaiseau, France}

\author{Maria~Ciudad~Alañón}
\affiliation{Perimeter Institute for Theoretical Physics, Waterloo, Ontario, Canada.}
\affiliation{Department of Physics and Astronomy, University of Waterloo, Waterloo, Ontario, Canada, N2L 3G1}

\author{Lucas Tendick}
\affiliation{Inria, CPHT, LIX, CNRS, École polytechnique, Institut Polytechnique de Paris, Palaiseau, France}

\author{Marc-Olivier Renou}
\affiliation{Inria, CPHT, LIX, CNRS, École polytechnique, Institut Polytechnique de Paris, Palaiseau, France}

\author{Elie Wolfe}
\affiliation{Perimeter Institute for Theoretical Physics, Waterloo, Ontario, Canada.}
\affiliation{Department of Physics and Astronomy, University of Waterloo, Waterloo, Ontario, Canada, N2L 3G1}

\begingroup
\renewcommand{\thefootnote}{$\dagger$}
\footnotetext[0]{These authors contributed specially significantly.}
\endgroup

\begingroup
\renewcommand{\thefootnote}{$*$}
\footnotetext[0]{dcentenodiaz@perimeterinstitute.ca}
\endgroup

\begingroup
\renewcommand{\thefootnote}{$\star$}
\footnotetext[0]{antoine.coquet@inria.fr}
\endgroup

\maketitle

\section{Introduction}
Studying correlations between different phenomena in nature lies at the heart of many scientific disciplines, and in particular of physics. A core aspect of this endeavor is to identify causal explanations for these correlations~\cite{Pearl_2009}. Bell’s seminal work~\cite{bell1964einstein} showed that classical causal models cannot reproduce all correlations predicted by quantum theory, as subsequently confirmed in experiments \cite{Shalm2015LoopholeFree,Giustina2015LoopholeFree,Hensen2015LoopholeFree,Rosenfeld2017LoopholeFree}. This insight has led to substantial progress, both at the foundational level \cite{Brunner2014BellReview, tavakoli2022bell, Masanes2006NoSignaling} and in the emergence of new applications~\cite{Supic2020selftestingof,Ekert1991Crypto,Acin2007DIQKD, Colbeck2009Randomness, Pironio2010Randomness}.

Although classical and quantum theories yield different predictions, both satisfy the principles of No-Signaling and Independence (NSI)~\cite{Coiteux2021any, Coiteux2021no, Henson2014, Wolfe2016inflation, Gisin2020NSI, Reichenbach,popescu1994quantum}. The No-Signaling principle expresses the impossibility of faster-than-light communication: operationally, a variable cannot be influenced by events outside its causal past. The Independence principle states that two variables may be correlated only when they have a common cause\footnote{Note that in this work we use the term ``cause'' to refer to direct and indirect causes. More formally, 
$X$ is a cause of a variable 
$Y$ if there exists a directed causal path from $X$ to $Y$ in the circuit under consideration. For example, in Fig.~\ref{2-layer-tetrahedron}, the \emph{causes} of $A_1$ are the transformations $\neg(A_2A_3)$, $\neg(A_2A_4)$, $\neg(A_3A_4)$ and the sources $\neg(A_2)$, $\neg(A_3)$ and $\neg(A_4)$.} in their causal past. The NSI principles are viewed as minimal constraints that any reasonable theory of information should satisfy. Most hypothetical post-quantum theories of information obey NSI, such as boxworld~\cite{Barrett2007BoxWorld,Janotta2012BoxWorld}. However, no correlations beyond those predicted by quantum information theory have been observed to date, which motivates the search for additional principles beyond NSI that further constrain the correlations realized in nature.

Several attempts have been made to identify principles that explain why post-quantum correlations are not found in experiments~\cite{POPESCU1992Generic, Pawowski2009Information, Fritz2013Local, Brassard2006Limit, Linden2007Computation,Gisin2020NSI}. These proposals take different forms, ranging from complexity-theoretic constraints to information-theoretic ones. Here, however, we do not aim to reconstruct the set of quantum correlations from such assumptions. Rather, since causality-based principles are arguably more fundamental and rely on fewer auxiliary assumptions, we ask whether there exists an additional causal principle, beyond the NSI principles, that must be satisfied by any reasonable theory of information.

A natural starting point is the contrapositive of the Independence principle, namely Reichenbach’s common cause principle~\cite{Reichenbach}, which states that if two variables are \emph{correlated}, then they must share a common cause. However, extending this idea to the multipartite setting is not straightforward. Even when $N$ variables are jointly correlated (i.e., their distribution does not factorize), Reichenbach-style reasoning enforces only pairwise common causes and does not, by itself, imply the existence of a single common cause shared by all $N$ variables. This limitation has been pointed out repeatedly in the literature~\cite{henson2005comparing,uffink1999principle,steudel2015information,Henson2014}, and motivates the search for genuinely multipartite causal principles.

\begin{figure*}[t]
\rule{\textwidth}{0.1pt}
     \centering
     \subfloat[label1][\linebreak Causally independent scenario\label{no_correlation}]{\begin{tikzpicture}
     \definecolor{lightblue}{RGB}{173,216,230}
     \definecolor{lightpink}{RGB}{255,182,193}
  \node[draw, rectangle, fill=lightblue, minimum size = 7mm] (A) {$A_1$};
  \node[draw, rectangle, fill=lightblue, minimum size = 7mm, right=0.8cm of A] (B) {$A_2$};
  \node[draw, circle, inner sep=0pt, below=1cm of A] (A') {$\neg(A_2)$};
  \node[draw, circle, inner sep=0pt,  below=1cm of B] (B') {$\neg(A_1)$};
  \draw[->, >=latex] (A') -- (A);
  \draw[->, >=latex] (B') -- (B);
     \end{tikzpicture}}
      \hspace{6cm}
     \subfloat[triangle][\linebreak Triangle scenario\label{triangle}]{\begin{tikzpicture}
         \definecolor{lightblue}{RGB}{173,216,230}
     \definecolor{lightpink}{RGB}{255,182,193}
  \node[draw, rectangle, fill=lightblue, minimum size = 7mm] (A) {$A_1$};
  \node[draw, rectangle, fill=lightblue, minimum size = 7mm, right=0.8cm of A] (B) {$A_2$};
  \node[draw, rectangle, fill=lightblue, minimum size = 7mm, right=0.8cm of B] (C) {$A_3$};
  \node[draw, circle, inner sep=0pt, below=1cm of A] (AB) {$\neg(A_3)$};
  \node[draw, circle, inner sep=0pt,  below=1cm of B] (AC) {$\neg(A_2)$};
  \node[draw, circle, inner sep=0pt,  below=1cm of C] (BC) {$\neg(A_1)$};
  \draw[->, >=latex] (AB) -- (A);
  \draw[->, >=latex] (AB) -- (B);
  \draw[->, >=latex] (AC) -- (A);
  \draw[->, >=latex] (AC) -- (C);
  \draw[->, >=latex] (BC) -- (C);
  \draw[->, >=latex] (BC) -- (B);
     \end{tikzpicture}}\\
     \subfloat [tetrahedron][\linebreak One-layer tetrahedron scenario\label{1-layer-tetrahedron}]{\begin{tikzpicture}[
  node distance=1cm and 1.5cm,
  every node/.style={draw, circle, minimum size=6mm, inner sep=0pt},
  level 1/.style={sibling distance=15mm},
  level 2/.style={sibling distance=10mm},
  arrow/.style={<-, >=latex}
]
  \definecolor{lightblue}{RGB}{173,216,230}
  \definecolor{lightpink}{RGB}{255,182,193}

  \node[draw, rectangle, fill=lightblue] (A) {$A_1$};
  \node[draw, rectangle, fill=lightblue, right=0.8cm of A] (B) {$A_2$};
  \node[draw, rectangle, fill=lightblue, right= 0.8cm of B] (C) {$A_3$};
  \node[draw, rectangle, fill=lightblue, right=0.8cm of C] (D) {$A_4$};

  \node[below=2.8cm of A] (ABC) {$\neg (A_4)$};
  \node[below=2.8cm of B] (ABD) {$\neg (A_3)$};
  \node[below=2.8cm of C] (ACD) {$\neg (A_2)$};
  \node[below=2.8cm of D] (BCD) {$\neg (A_1)$};

  \foreach \s/\d in {A/ABC, A/ACD, A/ABD, B/ABC, B/BCD, B/ABD, C/ABC, C/BCD, C/ACD, D/ABD, D/BCD, D/ACD} {\draw[arrow] (\s) -- (\d);}

\end{tikzpicture}}
     \qquad
     \subfloat [full-tetrahedron][\linebreak Two-layer tetrahedron scenario\label{2-layer-tetrahedron}]{\begin{tikzpicture}[
  node distance=1cm and 1.5cm,
  every node/.style={draw, circle, minimum size=6mm, inner sep=0pt},
  level 1/.style={sibling distance=15mm},
  level 2/.style={sibling distance=10mm},
  arrow/.style={<-, >=latex}
]
  \definecolor{lightblue}{RGB}{173,216,230}
  \definecolor{lightpink}{RGB}{255,182,193}

  \node[draw, rectangle, fill=lightblue] (A) {$A_1$};
  \node[draw, rectangle, fill=lightblue, right=0.8cm of A] (B) {$A_2$};
  \node[draw, rectangle, fill=lightblue, right= 0.8cm of B] (C) {$A_3$};
  \node[draw, rectangle, fill=lightblue, right=0.8cm of C] (D) {$A_4$};

  \node[below left=0.9cm and 1.55cm of A] (AB) {$\neg (A_3A_4)$};
  \node[right=0.5cm of AB] (AC) {$\neg (A_2A_4)$};
  \node[right=0.5cm of AC] (AD) {$\neg (A_2A_3)$};
  \node[right=0.5cm of AD] (BC) {$\neg (A_1A_4)$};
  \node[right=0.5cm of BC] (BD) {$\neg (A_1A_3)$};
  \node[right=0.5cm of BD] (CD) {$\neg (A_1A_2)$};

  \node[below=2.8cm of A] (ABC) {$\neg (A_4)$};
  \node[below=2.8cm of B] (ABD) {$\neg (A_3)$};
  \node[below=2.8cm of C] (ACD) {$\neg (A_2)$};
  \node[below=2.8cm of D] (BCD) {$\neg (A_1)$};

  \foreach \s/\d in {A/AB, A/AC, A/AD, B/AB, B/BC, B/BD, C/AC, C/BC, C/CD, D/AD, D/BD, D/CD} {\draw[arrow] (\s) -- (\d);}

  \foreach \s/\d in {AB/ABC, AB/ABD, AC/ABC, AC/ACD, AD/ABD, AD/ACD, BC/ABC, BC/BCD, BD/ABD, BD/BCD, CD/ACD, CD/BCD} {\draw[arrow] (\s) -- (\d);}

\end{tikzpicture}}\\

        \caption{Various causal structures represented as directed acyclic graphs (DAGs). Squares denote observed nodes and circles denote latent nodes. Latent nodes represent either sources of information (when parentless) or transformations of information (when they have parents), while observed nodes correspond to measurements. The label notation of the nodes is chosen to make causal relationships explicit. Concretely, we label latent nodes by the complement of the set of observed nodes in their future, denoted $\neg(\cdot)$, whereas the observed nodes are simply denoted by $A_1,...,A_N$.  Reichenbach’s common cause principle rules out perfect coordination (indeed, any non-factorizing distribution) in (a), where $A_1$ and $A_2$ are causally independent and share no common cause. In contrast, in more complex networks without intermediate transformations, such as (b) the triangle and (c) the one-layer tetrahedron, this simple reasoning does not directly exclude perfect coordination; instead, Refs.~\cite{Henson2014,Coiteux2021any} showed that perfect coordination is impossible there under No-Signaling and Independence together with device replication. Crucially, these results do not cover scenarios with intermediate transformations. Here we show that NSI (plus device replication) do not exclude perfect coordination in (d), the two-layer tetrahedron, and we also show that, when one imposes quantum theory, perfect coordination is impossible even in the presence of intermediate transformations. 
        Table~\ref{Table d} summarizes the present paragraph.}
        \label{different-scenarios}
        \rule{\textwidth}{0.1pt}
\end{figure*}

In this work, we introduce a causality principle based on the notion of \emph{coordination} among $N$ parties, that is, the ability of several parties to generate and agree on the same randomly chosen output. We call it the \emph{Coordination Principle}.

\begin{principle*}[Coordination]
$N$ perfectly coordinated variables must share a common cause.
\end{principle*}

This principle has been proved to be respected in classical theories~\cite{steudel2015information}\footnote{This principle was introduced under the name of the \emph{extended common cause principle} for classical theories in Ref.~\cite{steudel2015information}.}. A natural next step is to ask whether it also holds in quantum theory and, more generally, whether it can be derived from the standard causal principles of No-Signaling and Independence.

Previous works concluded that for \emph{some} causal structures the Coordination Principle indeed follows from NSI when one assumes the possibility of device replication. Concretely, Ref.~\cite{Henson2014} showed that in the triangle scenario of Fig.~\ref{triangle}, perfect coordination among the three players is impossible even if the sources distribute post-quantum resources (for instance, PR boxes~\cite{popescu1994quantum}). Ref.~\cite{Coiteux2021any} generalized this result to causal structures with an arbitrary number of parties that share no common cause, including the four party case illustrated in Fig.~\ref{1-layer-tetrahedron}. However, these results were restricted to causal structures without \emph{intermediate transformations}. Such transformations have since been recognized as essential for a fully general description of nonclassical causal models~\cite{centeno2024significance}. In this paper, we revisit the problem while explicitly allowing intermediate transformations. 

Our first result concerns quantum theory: using the quantum inflation technique~\cite{wolfe2021quantum}, we prove that the Coordination Principle still holds in these fully general causal structures. In other words, we show that in the most general causal scenario with an arbitrary number of parties that do not share a common cause, perfect coordination is impossible within quantum theory. Notably, our proof yields a simple, experimentally friendly Bell-type inequality that provides a noise-robust certificate of our
findings.

Then, our second result shows that, once intermediate transformations are admitted, the Coordination Principle is \emph{not} a direct consequence of NSI. We establish this fact by constructing an explicit operational probabilistic theory (OPT) in which No-Signaling and Independence hold, yet perfect coordination is possible without a common cause, thereby violating the Coordination Principle. More precisely, our construction allows for perfect coordination on the causal structure of Fig.~\ref{2-layer-tetrahedron}, which serves as an explicit counterexample since there is no common cause shared by the four parties.

We then turn to a genuinely quantum analogue of perfect coordination, and ask whether a multipartite GHZ state can be created without a \emph{quantum} common cause. We show that this is impossible, even if the parties are granted \emph{classical} shared randomness. To prove this, we derive a family of Bell-like inequalities satisfied by any quantum correlations that can be generated in scenarios where: at most $(N-1)$ parties share a quantum common cause, arbitrary intermediate transformations are allowed and shared randomness is distributed among all $N$ parties. Crucially, for every $N$, these inequalities are violated by suitable local measurements on the GHZ state.

We conclude by discussing the implications of our findings. In particular, we argue that a quantitative version of the Coordination Principle is yet to be found. The inequalities we derive provide only a first step toward the strongest operational constraints implied by the Coordination Principle within quantum information theory. However, we point out the lack of a causal principle that not only forbids \emph{perfect} coordination without a common cause, but also quantitatively bounds how well parties lacking a common cause can coordinate in any reasonable causal theory of information.
 
The present article expands on a more compact companion letter on the same topic~\cite{CompanionPaperPRL} and provides the complete technical proofs. The structure of this manuscript is the following. In Section~\ref{sec:perfect_correlation}, we show that the Coordination principle is satisfied by quantum information theory. Then, in Section~\ref{sec:opt}, we give a brief introduction to the formalism of OPTs and provide the particular OPT that satisfies NSI while violating the Coordination principle. In Section~\ref{fullyquantum_section}, we turn to the genuinely quantum generalization of the coordination task and provide a noise-robust proof to show that the generation of the GHZ state requires a quantum common cause. We conclude in Section~\ref{discussion} with a discussion of our results and potential future directions.

\begin{table*}
{
\begin{tabular}{|c|c|c|c|c|}
\hline
\textbf{Perfect coordination?}                             & Causally Independent & Triangle & One-layer tetrahedron & Two-layer tetrahedron \\ \hline
Reichenbach principle                                     & \xmark     \cite{Reichenbach}                & \cmark & \cmark & \cmark              \\ \hline
\begin{tabular}[c]{@{}c@{}} NSI and Device\\ Replication principles\end{tabular}
 
& \xmark \cite{Reichenbach}                    & \xmark   \cite{Henson2014} & \xmark   \cite{Coiteux2021any}    & \cmark     [See Section~\ref{sec:opt}]         \\ \hline
Quantum Theory                                            & \xmark  \cite{Reichenbach}                   & \xmark    \cite{Henson2014}  & \xmark   \cite{Coiteux2021any}    & \xmark [See Section~\ref{sec:perfect_correlation}]            \\ \hline
\end{tabular}}
\caption{Summary of whether perfect coordination is achievable in various causal scenarios according to different principles or theories.}
\label{Table d}
\end{table*}

\section{Perfect coordination requires a common cause in quantum theory}
\label{sec:perfect_correlation}

In this section, we show that perfect coordination requires a common cause in quantum theory or, equivalently, we prove the following theorem:

\begin{theorem}
Quantum theory satisfies the Coordination Principle.
\label{theoremPRL}
\end{theorem}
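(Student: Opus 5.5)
We will use quantum inflation. First, a reduction: it suffices to treat the \emph{maximal} network with no $N$-party common cause. In this network, for every subset $S\subsetneq\{A_1,\dots,A_N\}$ there is a latent node $\neg(\bar S)$ feeding exactly the parties in $S$. Sources and intermediate transformations (isometries/channels) are arranged in layers, as in the two-layer tetrahedron of Fig.~\ref{2-layer-tetrahedron}. Any network without a common cause to all $N$ parties embeds into this one, because each latent node's set of descendants misses at least one party. Hence every quantum model on such a network can be simulated here.

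We argue by contradiction. Suppose the outputs $a_1=\dots=a_N$ are perfectly coordinated and uniform on $d\ge2$ values. Let $\mathcal{N}_i$ denote the set of latent nodes that are ancestors of $A_i$. Since no node is an ancestor of all parties, every source and every transformation lies outside $\mathcal{N}_i$ for some $i$.

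The inflation step, in its simplest form, works as follows. We take two copies of the whole network. We then build an inflated network in which party $A_1$ takes all its inputs (through the full chain of transformations) from copy 1, party $A_2$ from copy 2, and so on. The aim is a mixed configuration in which some parties are causally disconnected from others.

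Concretely, we define a sequence of inflated networks $G_0,G_1,\dots,G_N$. In $G_k$, parties $A_1,\dots,A_k$ use copy-2 versions of every latent node not shared with $A_{k+1},\dots,A_N$. Each copy of a source, and each copy of a transformation applied to copies of its inputs, is a valid quantum process, so each $G_k$ defines a legitimate quantum distribution. The key properties are:
\begin{enumerate}
\item \emph{Local equivalence.} Any subset of parties whose causal past in $G_k$ is isomorphic to its past in the original network reproduces the original marginal. This holds by no-signaling and by the fact that unused copies of a transformation can be traced out, since channels are trace preserving.
\item \emph{Propagation of agreement.} Perfect coordination then propagates. In each $G_k$, every $(N-1)$-subset, or every pair $A_i,A_{i+1}$ linked by an original-type past, must still agree with probability one.
\end{enumerate}
Chaining these equalities around the parties forces agreement between two parties in a final network $G_N$ in which they are causally independent, since they use disjoint copies of everything. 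Independence then gives $P(a=a')=1/d<1$, which is a contradiction. Tracking the probability of disagreement with union bounds along the chain, instead of assuming it is zero, should yield the noise-robust Bell-type inequality mentioned in the introduction. A candidate form is $\sum_{i}P(a_i\neq a_{i+1})\ge c\,(1-1/d)$.

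The main obstacle is the presence of intermediate transformations. A transformation $\neg(A_1A_2)$ receives quantum systems from several sources, and when we rewire copies we must make sure that each inflated transformation acts on a well-defined joint state. This means a copy of a transformation may only take inputs that all come from the same copy, or from copies whose joint state is a product identical to the original. Because quantum states cannot be broadcast, we cannot let two parties share a single copy of a latent node in a rewired way.

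To handle this, we will choose the inflation so that each latent node is copied according to which \emph{set of parties} it serves. We will then check, layer by layer, that the reduced state on every transformation's inputs equals the original tensor product of source states; this holds because distinct sources are independent. If this bookkeeping fails for the full maximal network, the fallback is induction on the number of layers. In that approach, each transformation layer together with its parent sources is absorbed into an effective source, which in general leaves the effective sources correlated. For the two-layer case we would explicitly verify the copy assignment on the tetrahedron of Fig.~\ref{2-layer-tetrahedron} before generalizing.
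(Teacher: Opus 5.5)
\textbf{There is a genuine gap, and it is exactly the bookkeeping you flagged.} Every ingredient you use also holds in any theory obeying NSI with device replication:
\begin{itemize}
\item physically realizable wirings of copies of sources and transformations;
\item marginals fixed by the causal past (no-signaling plus trace preservation);
\item factorization for parties with disjoint pasts, and no-broadcasting;
\item transitivity of probability-one agreement.
\end{itemize}
Section~\ref{sec:opt} builds an OPT $\mathcal{T}$ with all of these features in which the four parties of the two-layer tetrahedron \emph{are} perfectly coordinated. So no argument of this form can succeed.

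Concretely, the wiring argument behind Propositions~\ref{singletons_proposition} and~\ref{prop:closed_circuits} shows the following. Without broadcasting, any realizable wiring in which each adjacent pair $(A_i,A_{i+1})$ keeps its original joint past keeps the original joint past for the whole chain. Then $A_1$ and $A_N$ share sources and cannot be independent. For $N=4$, the assignment you need forces $\neg(A_4)^{(1)}$ to feed both $\neg(A_2A_4)^{(1,1)}$ (toward $A_1$) and $\neg(A_2A_4)^{(1,2)}$ (toward $A_3$); this is the black dot in Fig.~\ref{fig:inflation-tetrahedron}.

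Spreading the chain over several legitimate networks $G_k$ does not help. Probability-one agreement is transitive only inside a single joint distribution, and the consistency of $\mathcal{T}$ shows that cross-network transfers of agreement never close the chain. Your fallback of absorbing layers into effective sources produces correlated effective sources. That reintroduces a common cause, so the independent-source no-go results no longer apply.

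\textbf{The missing idea is a genuinely quantum step that lets you chain without a joint distribution.} The paper places all source copies in a single state $|\psi\rangle$: two copies each of $\neg(A_2),\dots,\neg(A_{N-1})$, and one each of $\neg(A_1)$ and $\neg(A_N)$. It then defines Heisenberg-picture projectors $\widetilde{\Pi}^{a}_{A_i}$ using the staircase copy indices of Table~\ref{tab:quantum_cut}, and accepts that the inflation as a whole is not realizable. Three facts follow:
\begin{itemize}
\item adjacent pairs carry identical indices on their common sources, so they commute and reproduce the shared random bit;
\item $A_1$ and $A_N$ use disjoint copies, so they commute and their statistics factorize;
\item other pairs need not commute.
\end{itemize}
Lemma~\ref{Marcoslemma1} turns each commuting, perfectly correlated pair into the vector identity $\widetilde{\Pi}^{0}_{A_i}|\psi\rangle=\widetilde{\Pi}^{0}_{A_{i+1}}|\psi\rangle$. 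Transitivity of equality of vectors then gives $\widetilde{\Pi}^{0}_{A_1}|\psi\rangle=\widetilde{\Pi}^{0}_{A_N}|\psi\rangle$. Hence $\langle\psi|\widetilde{\Pi}^{0}_{A_1}\widetilde{\Pi}^{0}_{A_N}|\psi\rangle=1/2$, contradicting the factorized value $1/4$. This Hilbert-space chaining is precisely what $\mathcal{T}$ lacks.

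For the same reason, your union-bound route to a noise-robust inequality is not justified, since there is no joint distribution along the chain. The paper instead uses a level-1 NPA witness, which gives the $\cos(\pi/(2(N-1)))$ bound. Your reduction also omits direct edges between parties, which Proposition~\ref{terminalN} handles, but that is secondary.
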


Throughout this section, \emph{perfect coordination} refers to the task of generating a shared random bit among the $N$ parties, i.e., the probability distribution
\begin{equation}
    P_{A_1,\dots,A_N}=\frac{1}{2}\big([0\cdots 0]+[1\cdots 1]\big),
    \label{randbit}
\end{equation}
where the bracket notation $[a_1\cdots a_N]$ denotes the event in which party $i$ outputs $a_i$. %

To make the proof of Theorem~\ref{theoremPRL} more accesible, we first treat the case of four parties ($N=4$) in Sec.~\ref{4partitecorrelations}, and then generalize it to the case of an arbitrary number of parties in Sec.~\ref{Npartitecorrelation}. In both cases, we proceed in two steps: first, we provide an analytic proof ruling out \emph{perfect} coordination unless there is a common cause; second, we derive a Bell-like inequality that is robust to noise and can be used as an operational witness.

Before presenting the proofs, we briefly summarize the relevant prior results (see Table~\ref{Table d}). In the trivial scenario of Fig.~\ref{no_correlation}, Reichenbach's principle~\cite{Reichenbach} (or its contrapositive form, the Independence principle) already rules out perfect coordination (actually, it rules out any non-factorizing distribution). Beyond this case, Refs.~\cite{Henson2014,Coiteux2021any} showed, under the NSI principles together with device replication, that perfect coordination is impossible in a broad class of networks in which the observed parties do not share a common cause (e.g., Figs.~\ref{triangle} and \ref{1-layer-tetrahedron} for the cases of $N=3$ and $N=4$, respectively). 

Crucially, however, these results are restricted to causal structures without \emph{intermediate transformations}. It has recently been shown that, for causal models involving nonclassical nodes, allowing intermediate transformations can affect the set of achievable correlations~\cite{centeno2024significance}. Consequently, the above no-go results do not cover the most general causal scenarios when considering nonclassical latent nodes. Here, we close this gap by explicitly allowing intermediate transformations and by analyzing the most general causal structures in which not all parties share a common cause. For $N=4$, the corresponding general causal structure is depicted in Fig.~\ref{2-layer-tetrahedron} and we will refer to it as the \emph{two-layer-tetrahedron} (later, we show that it is indeed the most general such structure).

The notation and terminology that we use throughout the section is that of causal inference. That is, we use directed acyclic graphs (DAGs) to represent causal structures. Within the DAG formalism, there are \emph{observed nodes} that represent the measurements of the parties and \emph{latent nodes} that represent the sources and transformations (namely, quantum states and channels, respectively, in the case of quantum theory).

\subsection{Four-partite perfect coordination requires a common cause in quantum theory}
\label{4partitecorrelations}

Here, we demonstrate that perfect coordination is not achievable in any quantum causal structure with four observed nodes that do not share a common cause. The proof strategy is to first show that there exists a DAG with those features which is the most general one in terms of producible probability distributions and then to prove that perfect coordination is incompatible with it. Thus, let us begin by formally defining the set of DAGs under consideration.

\begin{definition}
We define $\mathcal{G}_4$ as the set of all DAGs with four observed nodes and only classical and/or quantum latent nodes such that the observed nodes do not all share a common cause.
\label{def:setG4}
\end{definition} 

Now, our next step is to show that there exists a particular DAG in $\mathcal{G}_4$ (concretely, the \hyperref[2-layer-tetrahedron]{two-layer tetrahedron}) that can produce any probability distribution which can be obtained in some DAG of $\mathcal{G}_4$. Formally, this is stated as the following lemma:

\begin{lemma}
\label{lemma:multilayerG4}
Any DAG $\textit{g} \in \mathcal{G}_4$ is observationally contained in the \hyperref[2-layer-tetrahedron]{two-layer tetrahedron}.
\end{lemma}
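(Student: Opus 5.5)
The plan is to show that every latent node of an arbitrary $\textit{g}\in\mathcal{G}_4$ can be absorbed into one of the latent nodes of the two-layer tetrahedron. The absorbing node is chosen according to which observed nodes lie in the latent node's future. The invariant used throughout is the one encoded by the $\neg(\cdot)$ labelling of Fig.~\ref{2-layer-tetrahedron}.

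Since the four observed nodes share no common cause, every latent node $L$ of $\textit{g}$ has a set $D(L)$ of observed descendants of size at most three. We classify latent nodes by $D(L)$, and all simulations run in quantum theory, with classical nodes treated as special cases of quantum ones.

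\textbf{Sources.} For each parentless latent node $L$ we have $D(L)\subseteq\{A_1,\dots,A_4\}\setminus\{A_i\}$ for some $i$, so we assign $L$ to the source $\neg(A_i)$. If several choices of $i$ are possible, we pick one arbitrarily. The source $\neg(A_i)$ then prepares the tensor product of all states assigned to it.

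\textbf{Intermediate transformations.} The key structural fact is monotonicity: if $L\to L'$ then $D(L')\subseteq D(L)$. For each non-source latent node with $|D(L)|\le 2$ we assign it to a second-layer node $\neg(A_iA_j)$ whose complement contains $D(L)$. For each one with $|D(L)|=3$ we would like to push it down to the source $\neg(A_i)$. The difficulty is that such a node may have parents from different sources.

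We handle this by processing $\textit{g}$ in topological order and grouping nodes by their descendant sets, using three steps:

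\begin{enumerate}
\item Every ancestor of a node with $D=\{A_j,A_k,A_l\}$ has descendant set containing that triple. The ancestor's descendant set therefore equals the triple, since no latent node reaches all four observed nodes. So every node with a triple descendant set has only ancestors labelled by the same triple, and the whole ancestral subcircuit can be composed into a single state prepared by $\neg(A_i)$.
\item A node whose $D$ has size at most two receives inputs from nodes whose $D$ contains its own. It is placed at a second-layer node $\neg(A_iA_j)$ with $\{A_i,A_j\}$ disjoint from its $D$. Every triple containing its $D$ corresponds to a source $\neg(A_m)$ with $m\in\{i,j\}$, and each such source feeds $\neg(A_iA_j)$ in the two-layer tetrahedron. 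Parents that are themselves doubletons or singletons are handled by choosing one consistent assignment per descendant class and composing along the chain. Nodes sharing the same pair are merged into one transformation.
\item Each observed node $A_k$ in $\textit{g}$ has latent parents whose $D$ contains $A_k$. All of these are assigned to some $\neg(\cdot)$ node connected to $A_k$ in the two-layer tetrahedron. The measurement of $A_k$ is therefore simulated by applying the original measurement to the relevant subsystems, discarding the rest.
\end{enumerate}

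Observed-to-observed edges, if allowed, would be handled by noting they create common causes. Here, though, we assume the observed nodes are sinks (measurements), as in Definition~\ref{def:setG4}.

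\textbf{Main obstacle.} The step requiring the most care is the consistency of the assignment for doubleton and singleton transformation nodes. A node with $D=\{A_k\}$ may have parents assigned to different pair-nodes $\neg(A_iA_j)$, and these cannot be merged into one second-layer transformation. First, I would avoid placing such nodes in the second layer at all. Instead, they would be absorbed into the measurement of $A_k$ itself, since all their inputs are systems ultimately routed to $A_k$. The same trick works for doubletons $\{A_k,A_l\}$ whose parents have distinct pair-labels. In that case we pass every parent system forward through the layers, using identity channels and copying classical data, until it reaches a layer where the merge is legal. The two-layer tetrahedron has exactly the depth needed: sources indexed by missing singletons, then transformations indexed by missing pairs, then measurements. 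Every such chain fits into that depth.

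Verifying that identity forwarding preserves the joint state is routine. It gives a circuit on the two-layer tetrahedron reproducing $P_{\textit{g}}$, which is observational containment.
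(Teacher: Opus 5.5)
Your core construction matches the paper's genealogical-tree argument: label each latent node by its observed descendants, send the ancestrally closed triple-labelled nodes to the source $\neg(A_i)$, send doubleton-labelled nodes to the pair node $\neg(A_iA_j)$ with $\{A_i,A_j\}$ the complementary pair, absorb singleton-labelled nodes into the measurement, and forward systems by identity channels when an edge skips a layer. The paper does the same by merging nodes with equal labels and then adding the missing nodes and edges. The ``doubletons whose parents have distinct pair-labels'' case you worry about cannot occur, because every parent of a node labelled $\{A_k,A_l\}$ has a label containing $\{A_k,A_l\}$.

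\textbf{The gap.} You only treat DAGs whose observed nodes are sinks, and you attribute this restriction to Definition~\ref{def:setG4}. That definition imposes no such condition; it only forbids a common cause of all four observed nodes. So DAGs containing, e.g., $A_1\to A_2$, or $A_1\to L\to A_2$ with $L$ latent, belong to $\mathcal{G}_4$. The paper handles them with a separate reduction, Proposition~\ref{terminalN}, which shows every DAG in $\mathcal{G}_N$ is observationally contained in one whose observed nodes are all terminal. Your remark that such edges ``create common causes'' does not replace this step.

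\textbf{Why it matters.} In the two-layer tetrahedron no party's outcome can reach another party, so the measurement of a non-terminal $A_k$ has to be relocated. Your labelling also fails as stated. If $L\to A_1\to A_2$ and descendant sets are computed only through latent nodes, $L$ is labelled $\{A_1\}$ and absorbed into $A_1$'s measurement, so its influence on $A_2$ is lost.

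\textbf{What is missing.} Compute descendant sets along all directed paths, and perform the measurement of $A_k$ inside a latent node that is a common ancestor of $A_k$ and all its observed descendants, copying the classical outcome downstream. Set $S_k=\{A_k\}\cup\mathrm{Desc}(A_k)$. Every latent parent of $A_k$ then has a label containing $S_k$. So the measurement can be done at $\neg(A_iA_j)$ with $\{A_i,A_j\}$ the complement of $S_k$ when $|S_k|=2$, or at the source $\neg(A_i)$ with $A_i\notin S_k$ when $|S_k|=3$. Chains such as $A_1\to A_2\to A_3$ are processed in causal order, as the paper does.

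The no-common-cause hypothesis must also be read as excluding $|S_k|=4$. Otherwise a parentless $A_1$ with edges to $A_2,A_3,A_4$ would trivially achieve perfect coordination. With this reduction added, your argument covers all of $\mathcal{G}_4$; without it, it proves the lemma only for the subclass with terminal observed nodes.
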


The proof can be found in the Appendix~\ref{app:proof1}. Note that another proof has recently been given using the formalism of lattice theory in Ref.~\cite{van2025order}, but we provide the proof for completeness. Equivalently, Lemma~\ref{lemma:multilayerG4} ensures that if a given probability distribution is infeasible in the \hyperref[2-layer-tetrahedron]{two-layer tetrahedron}, it will be so in any DAG $\textit{g} \in \mathcal{G}_4$. %

We now state a lemma that will serve as an intermediate step in the proof of the main result.

\begin{lemma}
\label{Marcoslemma1}
    Let $|\psi\rangle$ be a pure state and $P$, $Q$ projectors such that
    \begin{itemize}
        \item $\left[ P, Q\right] |\psi\rangle = 0$
        \item $||PQ|\psi\rangle||^2=||P|\psi\rangle||^2=||Q|\psi\rangle||^2$
    \end{itemize}
    Then $P|\psi\rangle = Q|\psi\rangle$.
\end{lemma}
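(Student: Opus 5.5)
The plan is to show that $\|P|\psi\rangle - Q|\psi\rangle\|^2 = 0$ by expanding it with the two hypotheses. Writing $p = \|P|\psi\rangle\|^2 = \langle\psi|P|\psi\rangle$ and $q = \|Q|\psi\rangle\|^2$ (using $P^\dagger P = P$, and likewise for $Q$), we have
\begin{equation*}
\|(P-Q)|\psi\rangle\|^2 = p + q - 2\,\mathrm{Re}\,\langle\psi|PQ|\psi\rangle ,
\end{equation*}
so it suffices to show that $\langle\psi|PQ|\psi\rangle$ is real and equal to $p$.

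The first step is to compute $\|PQ|\psi\rangle\|^2 = \langle\psi|QPPQ|\psi\rangle = \langle\psi|QPQ|\psi\rangle$. Then I use the commutation hypothesis $PQ|\psi\rangle = QP|\psi\rangle$ to rewrite this as $\langle\psi|Q\,QP|\psi\rangle = \langle\psi|QP|\psi\rangle$. Hence $\langle\psi|QP|\psi\rangle = \|PQ|\psi\rangle\|^2$, which is a nonnegative real number and, by the second hypothesis, equals $p$. Its complex conjugate is $\langle\psi|PQ|\psi\rangle$, so this quantity is also real and equal to $p$.

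Substituting back gives $\|(P-Q)|\psi\rangle\|^2 = p + q - 2p = q - p = 0$, since $p = q$ by hypothesis. Therefore $P|\psi\rangle = Q|\psi\rangle$.

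The only delicate point is that the commutator is assumed to vanish only on $|\psi\rangle$, not as an operator identity. So every rewriting must apply $[P,Q]=0$ directly to the ket $|\psi\rangle$, as in $QPQ|\psi\rangle = Q(QP|\psi\rangle)$, and never move operators past each other elsewhere. Otherwise the computation is routine.
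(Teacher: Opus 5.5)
Your proof is correct and follows the same route as the paper: expand $\|(P-Q)|\psi\rangle\|^2=\langle\psi|(P-Q)^2|\psi\rangle$, apply the commutation hypothesis only on the ket, and use the norm equalities to show this quantity vanishes. You also spell out the step $\langle\psi|PQ|\psi\rangle=\|PQ|\psi\rangle\|^2$ (via $QPQ|\psi\rangle=QP|\psi\rangle$), which the paper uses implicitly when it invokes the second assumption. And you conclude from $\|(P-Q)|\psi\rangle\|=0$ directly, whereas the paper's last line writes $\langle\psi|(P-Q)|\psi\rangle=0$, which is presumably a slip for the squared norm.
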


\begin{proof}
Given the assumptions of the lemma, we observe that
    \begin{equation}
    \begin{split}
        \left\langle \psi \left|(P-Q)^2\right|\psi\right\rangle 
        = \left\langle \psi \left| P^2 -2PQ + Q^2 \right|\psi\right\rangle\\
        = \left\langle \psi \left| P \right|\psi\right\rangle  + \left\langle \psi \left| Q \right|\psi\right\rangle -2\left\langle \psi \left| PQ \right|\psi\right\rangle = 0,
    \end{split}
    \end{equation}
    where we have used  the commutativity of $P$ and $Q$ in the first step, the idempotence of projectors in the second, and the second assumption in the last step. Then, it must follow that $\left\langle \psi \left|(P-Q)\right|\psi\right\rangle = 0$ and therefore $P|\psi\rangle = Q|\psi\rangle$.
\end{proof}

After these preliminary steps, we now state and prove our first result for the case of four parties. That is, the four-partite version of Theorem~\ref{theoremPRL}:

\begin{theorem} Consider a DAG $g \in \mathcal{G}_4$, i.e., $g$ has four observed nodes that do not share a common cause and only classical and quantum latent nodes are allowed. Then, perfect coordination over the four observed nodes is not achievable.
\label{theorem:no-perfect-correlation4}
\end{theorem}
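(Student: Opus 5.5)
By Lemma~\ref{lemma:multilayerG4}, any distribution achievable in some $g\in\mathcal{G}_4$ is achievable in the two-layer tetrahedron, so I only need to rule out the shared random bit~\eqref{randbit} there. I will first put a putative quantum model into a canonical form. By purification and Stinespring dilation, every source $\neg(A_i)$ becomes a pure state, every intermediate transformation $\neg(A_iA_j)$ becomes an isometry, and by Naimark dilation every party $A_i$ performs a projective measurement $\{\Pi^{i}_{0},\Pi^{i}_{1}\}$ on the output systems it receives. The whole model then becomes a global pure state $|\psi\rangle$ (sources followed by isometries) with
\[
P(a_1,\dots,a_4)=\langle\psi|\Pi^1_{a_1}\Pi^2_{a_2}\Pi^3_{a_3}\Pi^4_{a_4}|\psi\rangle ,
\]
where distinct parties' projectors act on disjoint tensor factors and therefore commute. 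Perfect coordination gives $\|\Pi^i_0\Pi^j_0|\psi\rangle\|^2=\|\Pi^i_0|\psi\rangle\|^2=\|\Pi^j_0|\psi\rangle\|^2=\tfrac12$. Lemma~\ref{Marcoslemma1} then yields the vector identity $\Pi^i_0|\psi\rangle=\Pi^j_0|\psi\rangle$, i.e.\ the parties' outcomes coincide at the level of the state.

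The core step is a quantum inflation argument in the spirit of Refs.~\cite{Henson2014,Coiteux2021any,wolfe2021quantum}, now including the transformation layer. I will build an inflated network by taking several independent copies of each source $\neg(A_i)$. Copies of each transformation $\neg(A_iA_j)$ are wired to chosen copies of its two parent sources; this is possible because quantum states and channels can be freely duplicated as independent preparations. Copies of the parties $A_i^{(k)}$ are wired to copies of their three parent transformations, and every copy uses the same isometries and projectors as the original model.

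The wiring is to be chosen so that two things hold. First, for a chain of copies $A^{(k_1)}_{i_1},A^{(k_2)}_{i_2},\dots,A^{(k_m)}_{i_m}$, each consecutive pair has an ancestral subgraph isomorphic to that of $(A_{i},A_{j})$ in the original network (an injectable pair). On such a pair the reduced state and projectors coincide with the original ones, so the argument above transfers and gives $\Pi_0^{(k_\ell)}|\Psi\rangle=\Pi_0^{(k_{\ell+1})}|\Psi\rangle$ on the inflated pure state $|\Psi\rangle$. Second, the endpoints $A^{(k_1)}_{i_1}$ and $A^{(k_m)}_{i_m}$ share no ancestor in the inflation. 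Using the vector identities transitively gives $\Pi_0^{(k_1)}|\Psi\rangle=\Pi_0^{(k_m)}|\Psi\rangle$, hence $P(a^{(k_1)}=0,a^{(k_m)}=0)=\tfrac12$. Independence of the endpoints, however, forces this probability to be $\tfrac12\cdot\tfrac12=\tfrac14$, which is a contradiction.

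The main obstacle is the explicit design of this inflation. Because each party draws on three transformations, and each transformation draws on two sources covering all triples, every pair of original parties shares a common cause. The chain must therefore be built so that common ancestry is broken gradually while each link still sees a faithful copy of its entire ancestral subgraph, including the shared transformation $\neg(A_kA_l)$ and both of its source inputs. I would first try a ``ring'' inflation with two copies of every source, assigning copy labels to parties via a parity rule on source indices, and check injectability pair by pair. If a single chain fails, I would instead use the vector identities obtained from Lemma~\ref{Marcoslemma1} on overlapping pairs that live in different inflations sharing a common sub-network. Finally, I will remark that relaxing equalities to inequalities in the same chain should yield the noise-robust Bell-like witness mentioned in the text.
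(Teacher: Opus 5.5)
Your reduction to the two-layer tetrahedron, the pure/projective canonical form, and the endgame (Lemma~\ref{Marcoslemma1} on adjacent pairs, transitivity, factorization for the endpoints) all match the paper. The gap is that the inflation you plan to build cannot exist.

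\textbf{Why a physical inflation fails.} In a physically realizable inflation, no-broadcasting makes the wiring rigid: a copy of $\neg(A_l)$ has only one output system per transformation type it feeds. Here ``physically realizable'' means independent source copies, each transformation copy fed by copies of its parents, and one global $|\Psi\rangle$ on which all projectors commute. Suppose $(A_1^{(a)},A_2^{(b)})$ and $(A_2^{(b)},A_3^{(c)})$ are injectable, and let $S_4$ be the copy of $\neg(A_4)$ feeding the shared $\neg(A_3A_4)$ and $\neg(A_1A_4)$ nodes. Then $A_1^{(a)}$ and $A_3^{(c)}$ both need $S_4$ as a parent of their respective $\neg(A_2A_4)$ nodes. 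Since $S_4$ has a single $\neg(A_2A_4)$ output, that node is common to both, and so is its $\neg(A_2)$ parent. A chain also cannot return to a type with a fresh copy, because a shared transformation has only one output per party type. Iterating, every chain of injectable pairs collapses into a single copy of the original tetrahedron, so its endpoints always share a source. This is what Propositions~\ref{singletons_proposition} and~\ref{prop:closed_circuits} establish for circuits of $\mathcal{T}$, whose wiring rules are exactly those of a physical inflation.

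No cleverer ring or parity assignment can fix this. In a physical inflation all parties have a joint distribution, so your contradiction would follow from no-signaling, independence and device replication alone. It would therefore also rule out the OPT $\mathcal{T}$ of Section~\ref{sec:opt}, which achieves perfect coordination on this very DAG.

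\textbf{The missing idea.} You need to give up physical realizability of the inflation as a whole, which only the quantum formalism permits. The paper takes one copy of $\neg(A_1),\neg(A_4)$ and two of $\neg(A_2),\neg(A_3)$, with the copy indices of Table~\ref{tab:quantum_cut4}. It deliberately lets the same output of $\neg(A_4)^{(1)}$ (resp.\ $\neg(A_1)^{(1)}$) enter two copies of $\neg(A_2A_4)$ (resp.\ $\neg(A_1A_3)$); these are the black dots of Fig.~\ref{fig:inflation-tetrahedron}. This network is not implementable, but each party still has a Heisenberg-picture operator $\widetilde{\Pi}_{A_i}^{a}=U_{A_i}^{\dagger}\Pi_{A_i}^{a}U_{A_i}$ acting on the single product state $|\psi\rangle$ of all source copies. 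The transformations must be dilated to unitaries, not merely isometries, for these to be projectors.

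The cost is that $A_1,A_3$ and $A_2,A_4$ no longer commute. However, Lemma~\ref{Marcoslemma1} only needs commutation within each adjacent pair, and each adjacent pair is injectable. Moreover, $A_1$ and $A_4$ commute and act on disjoint source copies, so your chain argument runs on this one vector.

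Your fallback of combining identities from ``different inflations sharing a common sub-network'' points in this direction, but identities holding on different global states cannot be chained. The point of the quantum cut inflation is that all the needed physical sub-networks coexist as operators on one and the same $|\psi\rangle$.
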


\begin{proof}
By Lemma~\ref{lemma:multilayerG4}, we can restrict our analysis to the \hyperref[2-layer-tetrahedron]{two-layer tetrahedron}. Hence, we now show that perfect coordination cannot be achieved in the \hyperref[2-layer-tetrahedron]{two-layer tetrahedron} by contradiction using the quantum inflation technique \cite{wolfe2021quantum} with tailored commutation relations to take into account the presence of intermediate latent nodes \cite{centeno2024significance}.

Note that the probability distribution obtained from the \hyperref[2-layer-tetrahedron]{two-layer tetrahedron} is given by

\begin{equation}
    \begin{split}
P_{A_1,A_2,A_3,A_4}(a_1,a_2,a_3,a_4)= \\ \langle \psi|U^{\dagger} (\Pi_{A_1}^{a_1}\otimes \Pi_{A_2}^{a_2}\otimes \Pi_{A_3}^{a_3}\otimes \Pi_{A_4}^{a_4})U |\psi\rangle.
\end{split}
\end{equation}
where $\Pi_{A_i}^{a_i}$ is the projector that yields outcome $a_i$ of the party $A_i$ ($i \in \{1,2,3,4\}$), $U$ is the unitary associated to the tensor product of all the transformations where each node acts over the appropriate Hilbert spaces, $U = U_{\neg (A_3 A_4)}\otimes U_{\neg (A_2A_4)}\otimes U_{\neg (A_2A_3)}\otimes U_{\neg (A_1A_4)}\otimes U_{\neg(A_1A_3)}\otimes U_{\neg(A_1A_2)}$, and $|\psi\rangle$ is the quantum state defined by the composition of all the sources,  $|\psi\rangle = |\psi\rangle_{\neg (A_4)} \otimes |\psi\rangle_{\neg(A_3)}\otimes |\psi\rangle_{\neg(A_2)}\otimes |\psi\rangle_{\neg(A_1)}$.\footnote{%
Since we do not restrict the Hilbert space dimensions of the latent nodes, the purification theorem and Naimark's dilatation \cite{nielsen2010quantum} allow us, without loss of generality, to take the states to be pure, the operators of the intermediate nodes to be unitaries and the measurements to be projective.}

Now, toward a contradiction, suppose that the four observed nodes of the \hyperref[2-layer-tetrahedron]{two-layer tetrahedron} are perfectly coordinated. That is, there exist states, unitaries and measurements such that $P_{A_1,A_2,A_3,A_4}$ is the distribution of a shared random bit (as per Eq.~\eqref{randbit}).

Then, we construct a quantum inflation using one copy of the observed nodes, $\{A_1,A_2,A_3,A_4\}$, and two copies of each source except for the first and last ones, for which one copy is sufficient.  Hence, the set of sources that we consider is given by $\{\neg (A_1)^{(1)},\neg(A_2)^{(1)},\neg(A_3)^{(1)},\neg(A_4)^{(1)},\neg(A_2)^{(2)},\neg(A_3)^{(2)}\}$ (where the superindex in parentheses labels each copy). Then, to determine how many copies of each intermediate latent node are needed, one needs to assign which copies of the sources are in the causal past of each observed node (that is, specify the set of copy indices of each observed node). We provide such an assignment for our quantum inflation in Table~\ref{tab:quantum_cut4}.\footnote{Note that providing such an assignment is enough to construct a quantum inflation assuming that every observed node's subnetwork replicates the subnetwork of the respective observed node in the original scenario, i.e. every observed node is \emph{individually injectable}. This is the case because each intermediate node in the causal past of a given observed node must share the same copy indices of the sources in its past. For example, $A_1$ has sources $\neg(A_2)^{(1)}, \neg(A_3)^{(1)}$ in its past. Then, it is required to have an intermediate node $\neg(A_2A_3)^{(1,1)}$ with those exact sources in its causal past (as indicated in the superindex).} A graphical representation of the described quantum inflation, hereinafter referred to as \emph{quantum cut inflation}, is presented in Fig.~\ref{fig:inflation-tetrahedron}.

\begin{table}[h]
\center
\begin{tabular}{|c|c|c|c|c|}
\hline
                     & $\neg(A_1)$ & $\neg(A_2)$ & $\neg(A_3)$ & $\neg(A_4)$\\ \hline
$A_1$ & - & 1 & 1 & 1  \\ \hline
$A_2$ & 1 & - & 1 & 1  \\ \hline
$A_3$ & 1 & 2 & - & 1  \\ \hline
$A_4$ & 1 & 2 &  2 & - \\ \hline
\end{tabular}
\caption{Copy indices of the sources (columns) in the causal past of every observed node (rows) of the \emph{quantum cut inflation} which is graphically represented in Fig.~\ref{fig:inflation-tetrahedron}.}
\label{tab:quantum_cut4}
\end{table}

\begin{figure*}
\centering
\scalebox{0.86}{
\begin{tikzpicture}[
  node distance=1cm and 1.5cm,
  every node/.style={draw, circle, minimum size=6mm, inner sep=0pt},
  level 1/.style={sibling distance=15mm},
  level 2/.style={sibling distance=10mm},
  arrow/.style={<-, >=latex}
]
  \definecolor{lightblue}{RGB}{173,216,230}
  \definecolor{lightpink}{RGB}{255,182,193}

  \node[draw, rectangle, minimum width=10mm, minimum height=10mm, font=\large, fill=lightblue] (A) {$A_1$};
  \node[draw, rectangle, minimum width=10mm, minimum height=10mm, font=\large, fill=lightblue, right=0.8cm of A] (B) {$A_2$};
  \node[draw, rectangle, minimum width=10mm, minimum height=10mm, font=\large, fill=lightblue, right= 0.8cm of B] (C) {$A_3$};
  \node[draw, rectangle, minimum width=10mm, minimum height=10mm, font=\large,
  fill=lightblue, right=0.8cm of C] (D) {$A_4$};

  \node[fill=lightpink, below left=1.9cm and 4.65cm of A] (AD1) {$\neg(A_2A_3)^{(1,1)}$};
  \node[fill=lightpink, right=0.3cm of AD1] (AC1) {$\neg(A_2A_4)^{(1,1)}$};
  \node[fill=lightpink, right=0.3cm of AC1] (AB1) {$\neg(A_3A_4)^{(1,1)}$};
  \node[fill=lightpink, right=0.3cm of AB1] (BD1) {$\neg(A_1A_3)^{(1,1)}$};
  \node[fill=lightpink, right=0.3cm of BD1] (BC1) {$\neg(A_1A_4)^{(1,1)}$};
  \node[fill=lightpink, right=0.3cm of BC1] (AC2) {$\neg(A_2A_4)^{(1,2)}$};
  \node[fill=lightpink, right=0.3cm of AC2] (CD1) {$\neg(A_1A_2)^{(2,1)}$};
  \node[fill=lightpink, right=0.3cm of CD1] (BD2) {$\neg(A_1A_3)^{(2,1)}$};
  \node[fill=lightpink, right=0.3cm of BD2] (AD2) {$\neg(A_2A_3)^{(2,2)}$};

  \node[fill=lightpink, below left=6cm and 0.50cm of A] (ACD1) {$\neg(A_2)^{(1)}$};
  \node[fill=lightpink, right=0.5cm of ACD1] (ABD1) {$\neg(A_3)^{(1)}$};
  \node[fill=lightpink, right=0.5cm of ABD1] (ABC1) {$\neg(A_4)^{(1)}$};
  \node[fill=lightpink, right=0.5cm of ABC1] (BCD1) {$\neg(A_1)^{(1)}$};
  \node[fill=lightpink, right=0.5cm of BCD1] (ACD2) {$\neg(A_2)^{(2)}$};
  \node[fill=lightpink, right=0.5cm of ACD2] (ABD2) {$\neg(A_3)^{(2)}$};
  \coordinate (ABCpivot) at (2,-6);
  \fill (ABCpivot) circle (4pt);
  \draw (ABC1) -- (ABCpivot);

  \coordinate (BCDpivot) at (4,-6);
  \fill (BCDpivot) circle (4pt);
  \draw (BCD1) -- (BCDpivot);
  \foreach \s/\d in {A/AD1, A/AC1, A/AB1, B/AB1, B/BC1, B/BD1, C/AC2, C/BC1, C/CD1, D/AD2, D/BD2, D/CD1} {\draw[arrow] (\s) -- (\d);}

  \foreach \s/\d in {AD1/ACD1, AD1/ABD1, AC1/ACD1, AB1/ABD1, AB1/ABC1, BD1/ABD1, BD1/BCDpivot, BC1/ABC1, BC1/BCD1, AC2/ACD2,
  CD1/BCD1, CD1/ACD2,
  BD2/BCDpivot, BD2/ABD2,
  AD2/ACD2, AD2/ABD2, AC1/ABCpivot, AC2/ABCpivot} {\draw[arrow] (\s) -- (\d);}

\end{tikzpicture}
} %
    \caption{Graphical representation of the  \emph{quantum cut inflation} of the \hyperref[2-layer-tetrahedron]{two-layer tetrahedron}. The superindex in parenthesis indicate the copy index of the sources in the past of the nodes. The black dots denote whenever the same Hilbert space of a given source state is pertinent in defining more than one transformation. Accordingly, pairs of transformations related by a black dot (i.e., $\neg (A_1A_3)^{(1,1)}$ and $\neg (A_1A_3)^{(2,1)}$ or $\neg (A_2A_4)^{(1,1)}$ and $\neg (A_2A_4)^{(1,2)}$) cannot be physically implemented simultaneously. That implies incompatibility of their downstream measurements ($A_2$ and $A_4$ or $A_1$ and $A_3$, respectively). Nevertheless, while the whole quantum inflation is not physically valid, certain parts of it remain well defined: in particular, any subnetwork obtained by following only one of the outgoing wires from each black dot is implementable, like the adjacent pairs $(A_1,A_2)$, $(A_2,A_3)$ and $(A_3,A_4)$, or the pair $(A_1,A_4)$.} 
    \label{fig:inflation-tetrahedron}
\end{figure*}

Note that the black dots used in the graphical representation of Fig.~\ref{fig:inflation-tetrahedron} indicate whenever the same Hilbert space of a given source state is pertinent in defining more than one transformation. Consequently, pairs of transformations related by a black dot (i.e., $\neg (A_1A_3)^{(1,1)}$ and $\neg (A_1A_3)^{(2,1)}$ or $\neg (A_2A_4)^{(1,1)}$ and $\neg (A_2A_4)^{(1,2)}$) cannot be physically implemented simultaneously. In turn, this implies incompatibility of their downstream measurements ($A_2$ and $A_4$ or $A_1$ and $A_3$, respectively). Nevertheless, while the whole quantum inflation is not physically valid, certain parts of it remain well defined: in particular, any subnetwork obtained by following only one of the outgoing wires from each black dot is implementable, like the adjacent pairs $(A_1,A_2)$, $(A_2,A_3)$ and $(A_3,A_4)$, or the pair $(A_1,A_4)$.

 The incompatibility of some measurements in the quantum inflation is mathematically implemented by removing commutation rules between the Heisenberg-picture operators describing the measurements of those parties\footnote{As a general rule, notice that, if a pair of observed nodes share the same copy indices for all their common sources, the Heisenberg-picture operators commute, whereas, if the copy indices for the common sources are only partially matched, i.e., some are equal
while others differ, then the corresponding Heisenberg-picture operators do
not commute.} which are defined as

\begin{equation}
    \widetilde{\Pi}_{A_1}^{a_1} \coloneqq U_{A_1}^{\dagger}\Pi_{A_1}^{a_1} U_{A_1}
\end{equation}
where $U_{A_1}$ is the tensor product of all the unitaries corresponding to the transformations in the past of
$A_1$, namely  $U_{A_1}\coloneqq {U_{\neg(A_2A_3)^{(1,1)}} \otimes U_{\neg(A_2A_4)^{(1,1)}}\otimes U_{\neg(A_3A_4)^{(1,1)}}}$. The remaining operators $\widetilde{\Pi}_{B}^{b}$, $\widetilde{\Pi}_{C}^{c}$ and $\widetilde{\Pi}_{D}^{d}$ are defined analogously. In contrast, the Heisenberg-picture operators describing compatible measurements do commute. That is, we have that

\begin{equation}
    \left[ \widetilde{\Pi}_{A_i}^{0},\widetilde{\Pi}_{A_{i+1}}^{0}\right] = 0 \quad\text{for} \quad i=1,2,3.
    \label{commutation}
\end{equation}

Moreover, the two-party sets of observed nodes $(A_1,A_2)$, $(A_2,A_3)$ and $(A_3,A_4)$ are not only compatible but also \emph{injectable}, i.e., their subnetworks in the \emph{quantum cut inflation} exactly replicate the corresponding subnetworks in the \hyperref[2-layer-tetrahedron]{two-layer tetrahedron}. Hence, by consistency between the behaviours in the original scenario and the inflation, we have that their marginals must be those of a shared random bit, i.e.,

\begin{equation}
\begin{split}
\langle \phi | \widetilde{\Pi}_{A_i}^{a_i}\widetilde{\Pi}_{A_{i+1}}^{a_{i+1}=a_i}  |\phi\rangle=\langle \phi | \widetilde{\Pi}_{A_i}^{a_i}  |\phi\rangle \\ 
= \langle \phi | \widetilde{\Pi}_{A_{i+1}}^{a_{i+1}}  |\phi\rangle = \frac{1}{2} \quad\text{for} \quad i=1,2,3
\end{split}
\label{perfect_random}
\end{equation}

Then, as per Eqs.~\eqref{commutation} and \eqref{perfect_random}, we can apply Lemma~\ref{Marcoslemma1} to all adjacent pairs $\left(\widetilde{\Pi}_{A_i}^{0},\widetilde{\Pi}_{A_{i+1}}^{0}\right)$ for $i=1,2,3$,
 
 \begin{equation}
     \widetilde{\Pi}_{A_i}^{0}|\psi\rangle = \widetilde{\Pi}_{A_{i+1}}^{0}|\psi\rangle \quad\text{for} \quad i=1,2,3.
 \end{equation}
 Therefore, by transitivity, we obtain 

\begin{equation}
\label{equality1}
\widetilde{\Pi}_{A_1}^{0}|\psi\rangle = \widetilde{\Pi}_{A_4}^{0}|\psi\rangle.    
\end{equation}

However, from the assignment of the copy indices of the sources (or directly, in the graphical representation of Fig.~\ref{fig:inflation-tetrahedron}), we see that $A_1$ and $A_4$ do not share any common cause in their past in the \emph{quantum cut inflation}, hence they are causally independent (and compatible). This fact translates into the factorization of their marginal probability distribution, i.e. 
\begin{equation}
    \langle\psi|\widetilde{\Pi}_{A_1}^{0} \widetilde{\Pi}_{A_4}^{0}|\psi\rangle = \langle\psi|\widetilde{\Pi}_{A_1}^{0} |\psi\rangle \langle\psi|\widetilde{\Pi}_{A_4}^{0} |\psi\rangle 
\end{equation}
which is in contradiction with Eq.~\eqref{equality1}. Namely, $(A_1,A_4)$
cannot be both perfectly coordinated and independent.    
\end{proof}

As a first step towards the goal of finding the strongest operational constraints implied by
the Coordination Principle in quantum theory, we exploit the quantum inflation method. It allows us to derive the noise-robust counterpart of Theorem~\ref{theorem:no-perfect-correlation4} by providing a Bell-like inequality to witness the presence of a common cause shared by the four parties. Concretely,

\begin{theorem}
    Any correlation produced in a DAG $g \in \mathcal{G}_4$, must satisfy
\begin{equation}
\text{\small
    $\langle A_1A_2 \rangle + \langle A_2A_3 \rangle + \langle A_3A_4 \rangle \leq \sin(\frac{\pi}{6})\langle A_1\rangle \langle  A_4 \rangle + 3\cos(\frac{\pi}{6}).$}
    \label{Ineq_th}
\end{equation}
    \label{noiserobusttheorem4}
\end{theorem}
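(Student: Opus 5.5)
The plan is to follow the proof of Theorem~\ref{theorem:no-perfect-correlation4}, but to keep track of angles between vectors in Hilbert space instead of exact equalities. By Lemma~\ref{lemma:multilayerG4} it suffices to prove the inequality for the two-layer tetrahedron. Here we take pure states, unitary transformations and projective measurements. We then work in the \emph{quantum cut inflation} of Fig.~\ref{fig:inflation-tetrahedron} with the copy-index assignment of Table~\ref{tab:quantum_cut4}.

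\textbf{Inflation facts.} Define $\pm1$-valued Heisenberg-picture observables $\widetilde{A}_i\coloneqq \widetilde{\Pi}^{0}_{A_i}-\widetilde{\Pi}^{1}_{A_i}$, and let $|\psi\rangle$ denote the inflated global state. The inflation gives three facts.
\begin{enumerate}
\item By Eq.~\eqref{commutation} and injectability of the pairs $(A_i,A_{i+1})$, we have $\langle\psi|\widetilde{A}_i\widetilde{A}_{i+1}|\psi\rangle=\langle A_iA_{i+1}\rangle$ for $i=1,2,3$.
\item The single-party marginals are also injectable, so $\langle\psi|\widetilde{A}_i|\psi\rangle=\langle A_i\rangle$.
\item $A_1$ and $A_4$ share no source copy in the inflation, and $\widetilde{A}_1,\widetilde{A}_4$ commute. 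Hence $\langle\psi|\widetilde{A}_1\widetilde{A}_4|\psi\rangle=\langle A_1\rangle\langle A_4\rangle$.
\end{enumerate}

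\textbf{Geometric reformulation.} Set $v_i\coloneqq\widetilde{A}_i|\psi\rangle$. These are unit vectors, since $\widetilde{A}_i^2=\mathbb{1}$. For commuting Hermitian operators the product is Hermitian, so the inner products $\langle v_i|v_{i+1}\rangle$ and $\langle v_1|v_4\rangle$ are real. Define angles through $\cos\theta_i=\langle A_iA_{i+1}\rangle$ and $\cos\theta_{14}=\langle A_1\rangle\langle A_4\rangle$. The angle between unit vectors, $\arccos\mathrm{Re}\langle u|w\rangle$, is a metric on the sphere. Its triangle inequality gives $\theta_{14}\le\theta_1+\theta_2+\theta_3\eqqcolon\alpha$. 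Consequently $\langle A_1\rangle\langle A_4\rangle\ge\cos\alpha$ when $\alpha\le\pi$, and trivially $\ge -1$ otherwise. The theorem then reduces to the trigonometric bound
\begin{equation*}
\cos\theta_1+\cos\theta_2+\cos\theta_3-\tfrac12\cos\theta_{14}\le 3\cos(\tfrac{\pi}{6}).
\end{equation*}

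\textbf{Main obstacle: the optimization.} For $\alpha\le\pi$ I would bound the left-hand side by $\sum_i\cos\theta_i-\tfrac12\cos\alpha$. I would then maximize $\sum_i\cos\theta_i$ at fixed $\alpha$.
\begin{itemize}
\item Cosine is concave on $[0,\pi/2]$. If all $\theta_i\le\pi/2$, Jensen gives $\sum_i\cos\theta_i\le 3\cos(\alpha/3)$.
\item If some $\theta_i>\pi/2$, that term is negative. Since $\alpha\le\pi$, the remaining two angles sum to less than $\pi/2$. A crude estimate, $\le 2-\tfrac12\cos\alpha\le 2.5<3\cos(\pi/6)$, should then suffice.
\end{itemize}
In the Jensen case we maximize $g(\alpha)=3\cos(\alpha/3)-\tfrac12\cos\alpha$ on $[0,\pi]$. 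Stationarity reads $\sin(\alpha/3)=\tfrac12\sin\alpha$. Writing $\sin\alpha=\sin(\alpha/3)\bigl(3-4\sin^2(\alpha/3)\bigr)$, this becomes $\sin(\alpha/3)\bigl(1-4\sin^2(\alpha/3)\bigr)=0$. The interior critical point is therefore $\alpha=\pi/2$. There $g=3\cos(\pi/6)$, which beats both endpoints: $g(0)=2.5$ and $g(\pi)=3.5\cdot\tfrac12\cdot\ldots$, namely $g(\pi)=1.5+0.5=2$.

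The case $\alpha>\pi$ uses the bound $-\tfrac12\cos\theta_{14}\le\tfrac12$. The point is that three cosines with angle sum exceeding $\pi$ cannot exceed $3\cos(\pi/3)=1.5$ by the same concavity argument. The total is then at most $2<3\cos(\pi/6)$.

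Multiplying back, $\tfrac12=\sin(\pi/6)$ gives Eq.~\eqref{Ineq_th}. The careful case analysis making the concavity/Jensen step rigorous over the full angle range is where I expect most of the work. If it becomes messy, I would instead use Lagrange multipliers on the compact domain and check boundary cases.
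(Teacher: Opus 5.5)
Your proposal is correct, but it reaches the bound by a different route than the paper. Both arguments rest on the same quantum cut inflation and the same three inputs: injectability of the adjacent pairs, injectability of the single parties, and $\langle A_1A_4\rangle=\langle A_1\rangle\langle A_4\rangle$ because $A_1$ and $A_4$ share no source copy.

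The paper then writes the first-level NPA moment matrix $\Gamma$, leaving the two non-expressible entries as unknowns. It exhibits an explicit dual witness $W_4\succeq 0$: diagonal $\tfrac12\cos\frac{\pi}{6},\cos\frac{\pi}{6},\cos\frac{\pi}{6},\tfrac12\cos\frac{\pi}{6}$, adjacent off-diagonal entries $-\tfrac12$, and corner entry $\tfrac12\sin\frac{\pi}{6}$. With this witness, $\mathrm{Tr}(\Gamma W_4)\ge 0$ is literally the inequality with $\langle A_1A_4\rangle$ on the right, and the factorization is substituted afterwards.

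You stay on the primal side instead. $\Gamma$ is (the real part of) the Gram matrix of your vectors $v_i$, and you replace its positivity by the spherical triangle inequality $\theta_{14}\le\theta_1+\theta_2+\theta_3$, followed by a trigonometric maximization.

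\emph{What each route buys.} The witness gives a one-line, linear, directly checkable certificate. It extends uniformly to all $N$ through the Chebyshev/Sylvester computation for $W_N$, with no case analysis. Your route needs no guessed witness and makes the constants transparent: the extremal configuration splits a total angle $\pi/2$ into three equal steps of $\pi/6$, with $v_1\perp v_4$. The price is a case distinction. Your crude side estimates happen to suffice for $N=4$, but they would need sharpening for other $N$ (e.g.\ $N=3$, where $1+\sin\frac{\pi}{4}>2\cos\frac{\pi}{4}$).

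\emph{One small imprecision.} In the case $\alpha>\pi$, the claim $\sum_i\cos\theta_i\le 3/2$ does not follow from Jensen alone when some $\theta_i>\pi/2$, since cosine is not concave there. The claim is still true (it is the classical bound for angles summing to at least $\pi$). In any case, the crude estimate $\sum_i\cos\theta_i<2$ already gives a total below $2.5<3\cos\frac{\pi}{6}$, so your argument stands.
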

We defer the formal proof of the previous theorem to Appendix~\ref{SDP_appendix}. However, we provide a proof sketch here.

\textit{Sketch of the proof.} By Lemma~\ref{lemma:multilayerG4}, any inequality %
we derive on the set of probability distributions of the \hyperref[2-layer-tetrahedron]{two-layer tetrahedron} must be satisfied also by any correlation achieved in any DAG $g \in \mathcal{G}_4$.
Hence, we start by considering the same quantum inflation of the \hyperref[2-layer-tetrahedron]{two-layer tetrahedron} as in the previous analytical proof, see Fig.~\ref{fig:inflation-tetrahedron} and Table~\ref{tab:quantum_cut4}. Then, one can use the first level of the NPA hierarchy \cite{navascues2008convergent} to set up a positive semidefinite (PSD) completion problem. That is, one can construct a moment matrix taking into account the injectable sets and the commutation rules described for the operators in the Heisenberg picture which must be positive if the correlation is obtained using a quantum protocol. Using this method, we obtain a certificate of the infeasibility of perfect coordination in the \hyperref[2-layer-tetrahedron]{two-layer tetrahedron}, namely Eq.~\eqref{Ineq_th}.

\hfill\qedsymbol

Notice that, for the particular case of perfect coordination, the left-hand-side of the Eq.~\eqref{Ineq_th} is 3 while the right-hand-side is $\frac{3\sqrt{3}}{2}\approx2.598$ leading to a violation of the inequality. To study the robustness to noise of Eq.~\eqref{Ineq_th} when considering the presence of white noise modeled by the uniform distribution $P_{wn}=\frac{1}{16}\sum_{i \in \{0,1\}^4} [i]$, the distribution is
\begin{equation}
    P = vP_{pc}+(1-v)P_{wn},
    \label{noisypc}
\end{equation}
where $P_{pc}$ corresponds to the ideal scenario of the perfect coordination distribution and $v\in[0,1]$ is the noise parameter. It is easy to check that Eq. \eqref{Ineq_th} allows one to rule out all distributions of the form of Eq.~\eqref{noisypc} in which $v> \cos(\frac{\pi}{6})$. 

Note that the provided Bell-like inequality is not the optimal one in terms of how much noise one could accept to determine the presence of a common cause of the four parties, as it was derive using only the first level of the NPA hierarchy for the \emph{quantum cut inflation}. However, increasing the number of copies or the level of the NPA hierarchy rapidly exceeds the current computational limits.

\subsection{N-partite perfect coordination requires a common cause in quantum theory}
\label{Npartitecorrelation}

We now generalize the results of the previous section to the case of $N$ parties following the same logic, culminating in our first main result, Theorem~\ref{theoremPRL}. Accordingly, let us start by defining the generalization of $\mathcal{G}_4$ to $N$ observed nodes:

\begin{definition}
We define $\mathcal{G}_N$ as the set of all DAGs with $N$ observed nodes and only classical and/or quantum latent nodes such that the observed nodes do not all share a common cause.
\label{def:setN}
\end{definition}

To follow the same steps, we also need to define the analogue of the \hyperref[2-layer-tetrahedron]{two-layer tetrahedron} for $N$ observed nodes.

\begin{definition}\label{def:gNstar}For a given integer $N$, we define the DAG $\textit{g}^*_N$ to be $(V,E)$, where $V$ is the set of all nodes in $\textit{g}^*_N$ and $E \subseteq V \times V$ the set of directed edges, each oriented from the first node in the pair to the second, in the following way:
\begin{itemize}
    \item $V = \mathcal{P}\left(\{A_1, A_2,...,A_N\}\right)\backslash\left\{ \emptyset,\{A_1,A_2,\dots A_N\right\}\}$,
    \item $E = \{(u,v)|\; u,v\in V,\; u\subset v \text{ and }|u|=|v|+1 \}$,
\end{itemize}
where given a set $S$, $\mathcal{P}\left(S\right)$ denotes the power set and $|S|$ the cardinality. Also, we define the nodes whose labels have cardinality one as classical observed nodes, and the rest as quantum latent nodes.
\label{def:DAG}
\end{definition}

For simplicity, we label each node by concatenating the elements of the set it represents (i.e., we omit curly brackets and commas). More concretely, we label latent nodes by the complement of their corresponding set, denoted $\neg(S)$, whereas the observed nodes are simply denoted by $A_1,\dots,A_N$.
Our notation is chosen to make causal relationships explicit. For example, the node $\neg(A_1)$ lies in the causal past of every observed node except $A_1$ itself. See the \hyperref[2-layer-tetrahedron]{two-layer tetrahedron} for the particular case of $N=4$ (i.e., $g_4^*$).

Then, Lemma~\ref{lemma:multilayerG4} generalizes trivially, as the proof (given in Appendix~\ref{app:proof1}) is completely analogous when substituting the number of observed nodes by $N$.

\begin{lemma}
\label{lemma:multilayerG}
Any DAG $\textit{g} \in \mathcal{G}_N$ is observationally contained in $g^*_N$.
\end{lemma}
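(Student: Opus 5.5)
The plan is to show that any DAG $g\in\mathcal{G}_N$ can be simulated inside $g^*_N$ by a sequence of operations, each of which only enlarges (or preserves) the set of producible distributions. Each such operation is then justified by the usual rules: a latent node may be absorbed into another latent node whose set of observed descendants contains its own; a source may be duplicated and sent along more wires; and a transformation may be merged into a later one. The natural bookkeeping device is to label every latent node $\lambda$ of $g$ by $D(\lambda)\subseteq\{A_1,\dots,A_N\}$, its set of observed descendants. By the definition of $\mathcal{G}_N$, no latent node has $D(\lambda)=\{A_1,\dots,A_N\}$. Otherwise it would be a common cause of all observed nodes, since having a directed path to every $A_i$ is exactly what being a cause means here. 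Every other label is a node of $g^*_N$.

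\textbf{Step 1 (preprocessing).} Apply the standard reductions, assuming the results of Ref.~\cite{centeno2024significance} and the usual arguments as needed.
\begin{itemize}
\item Latent nodes with no observed descendants are discarded.
\item Classical latent nodes are treated as quantum nodes with a decohered output.
\item Observed nodes with children are replaced by a latent node that copies the relevant classical data, with a terminal measurement. This does not change $D$ of any ancestor, because the observed node's output is classical and can be fanned out.
\end{itemize}

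\textbf{Step 2 (the embedding).} Map each latent node $\lambda$ of $g$ to the node $D(\lambda)$ of $g^*_N$, and simulate all $g$-latent nodes with the same label inside that single $g^*_N$ node. The key monotonicity fact is that if $\mu$ is a parent of $\lambda$, then $D(\lambda)\subseteq D(\mu)$. So every edge of $g$ runs from a label to a subset label, and in $g^*_N$ there is a directed path from $D(\mu)$ down to $D(\lambda)$. The processing is then carried out top-down through the layers of $g^*_N$, i.e. in order of decreasing $|S|$.

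\textbf{Step 3 (routing).} A node $S$ of $g^*_N$ receives, from its parents $S\cup\{A_j\}$, all quantum systems that any $g$-node labelled by a subset of $S$ needs. It applies the (tensored, time-ordered) channels of the $g$-nodes labelled exactly $S$, and forwards the remaining systems to its children $S\setminus\{A_i\}$. Wire $W$, output by a $g$-node with label $T$ and consumed by a $g$-node with label $T'\subsetneq T$, is routed along a chosen chain $T\supset\dots\supset T'$, for instance removing elements in index order. Each intermediate node acts as the identity on it. Within each $g^*_N$ node, the $g$-nodes sharing that label are composed following the topological order of $g$. This is consistent because an edge of $g$ between two such nodes stays inside the same $g^*_N$ node. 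Finally, observed node $A_i$ receives from its parents $\{A_i,A_j\}$ everything the original $A_i$ needed, and performs the original measurement.

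\textbf{Step 4 (conclusion).} The resulting protocol on $g^*_N$ is the composition of the same channels on the same joint state, merely grouped and relabelled. It therefore yields the same distribution, which proves observational containment.

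\textbf{Main obstacle.} The delicate point is Step 3 when a single quantum system (not copyable) is needed by $g$-nodes whose labels are incomparable subsets, for example a source whose output goes to nodes labelled $\{A_1,A_2\}$ and $\{A_2,A_3\}$. This is resolved because a quantum output is split into subsystems, each with its own consumer, and each subsystem is routed independently along its own chain. The only copying ever required is of classical data, which is free. I would also double-check that the edge set of $g^*_N$, which only allows removal of one element per step, suffices for this; it does, by chaining.
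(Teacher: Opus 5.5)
Your argument is correct, and its core matches the paper's proof. Both label each latent node by its set of observed descendants, note that labels never grow along edges and are never the full set, merge equal labels, and arrive at $g^*_N$. The two routes differ in the auxiliary steps.

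To make observed nodes terminal, the paper proves a separate proposition. It adds a nonclassical latent node for each maximal set of observed nodes sharing a common cause, followed by iterated layers of intermediate nodes. This guarantees that each edge $A\to B$ between observed nodes has a unique latent parent $\mathcal{L}$ of both, whose past contains all ancestors of $A$; $\mathcal{L}$ then performs $A$'s measurement and broadcasts the outcome. Your fresh node $\lambda_A$, with $A$'s parents, does the same job in one move and needs none of that scaffolding.

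Conversely, the paper finishes by adding the missing nodes and the one-step arrows and declaring the result to be exactly $g^*_N$. It leaves implicit how edges of $g$ that skip layers ($|D(\mu)\setminus D(\lambda)|\ge 2$) are accommodated. Your identity-routing along a chain, processed top-down, supplies exactly that step.

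One line should be added. $\lambda_A$ is not a node of the original $g$, so the fact that its label is not the full set needs its own justification. If $A$ has a parent, a full label would make that parent a common cause of everyone. If $A$ is parentless, you need the convention that $A$ counts as a cause of itself. Without it, a parentless $A$ broadcasting a local coin to all parties would achieve perfect coordination and contradict the lemma. The paper's argument tacitly needs the same convention.
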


Let us now state the generalization of Theorem~\ref{theorem:no-perfect-correlation4} to the $N$-partite case. Equivalently, this is just a technical version of Theorem~\ref{theoremPRL}.

\begin{theoremrevisited}{theoremPRL}\label{th1_revisited}
Consider a DAG $g \in \mathcal{G}_N$, i.e., $g$ has N observed nodes that do not share a common cause with only classical and quantum latent nodes. Then, perfect coordination over the N observed nodes is not achievable.
\end{theoremrevisited}

\begin{proof}
This proof is completely analogous to the one of Theorem~\ref{theorem:no-perfect-correlation4} but using Lemma~\ref{lemma:multilayerG} instead of Lemma~\ref{lemma:multilayerG4} and using $g^*_N$ instead of $g_4^*$.

Note that the probability distribution obtained from $g^*_N$ is of the form
\begin{equation}
    \begin{split}
P_{A_1,A_2,\dots,A_N}(a_1,a_2,\dots,a_N)=   \\ \bra{\psi} U_2^{\dagger}\dots U_{N-2}^{\dagger}(\Pi_{A_1}^{a_1}\otimes \cdots \otimes \Pi_{A_N}^{a_N})U_{N-2}\dots U_2 \ket{\psi} ,
\end{split}
\end{equation}
where $\Pi^{A_i}_{a_i}$ is the projector that yields outcome $a_i$ of the party $A_i$,  $U_j$ is the unitary associated to the $j^{th}$ layer of latent nodes (with $j \in \{2,\dots,N-2\}$ since only the intermediate layers are transformations), that is, the tensor product of all the transformations of that layer where each node acts over the appropriate Hilbert spaces, and $|\psi\rangle$ is the initial quantum state defined by the composition of all the sources, i.e., $|\psi\rangle = \neg (A_1) \otimes \cdots \otimes \neg (A_N)$.

Now, to reach a contradiction, we start by assuming that the $N$ observed nodes of $g_N^*$ are perfectly coordinated and then, we use the previous idea of constructing a quantum cut inflation. Hence, we shall use one copy of all the observed nodes, $\{A_1,\dots, A_{N}\}$, and two copies of all the sources except for the first and last ones for which one copy is sufficient, that is, $\{\neg(A_1)^{(1)},\neg (A_2)^{(1)},\dots,\neg A_{N}^{(1)},\neg A_{2}^{(2)},\dots, \neg A_{N-1}^{(2)}\}$. Also, to determine the intermediate latents required to construct the inflation in which all the observed nodes are individually injectable, we provide the copy indices in the past of each observed node in Table~\ref{tab:quantum_cut}.

\begin{table}[h]
    \center
\begin{tabular}{|c|c|c|c|c|c|c|}
\hline
                     & $\neg (A_1)$ & $\neg (A_2)$ & $\neg (A_3)$ & $\neg(A_4)$ & ... & $\neg (A_N)$ \\ \hline
$A_1$ & - & 1 & 1 & 1 & 1   & 1 \\ \hline
$A_2$ & 1 & - & 1 & 1 & 1   & 1 \\ \hline
$A_3$ & 1 & 2 & - & 1 & 1   & 1 \\ \hline
$A_4$ & 1 & 2 & 2 & - & 1 & 1  \\ \hline
...                  & 1 & 2 &  2 & 2 &  -   & 1 \\ \hline
$A_N$ & 1 & 2 & 2 & 2 & 2  & - \\ \hline
\end{tabular}
\caption{Copy indices of the sources (columns) in the causal past of every observed node (rows) of the quantum cut inflation for the case of $N$ observed nodes.}
\label{tab:quantum_cut}
\end{table}

Now, the operators in the Heisenberg picture take into account all the layers of intermediate latent nodes in the past of each observed node. Therefore,
\begin{equation}
    \widetilde{\Pi}_{A_i}^{a_i} = U_{2,A_i}^{\dagger}\dots U_{N-2,A_i}^{\dagger}\Pi_{A_i}^{a_i} U_{N-2,A_i}\dots U_{2,A_i},
\end{equation}
where $i \in \{1,...,N\}$ and $U_{j,A_i}$ is the tensor product of all the unitaries from the $j^{th}$ layer of latent nodes which lie in the causal past of $A_i$.

As in the proof of Theorem~\ref{theorem:no-perfect-correlation4}, our choice of copy indices ensures that the Heisenberg-picture operators associated with all adjacent pairs commute. Moreover, each adjacent pair of observed nodes forms an injectable set, and hence its marginal distribution is constrained to be that of a shared random bit. Together, these two facts imply that Eqs.~\ref{commutation} and~\ref{perfect_random} hold for all $N-1$ adjacent pairs.

Hence, we can apply Lemma~\ref{Marcoslemma1} to all pairs $\left(\widetilde{\Pi}_{A_i}^{0},\widetilde{\Pi}_{A_{i+1}}^{0}\right)$ for $i=1,..., N-1$, obtaining

\begin{equation}
     \widetilde{\Pi}_{A_i}^{0}|\psi\rangle = \widetilde{\Pi}_{A_{i+1}}^{0}|\psi\rangle \quad\text{for} \quad i=1,..., N-1.
\end{equation}

Finally, by transitivity

\begin{equation}
\label{equality2}
\widetilde{\Pi}_{A_1}^{0}|\psi\rangle = \widetilde{\Pi}_{A_N}^{0}|\psi\rangle.    
\end{equation} 

However, from the assignment of the copy indices, we also see that $A_1$ and $A_N$ are causally independent and, thus, 
\begin{equation}
    \langle\psi|\widetilde{\Pi}_{A_1}^{0} \widetilde{\Pi}_{A_N}^{0}|\psi\rangle = \langle\psi|\widetilde{\Pi}_{A_1}^{0} |\psi\rangle \langle\psi|\widetilde{\Pi}_{A_N}^{0} |\psi\rangle 
\end{equation}
which is in contradiction with Eq.~\eqref{equality2}.
    
\end{proof}

Finally, we extend the noise-robust formulation to the general $N$-partite case:

\begin{theoremnoise}{theoremPRL}\label{noiserobusttheorem}
Any correlation produced in a DAG $g \in \mathcal{G}_N$ must satisfy
    
\begin{equation}
\begin{split}
    \sum_{i=1}^{N-1}\langle A_i A_{i+1} \rangle\\ \leq \sin(\frac{\pi}{2(N-1)}) \langle A_1 \rangle \langle A_N \rangle + (N-1)\cos(\frac{\pi}{2(N-1)}).
    \end{split}
    \label{Ineq_thN}
\end{equation}
\end{theoremnoise}

The sketch of the proof is the same as the one for Theorem~\ref{noiserobusttheorem4} but with a larger moment matrix as we have $N$ parties instead of only four. See Appendix~\ref{SDP_appendix} for the detailed proof.

As before, we can verify the noise robustness of this result by introducing noise into the perfect coordination distribution. For $N$ parties, if we assume a white noise model given by $P_{wn}=\frac{1}{2^N}\sum_{i \in \{0,1\}^N} [i]$, the noisy distribution takes the same convex combination form as in Eq.~\eqref{noisypc}. According to Theorem~\nameref{noiserobusttheorem}, we rule out all probability distributions with noise parameter $v>\cos(\frac{\pi}{2(N-1)})$. The tolerance to noise decreases as $N$ increases, eventually vanishing in the limit $N \to \infty$. This behavior is both natural and intuitive: since our method relies on two-party marginals to establish correlations between the first and last parties, a greater number of middle steps makes the overall correlation increasingly sensitive to noise.

\section{A theory producing perfect coordination without a common cause.}
\label{sec:opt}

The framework of Operational Probabilistic Theories (OPTs) provides a language for expressing physical theories through the probabilities they assign to different operations. Within this framework, both classical and quantum theories, as well as beyond-quantum alternatives, can be represented and, hence, systematically compared. This makes OPTs a helpful tool for identifying the principles that distinguish quantum theory from other conceivable physical theories.
In this section, we provide a particular OPT in which multipartite perfect coordination is possible even without sharing a common cause whilst NSI are satisfied. However, before providing its explicit description, we recall the main concepts and definitions used to describe an OPT, which were formally explained in Ref.~\cite{d2017quantum}. Note that in this section, we adopt the terminology of the OPT literature rather than the one of causal inference used in the rest of the manuscript. The necessary connections between the two frameworks are made as new terms are introduced.

\subsection{Definition of OPT}

The primitive notions used to describe an OPT in the language of D'Ariano \emph{et al.} are events, tests and systems. An \emph{event} is a single process that connects an incoming system (the input)
with an outgoing system (the output) indicating the types of system. The notion of system types refers to the different physical inputs and outputs of a physical process like, for example, a spin particle in a Stern-Gerlach experiment. Here, we use different labels to describe system types that are operationally distinct. Then, a \emph{test} is a collection of events. %
Graphically, this is represented as shown in Fig.~\ref{test}, where we label the different system's types with numbers, and the different events of the test with an outcome $a \in |A|$ (where $|A|$ is the cardinality of the set of events in the test). Note that if the test contains a single event, we will not specify any outcome.

\begin{figure}[hb]
    \centering
    \begin{tikzpicture}[
  node distance=1cm and 1.5cm,
  level 1/.style={sibling distance=15mm},
  level 2/.style={sibling distance=10mm},
  arrow/.style={<-, >=latex}
]
        \node[draw, shape=diamond, minimum size=1cm] (1) {$\{A^a\}_{a \in |A|}$};
        \node [style=rectangle] (2) at (0, 2) { };
        \node [style=rectangle] (3) at (0, -2) { };
        \node [style=rectangle] (4) at (0.2, -1.5) {1};
        \node [style=rectangle] (5) at (0.2, 1.4) {2};
        \draw[arrow] (2) -- (1);
        \draw[arrow] (1) -- (3);
    \end{tikzpicture}
    \caption{Diagrammatic representation of a test connecting system $1$ with $2$.}
    \label{test}
\end{figure}

Importantly, any circuit is constructed following a set of compositional rules for the tests. In particular, tests can always be implemented in parallel, regardless of their connectivity. However, sequential composition is permitted only when the output system type of the preceding test matches the input system type of the subsequent test. This is illustrated diagrammatically in Figs.~\ref{parallel},~\ref{sequential}. 
 
\begin{figure}
    \centering
    \begin{tikzpicture}[
  node distance=1cm and 1.5cm,
  level 1/.style={sibling distance=15mm},
  level 2/.style={sibling distance=10mm},
  arrow/.style={<-, >=latex}
]
        \node[draw, shape=diamond, minimum size=1cm] (1) {$\{A^a\}_{a \in |A|}$};
        \node [style=rectangle] (2) at (0, 2) { };
        \node [style=rectangle] (3) at (0, -2) { };
        \node [style=rectangle] (4) at (0.2, -1.5) {1};
        \node [style=rectangle] (5) at (0.2, 1.6) {2};
        \draw[arrow] (2) -- (1);
        \draw[arrow] (1) -- (3);
        \node[draw, shape=diamond, minimum size=1cm] (6) at (3,0) {$\{B^b\}_{b \in |B|}$};
        \node [style=rectangle] (7) at (3, 2) { };
        \node [style=rectangle] (8) at (3, -2) { };
        \node [style=rectangle] (9) at (3.2, -1.5) {2};
        \node [style=rectangle] (10) at (3.2, 1.6) {3};
        \draw[arrow] (7) -- (6);
        \draw[arrow] (6) -- (8);
        \node[draw, style=rectangle, dashed, minimum width=5.5cm, minimum height=2.7cm] (11) at (1.5,0) {};
    \end{tikzpicture}
    \caption{Diagrammatic representation of parallel composition. %
    }
    \label{parallel}
\end{figure}

\begin{figure}
    \centering
    \begin{tikzpicture}[
  node distance=1cm and 1.5cm,
  level 1/.style={sibling distance=15mm},
  level 2/.style={sibling distance=10mm},
  arrow/.style={<-, >=latex}
]
        \node[draw, shape=diamond, minimum size=1cm] (1) {$\{A^a\}_{a \in |A|}$};
        \node [style=rectangle] (2) at (0, 5) { };
        \node [style=rectangle] (3) at (0, -2) { };
        \node [style=rectangle] (4) at (0.2, -1.7) {1};
        \node [style=rectangle] (5) at (0.2, 1.4) {2};
        \node[draw, shape=diamond, minimum size=1cm] (6) at (0,3) {$\{B^b\}_{b \in |B|}$};
        \node [style=rectangle] (10) at (0.2,4.5) {3};
        \draw[arrow] (2) -- (6);
        \draw[arrow] (1) -- (3);
        \draw[arrow] (6) -- (1);
        \node[draw, style=rectangle, dashed, minimum width=2.7cm, minimum height=5.8cm] (11) at (0,1.4) {};
    \end{tikzpicture}
    \caption{Diagrammatic representation of sequential composition.}
    \label{sequential}
\end{figure}

Some tests have either no systems as input or no systems as output \footnote{The notion of a test ``not having'' an input/output is formally stated as that their input/output is the trivial system.}. We refer to these as \emph{preparation tests} and \emph{observation tests}\footnote{Note that in the notation used in the rest of the paper, observations and preparations correspond to observed classical nodes (parties) and sources, respectively.}, respectively. In Fig.~\ref{circuit}, the test $\{A^a\}_{a \in |A|}$ represents a preparation, while $\{C^c\}_{c \in |C|}$ represents an observation. All other tests will be referred to as \emph{transformations}\footnote{Note that, here we are using the term transformation in a different way of how it is used in Ref.\cite{d2017quantum}. Formally, we use the notion of transformation for the tests whose input and output systems are not the trivial system.}. Of special importance are \emph{closed circuits} which can  be viewed as a test without input and output. %
Therefore, the OPT must assign a valid joint probability (i.e., positive and normalized) to every combination of events in such closed circuits %
. For instance, in the circuit shown in Fig.~\ref{circuit}, the theory assigns a joint probability distribution $p(a,b,c)$ over all possible events. The assignment of all these probabilities is known as a \emph{probability rule}.

\begin{figure}
    \centering
    \begin{tikzpicture}[
  node distance=1cm and 1.5cm,
  level 1/.style={sibling distance=15mm},
  level 2/.style={sibling distance=10mm},
  arrow/.style={<-, >=latex}
]
        \node[draw, shape=semicircle, shape border rotate=180] (1) {$\{A^a\}_{a \in |A|}$};
        \node [style=rectangle] (5) at (0.2, 0.8) {2};
        \node[draw, shape=diamond, minimum size=1cm] (6) at (0,2.5) {$\{B^b\}_{b \in |B|}$};
        \node [style=rectangle] (10) at (0.2,4) {3};
        \node[draw, shape=semicircle] (11) at (0,5) {$\{C^c\}_{c \in |C|}$};
        \draw[arrow] (11) -- (6);
        \draw[arrow] (6) -- (1);
    \end{tikzpicture}
    \caption{Closed circuit with a preparation, a transformation and an observation.}
    \label{circuit}
\end{figure}

We also recall the notion of causality as defined in Ref.~\cite{d2017quantum}. The principle of causality corresponds to the idea of ``no signaling from the future'' or ``no signaling without interaction''. In physical terms, the outcome of a test can only depend on events within its past light cone. In the OPT framework, this principle is mathematically formalized as follows: causality requires that the probabilities of preparations' outcomes are independent of the choice of observations. That is, the OPT satisfies the causality principle of Ref.~\cite{d2017quantum} if the marginal distribution of any preparation do not change depending on the observation used to close the circuit. Remarkably, this notion of causality is equivalent to the existence of a unique deterministic effect\footnote{The generalized probability theories (GPT) formalism call this the unit effect.} (Lemma 5.1 in Ref.~\cite{d2017quantum}), $e_i$ for any system $i$. A deterministic effect is defined simply as the sum of all the events of an observation test; therefore, we say that an OPT is causal if for all system $i$,   $e_i=\sum_{a \in |A|} A^a \;\; \forall \; \{A^a\}$ (where $\{A^a\}$ constitutes an observation test of system $i$). As we show later, there are OPTs that satisfy this notion of causality while violating the Coordination principle. Therefore, we see this notion of causality as a necessary condition, but not sufficient to say that a theory is causal.

\subsection{Providing the concrete OPT}

We now provide the explicit OPT that allows one to achieve perfect coordination over four parties which share no common cause while satisfying NSI. Let us present it in two steps. First, define the different available tests, that is, the available preparations, transformations and observations, and, secondly, the probability rule to assign a valid probability to any closed circuit we can construct using the available tests.

\begin{definition}
    The available tests of the OPT $\mathcal{T}$, are:
    \begin{itemize}
        \item Four different tripartite preparations, see Fig.~\ref{preparations}.
        \item Six different two input-two output transformations, see Fig.~\ref{transformations}.
        \item Four different tripartite two-outcome observations, see Fig.~\ref{observations}.
    \end{itemize}
\end{definition}

\begin{figure}
    \centering
    \begin{tikzpicture}[
  node distance=1cm and 1.5cm,
  every node/.style={draw, circle, minimum size=6mm, inner sep=0pt},
  level 1/.style={sibling distance=15mm},
  level 2/.style={sibling distance=10mm},
  arrow/.style={<-, >=latex}
]
  \definecolor{lightblue}{RGB}{173,216,230}
  \definecolor{lightpink}{RGB}{255,182,193}

  \node[shape=semicircle, shape border rotate=180] (ABC) {$ \neg (A_4)$};
  \node[shape=semicircle, right=1cm of ABC,shape border rotate=180] (ABD) {$\neg (A_3)$};
  \node[shape=semicircle, right=1cm of ABD, shape border rotate=180] (ACD) {$\neg (A_2)$};
  \node[shape=semicircle, right=1cm of ACD, shape border rotate=180] (BCD) {$\neg (A_1)$};
  \node[style=rectangle, draw=none] (1) at (-0.5, 0.4) {\footnotesize 1};
  \node[style=rectangle, draw=none] (2) at (-0.1, 0.4) {\footnotesize 2};
  \node[style=rectangle, draw=none] (3) at (0.3, 0.4) {\footnotesize 3};
  \node[style=rectangle, draw=none] (4) at (1.65, 0.4) {\footnotesize 4};
  \node[style=rectangle, draw=none] (5) at (2.05, 0.4) {\footnotesize 5};
  \node[style=rectangle, draw=none] (6) at (2.45, 0.4) {\footnotesize 6};
  \node[style=rectangle, draw=none] (7) at (3.8, 0.4) {\footnotesize 7};
  \node[style=rectangle, draw=none] (8) at (4.2, 0.4) {\footnotesize 8};
  \node[style=rectangle, draw=none] (9) at (4.6, 0.4) {\footnotesize 9};
  \node[style=rectangle, draw=none] (10) at (5.88, 0.4) {\footnotesize 10};
  \node[style=rectangle, draw=none] (11) at (6.33, 0.4) {\footnotesize 11};
  \node[style=rectangle, draw=none] (12) at (6.78, 0.4) {\footnotesize 12};
  \draw[arrow] (-0.4, 1.1) -- (-0.4, 0.2);
  \draw[arrow] (-0, 1.1) -- (-0, 0.2);
  \draw[arrow] (0.4, 1.1) -- (0.4, 0.2);
  \draw[arrow] (1.75, 1.1) -- (1.75, 0.2);
  \draw[arrow] (2.15, 1.1) -- (2.15, 0.2);
  \draw[arrow] (2.55, 1.1) -- (2.55, 0.2);
  \draw[arrow] (3.9, 1.1) -- (3.9, 0.2);
  \draw[arrow] (4.3, 1.1) -- (4.3, 0.2);
  \draw[arrow] (4.7, 1.1) -- (4.7, 0.2);
  \draw[arrow] (6.05, 1.1) -- (6.05, 0.2);
  \draw[arrow] (6.5, 1.1) -- (6.5, 0.2);
  \draw[arrow] (6.95, 1.1) -- (6.95, 0.2);

\end{tikzpicture}
    \caption{Different preparations available within the OPT $\mathcal{T}$. Different types of systems are labeled by a number next to the wire.}
    \label{preparations}
\end{figure}

\begin{figure}
    \centering
    \begin{tikzpicture}[
  node distance=1cm and 1.5cm,
  every node/.style={draw, circle, minimum size=6mm, inner sep=0pt},
  level 1/.style={sibling distance=15mm},
  level 2/.style={sibling distance=10mm},
  arrow/.style={<-, >=latex}
]
  \definecolor{lightblue}{RGB}{173,216,230}
  \definecolor{lightpink}{RGB}{255,182,193}

  \node[shape=diamond] (AB) {$ \neg (A_3A_4)$};
  \node[shape=diamond] at (2.5, 0) (AC) {$\neg (A_2A_4)$};
  \node[shape=diamond] at (5, 0) (AD) {$\neg (A_2A_3)$};
  \node[shape=diamond] at (0.0, -2.5) (BC) {$\neg (A_1A_4)$};
  \node[shape=diamond] at (2.5, -2.5) (BD) {$\neg (A_1A_3)$};
  \node[shape=diamond] at (5, -2.5) (CD) {$\neg (A_1A_2)$};
  \node[style=rectangle, draw=none] (13) at (-0.75, 0.4) {\footnotesize 13};
  \node[style=rectangle, draw=none] (16) at (0.65, 0.4) {\footnotesize 16};
  \node[style=rectangle, draw=none] (14) at (1.75, 0.4) {\footnotesize 14};
  \node[style=rectangle, draw=none] (19) at (3.15, 0.4) {\footnotesize 19};
  \node[style=rectangle, draw=none] (15) at (4.25, 0.4) {\footnotesize 15};
  \node[style=rectangle, draw=none] (22) at (5.65, 0.4) {\footnotesize 22};
  \node[style=rectangle, draw=none] (1) at (-0.75, -0.4) {\footnotesize 1};
  \node[style=rectangle, draw=none] (4) at (0.65, -0.4) {\footnotesize 4};
  \node[style=rectangle, draw=none] (2) at (1.75, -0.4) {\footnotesize 2};
  \node[style=rectangle, draw=none] (7) at (3.15, -0.4) {\footnotesize 7};
  \node[style=rectangle, draw=none] (5) at (4.25, -0.4) {\footnotesize 5};
  \node[style=rectangle, draw=none] (8) at (5.65, -0.4) {\footnotesize 8};
  \node[style=rectangle, draw=none] (17) at (-0.75, -2.1) {\footnotesize 17};
  \node[style=rectangle, draw=none] (20) at (0.65, -2.1) {\footnotesize 20};
  \node[style=rectangle, draw=none] (18) at (1.75, -2.1) {\footnotesize 18};
  \node[style=rectangle, draw=none] (23) at (3.15, -2.1) {\footnotesize 23};
  \node[style=rectangle, draw=none] (21) at (4.25, -2.1) {\footnotesize 21};
  \node[style=rectangle, draw=none] (24) at (5.65, -2.1) {\footnotesize 24};
  \node[style=rectangle, draw=none] (3) at (-0.75, -2.9) {\footnotesize 3};
  \node[style=rectangle, draw=none] (10) at (0.65, -2.9) {\footnotesize 10};
  \node[style=rectangle, draw=none] (6) at (1.75, -2.9) {\footnotesize 6};
  \node[style=rectangle, draw=none] (11) at (3.15, -2.9) {\footnotesize 11};
  \node[style=rectangle, draw=none] (9) at (4.25, -2.9) {\footnotesize 9};
  \node[style=rectangle, draw=none] (12) at (5.65, -2.9) {\footnotesize 12};
  \draw[arrow] (-0.7, 0.8) -- (AB);
  \draw[arrow] (0.7, 0.8) -- (AB);
  \draw[arrow] (AB) -- (-0.7, -0.8);
  \draw[arrow] (AB) -- (0.7, -0.8);
  \draw[arrow] (-0.7, -1.7) -- (BC);
  \draw[arrow] (0.7, -1.7) -- (BC);
  \draw[arrow] (BC) -- (-0.7, -3.3);
  \draw[arrow] (BC) -- (0.7, -3.3);
  \draw[arrow] (1.8, 0.8) -- (AC);
  \draw[arrow] (3.2, 0.8) -- (AC);
  \draw[arrow] (AC) -- (1.8, -0.8);
  \draw[arrow] (AC) -- (3.2, -0.8);
  \draw[arrow] (4.3, 0.8) -- (AD);
  \draw[arrow] (5.7, 0.8) -- (AD);
  \draw[arrow] (AD) -- (4.3, -0.8);
  \draw[arrow] (AD) -- (5.7, -0.8);
  \draw[arrow] (1.8, -1.7) -- (BD);
  \draw[arrow] (3.2, -1.7) -- (BD);
  \draw[arrow] (BD) -- (1.8, -3.3);
  \draw[arrow] (BD) -- (3.2, -3.3);
  \draw[arrow] (4.3, -1.7) -- (CD);
  \draw[arrow] (5.7, -1.7) -- (CD);
  \draw[arrow] (CD) -- (4.3, -3.3);
  \draw[arrow] (CD) -- (5.7, -3.3);

\end{tikzpicture}
    \caption{Different transformations available within the OPT $\mathcal{T}$. Different types of systems are labeled by a number next to the wire.}
    \label{transformations}
\end{figure}

\begin{figure}
    \centering
    \begin{tikzpicture}[
  node distance=1cm and 1.5cm,
  every node/.style={draw, circle, minimum size=6mm, inner sep=0pt},
  level 1/.style={sibling distance=15mm},
  level 2/.style={sibling distance=10mm},
  arrow/.style={<-, >=latex}
]
  \definecolor{lightblue}{RGB}{173,216,230}
  \definecolor{lightpink}{RGB}{255,182,193}

  \node[draw, shape=semicircle] at (0, 1.35) (A) {$A_1^{a_1}$};
  \node[draw, shape=semicircle, right=1.cm of A] (B) {$A_2^{a_2}$};
  \node[draw, shape=semicircle, right= 1.05cm of B] (C) {$A_3^{a_3}$};
  \node[draw, shape=semicircle, right=1.1cm of C] (D) {$A_4^{a_4}$};
  \node[style=rectangle, draw=none] (1) at (-0.55, 0.55) {\footnotesize 13};
  \node[style=rectangle, draw=none] (2) at (-0.15, 0.55) {\footnotesize 14};
  \node[style=rectangle, draw=none] (3) at (0.25, 0.55) {\footnotesize 15};
  \node[style=rectangle, draw=none] (4) at (1.6, 0.55) {\footnotesize 16};
  \node[style=rectangle, draw=none] (5) at (2.0, 0.55) {\footnotesize 17};
  \node[style=rectangle, draw=none] (6) at (2.4, 0.55) {\footnotesize 18};
  \node[style=rectangle, draw=none] (7) at (3.75, 0.55) {\footnotesize 19};
  \node[style=rectangle, draw=none] (8) at (4.15, 0.55) {\footnotesize 20};
  \node[style=rectangle, draw=none] (9) at (4.55, 0.55) {\footnotesize 21};
  \node[style=rectangle, draw=none] (10) at (5.88, 0.55) {\footnotesize 22};
  \node[style=rectangle, draw=none] (11) at (6.33, 0.55) {\footnotesize 23};
  \node[style=rectangle, draw=none] (12) at (6.78, 0.55) {\footnotesize 24};
  \draw[arrow] (-0.4, 1.1) -- (-0.4, 0.2);
  \draw[arrow] (-0, 1.1) -- (-0, 0.2);
  \draw[arrow] (0.4, 1.1) -- (0.4, 0.2);
  \draw[arrow] (1.75, 1.1) -- (1.75, 0.2);
  \draw[arrow] (2.15, 1.1) -- (2.15, 0.2);
  \draw[arrow] (2.55, 1.1) -- (2.55, 0.2);
  \draw[arrow] (3.9, 1.1) -- (3.9, 0.2);
  \draw[arrow] (4.3, 1.1) -- (4.3, 0.2);
  \draw[arrow] (4.7, 1.1) -- (4.7, 0.2);
  \draw[arrow] (6.05, 1.1) -- (6.05, 0.2);
  \draw[arrow] (6.5, 1.1) -- (6.5, 0.2);
  \draw[arrow] (6.95, 1.1) -- (6.95, 0.2);

\end{tikzpicture}
    \caption{Different observations available within the OPT $\mathcal{T}$. Different types of systems are labeled by a number next to the wire. Recall all observations are binary, $a_i \in \{0,1\}$.}
    \label{observations}
\end{figure}

Note that when we say that there are four different tripartite preparations we refer to preparations that yield different types of systems (analogously for transformations and measurements). For simplicity, we assign a number to each different type of system. Connecting the available tests, one can construct different closed circuits. One example is shown in Fig.~\ref{fig:circuit-types}, where all the available tests of the theory are used. Remarkably, such causal structure coincides with that of the \hyperref[2-layer-tetrahedron]{two-layer tetrahedron}. Tracing out\footnote{We use the term ``trace out'' to refer to the fact that we are not interested in the value obtained in the observation test of certain group of systems. We also refer to this as directly tracing out the systems.%
} some of the observations in that circuit, we also obtain closed circuits. We refer to a set of observations that has the same causal past as in the \hyperref[2-layer-tetrahedron]{two-layer tetrahedron} as a \emph{tetrahedron-embeddable} set. For example, one case with only two observations that are jointly tetrahedron-embeddable is illustrated in Fig.~\ref{fig:S1}. Aditionally, closed circuits where the observations have a different causal past than the one in the \hyperref[2-layer-tetrahedron]{two-layer tetrahedron} (i.e., observations which are \emph{not tetrahedron-embeddable}) are also possible within this OPT. One such example is represented in Fig.~\ref{noninjectable_singleton}, where we only have one observation test that has two different preparations of the same type in its causal past.

\begin{figure*}
    \centering
    \begin{tikzpicture}[
  node distance=1cm and 1.5cm,
  every node/.style={draw, circle, minimum size=6mm, inner sep=0pt},
  level 1/.style={sibling distance=15mm},
  level 2/.style={sibling distance=10mm},
  arrow/.style={<-, >=latex}
]
  \definecolor{lightblue}{RGB}{173,216,230}
  \definecolor{lightpink}{RGB}{255,182,193}

  \node[draw, shape=semicircle] (A) {$A_1^{a_1}$};
  \node[draw, shape=semicircle, right=0.8cm of A] (B) {$A_2^{a_2}$};
  \node[draw, shape=semicircle, right= 0.8cm of B] (C) {$A_3^{a_3}$};
  \node[draw, shape=semicircle, right=0.8cm of C] (D) {$A_4^{a_4}$};

  \node[shape=diamond, below left=1.3cm and 1.6cm of A] (AB) {$ \neg (A_3A_4)$};
  \node[shape=diamond, right=0.5cm of AB] (AC) {$\neg (A_2A_4)$};
  \node[shape=diamond, right=0.5cm of AC] (AD) {$\neg (A_2A_3)$};
  \node[shape=diamond, right=0.5cm of AD] (BC) {$\neg (A_1A_4)$};
  \node[shape=diamond, right=0.5cm of BC] (BD) {$\neg (A_1A_3)$};
  \node[shape=diamond,  right=0.5cm of BD] (CD) {$\neg (A_1A_2)$};

  \node[shape=semicircle, below=3.8cm of A, shape border rotate=180] (ABC) {$ \neg (A_4)$};
  \node[shape=semicircle, below=3.8cm of B,shape border rotate=180] (ABD) {$\neg (A_3)$};
  \node[shape=semicircle, below=3.8cm of C, shape border rotate=180] (ACD) {$\neg (A_2)$};
  \node[shape=semicircle, below=3.8cm of D, shape border rotate=180] (BCD) {$\neg (A_1)$};
  \node[style=rectangle, draw=none] (1) at (-0.6, -3.9) {\footnotesize 1};
  \node [style=rectangle, draw=none] (2) at (-0.15, -3.9) {\footnotesize 2};
  \node [style=rectangle, draw=none] (3) at (0.3, -3.9) {\footnotesize 3};
  \node [style=rectangle, draw=none] (4) at (1.1, -3.95) {\footnotesize 4};
  \node [style=rectangle, draw=none] (5) at (1.8, -3.9) {\footnotesize 5};
  \node [style=rectangle, draw=none] (6) at (2.3, -3.9) {\footnotesize 6};
  \node [style=rectangle, draw=none] (7) at (3, -3.95) {\footnotesize 7};
  \node [style=rectangle, draw=none] (8) at (3.7, -3.9) {\footnotesize 8};
  \node [style=rectangle, draw=none] (9) at (4.2, -3.9) {\footnotesize 9};
  \node [style=rectangle, draw=none] (10) at (5.2, -3.9) {\footnotesize 10};
  \node [style=rectangle, draw=none] (11) at (5.65, -3.9) {\footnotesize 11};
  \node [style=rectangle, draw=none] (12) at (6.3, -3.9) {\footnotesize 12};
  \node[style=rectangle, draw=none] (13) at (-0.7, -0.4) {\footnotesize 13};
  \node [style=rectangle, draw=none] (14) at (-0.25, -0.45) {\footnotesize 14};
  \node [style=rectangle, draw=none] (15) at (0.65, -0.4) {\footnotesize 15};
  \node [style=rectangle, draw=none] (16) at (1.5, -0.4) {\footnotesize 16};
  \node [style=rectangle, draw=none] (17) at (2, -0.4) {\footnotesize 17};
  \node [style=rectangle, draw=none] (18) at (2.73, -0.2) {\footnotesize 18};
  \node [style=rectangle, draw=none] (19) at (3.3, -0.45) {\footnotesize 19};
  \node [style=rectangle, draw=none] (20) at (3.66, -0.4) {\footnotesize 20};
  \node [style=rectangle, draw=none] (21) at (4.3, -0.4) {\footnotesize 21};
  \node [style=rectangle, draw=none] (22) at (5.25, -0.45) {\footnotesize 22};
  \node [style=rectangle, draw=none] (23) at (5.6, -0.4) {\footnotesize 23};
  \node [style=rectangle, draw=none] (24) at (6.5
  , -0.4) {\footnotesize 24};
  \foreach \s/\d in {A/AB, A/AC, A/AD, B/AB, B/BC, B/BD, C/AC, C/BC, C/CD, D/AD, D/BD, D/CD} {\draw[arrow] (\s) -- (\d);}
  \foreach \s/\d in {AB/ABC, AB/ABD, AC/ABC, AC/ACD, AD/ABD, AD/ACD, BC/ABC, BC/BCD, BD/ABD, BD/BCD, CD/ACD, CD/BCD} {\draw[arrow] (\s) -- (\d);}

\end{tikzpicture}
    \caption{Closed circuit using all components of the OPT. Note that it presents the same causal structure as the one described by the \hyperref[2-layer-tetrahedron]{two-layer tetrahedron}.}
    \label{fig:circuit-types}
\end{figure*}

\begin{figure*}
    \centering
\begin{tikzpicture}[
  node distance=1cm and 1.5cm,
  every node/.style={draw, circle, minimum size=6mm, inner sep=0pt},
  level 1/.style={sibling distance=15mm},
  level 2/.style={sibling distance=10mm},
  arrow/.style={<-, >=latex}
]
  \definecolor{lightblue}{RGB}{173,216,230}
  \definecolor{lightpink}{RGB}{255,182,193}

  \node[draw, shape=semicircle] (A) {$A_1^{a_1}$};
  \node[draw, shape=semicircle, right=0.8cm of A] (B) {$A_2^{a_2}$};

  \node[shape=diamond, below left=1.3cm and 2.5cm of A] (AC) {$\neg (A_2A_4)$};
  \node[shape=diamond, right=0.5cm of AC] (AD) {$\neg (A_2A_3)$};
  \node[shape=diamond, right=0.5cm of AD] (AB) {$\neg (A_3A_4)$};
  \node[shape=diamond, right=0.5cm of AB] (BC) {$\neg (A_1A_4)$};
  \node[shape=diamond, right=0.5cm of BC] (BD) {$\neg (A_1A_3)$};

  \node[shape=semicircle, below=3.8cm of A, shape border rotate=180] (ABD) {$\neg (A_3)$};
  \node[shape=semicircle,
   left=0.8cm of ABD, shape border rotate=180] (ACD) {$\neg (A_2)$};
  \node[shape=semicircle, right=0.8cm of ABD, shape border rotate=180] (ABC) {$\neg (A_4)$};
  \node[shape=semicircle, right=0.8cm of ABC, shape border rotate=180] (BCD) {$\neg (A_1)$};

  \foreach \s/\d in {A/AB, A/AC, A/AD, B/AB, B/BC, B/BD} {\draw[arrow] (\s) -- (\d);}

  \foreach \s/\d in {AB/ABC, AB/ABD, AC/ABC, AC/ACD, AD/ABD, AD/ACD, BC/ABC, BC/BCD, BD/ABD, BD/BCD} {\draw[arrow] (\s) -- (\d);}

\end{tikzpicture}
    \caption{Closed circuit of a pair of observations that are tetrahedron-embeddable. Traced out systems and type numbers are omitted for simplicity}
    \label{fig:S1}
\end{figure*}

\begin{figure*}
    \centering
\begin{tikzpicture}[
  node distance=1cm and 1.5cm,
  every node/.style={draw, circle, minimum size=6mm, inner sep=0pt},
  level 1/.style={sibling distance=15mm},
  level 2/.style={sibling distance=10mm},
  arrow/.style={<-, >=latex}
]
  \definecolor{lightblue}{RGB}{173,216,230}
  \definecolor{lightpink}{RGB}{255,182,193}

  \node[draw, shape=semicircle] (A) {$A_1^{a_1}$};

  \node[shape=diamond, below=1.3cm of A] (AD) {$\neg (A_2A_3)$};
   \node[shape=diamond, left=0.5cm of AD] (AC) {$\neg (A_2A_4)$};
  \node[shape=diamond, right=0.5cm of AD] (AB) {$\neg (A_3A_4)$};

  \node[shape=semicircle, below right=3.8cm and 0.5cm of A, shape border rotate=180] (ABD) {$\neg (A_3)$};
  \node[shape=semicircle,
   left=0.8cm of ABD, shape border rotate=180] (ACD) {$\neg (A_2)$};
  \node[shape=semicircle, right=0.8cm of ABD, shape border rotate=180] (ABC) {$\neg (A_4)$};
  \node[shape=semicircle, left=0.8cm of ACD, shape border rotate=180] (ABC2) {$\neg (A_4)$};

  \foreach \s/\d in {A/AB, A/AC, A/AD} {\draw[arrow] (\s) -- (\d);}

  \foreach \s/\d in {AB/ABC, AB/ABD, AC/ABC2, AC/ACD, AD/ABD, AD/ACD} {\draw[arrow] (\s) -- (\d);}

\end{tikzpicture}
    \caption{A closed circuit for a single observation whose causal past differs from that in the \hyperref[2-layer-tetrahedron]{two-layer tetrahedron}, i.e., which is not tetrahedron-embeddable. The possibility of such non–tetrahedron-embeddable single observations arises because, in the \hyperref[2-layer-tetrahedron]{two-layer tetrahedron}, each preparation can influence an observation via two distinct paths. Traced out systems and type numbers are omitted for simplicity.}
\label{noninjectable_singleton}
\end{figure*}

Importantly, the set of closed circuits that can be constructed within this OPT admits a precise characterization: every closed circuit can contain any number of independent sets of observations (i.e., sets that do not share common past). These sets may contain some observations that are jointly tetrahedron-embeddable and/or some observations that are not tetrahedron-embeddable (even when considered individually). 

To prove that all closed circuits are of the previously specified form, it suffices to show that there is no possibility of constructing a closed circuit which contains non-independent sets (of observations) that are tetrahedron-embeddable separately while not being jointly tetrahedron-embeddable. In other words, we need to prove that any two non-independent sets of tetrahedron-embeddable observations must be themselves jointly tetrahedron-embeddable.

We show it in two steps: first, we consider the most elemental case of having two single observations that are tetrahedron-embeddable and not independent and second, we generalize to sets of larger size.

\begin{proposition}
    Within $\mathcal{T}$, if two different observations are not independent (in that they share a common cause) and both are individually tetrahedron-embeddable, then they must be jointly tetrahedron-embeddable.
    \label{singletons_proposition}
\end{proposition}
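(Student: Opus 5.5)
The argument is purely combinatorial and uses only the type structure of $\mathcal{T}$: every test has systems of fixed, pairwise distinct types (labels $1,\dots,24$), and sequential composition requires matching types. A single observation $A_i$ takes three input systems whose types determine which transformations can feed it. For instance, $A_1$ takes types $13,14,15$, and these are produced only by $\neg(A_3A_4)$, $\neg(A_2A_4)$ and $\neg(A_2A_3)$, respectively. Each transformation in turn takes two inputs of types produced by exactly one preparation each.

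Call two occurrences of tests \emph{identified} when they are the same node of the closed circuit. Then an observation is tetrahedron-embeddable exactly when, in its causal past, any two transformations that require an output of the same preparation type receive it from the same preparation occurrence. Figure~\ref{noninjectable_singleton} shows how this can fail: the two type-$1$ and type-$2$ inputs could a priori come from two distinct copies of $\neg(A_4)$.

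\textbf{Step 1: enumerate the cases.} I will consider two distinct tetrahedron-embeddable observations $O$ and $O'$ that share a common cause. Two instances of the same observation type $A_i$ cannot share a common cause. Their inputs would require the same transformation types, each transformation needs its own outputs consumed once, and any shared preparation would have to feed the same input type twice, which is impossible since each output wire is used once. So $O=A_i$ and $O'=A_j$ with $i\neq j$.

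\textbf{Step 2: locate the shared node.} The shared common cause is either a transformation or a preparation.
\begin{itemize}
\item If it is a transformation, it must be $\neg(A_kA_l)$ with $\{k,l\}=\{1,2,3,4\}\setminus\{i,j\}$, since this is the only transformation with outputs feeding both $A_i$ and $A_j$. Its two input preparations are then shared as well.
\item If it is a preparation, it must be $\neg(A_m)$ with $m\notin\{i,j\}$. Its three output wires feed three distinct transformation types, and it is the only preparation producing the relevant wire type.
\end{itemize}

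\textbf{Step 3: propagate identifications.} I will show that, for every preparation type $\neg(A_m)$ with $m\neq i,j$ lying in the past of both observations, the occurrence used by $O$ coincides with the one used by $O'$. The key point is that each preparation $\neg(A_m)$ emits one system of each of its three types. Every transformation in the past of $A_i$ or $A_j$ that needs a given type consumes that unique output wire. Individual embeddability already forces all of $O$'s uses of $\neg(A_m)$ onto a single occurrence, and likewise for $O'$.

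Consider the two preparations $\neg(A_k)$ and $\neg(A_l)$, where $\{k,l\}=\{1,2,3,4\}\setminus\{i,j\}$. Both lie in the common past, and the transformation $\neg(A_kA_l)$ is shared, so I must show they are each a single occurrence. If $\neg(A_kA_l)$ is shared, its inputs come from single occurrences of $\neg(A_k)$ and $\neg(A_l)$. By individual embeddability of $O$, every other use of $\neg(A_k)$ in $O$'s past comes from that same occurrence, and the same holds for $O'$. Hence the identification spreads to all common preparations. If instead a preparation $\neg(A_m)$ is shared, the same propagation is carried out through $O$'s and $O'$'s individual consistency. The remaining subcase is that $O$ and $O'$ each use \emph{different} occurrences of the other common preparation while still sharing $\neg(A_m)$. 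Here I will show the type constraints force them through a common transformation fed by both preparations, namely $\neg(A_iA_j)$'s complement pair, so the identification again propagates.

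\textbf{Expected obstacle.} The main difficulty is this last subcase, where the two observations share only one preparation and a priori use different copies of the other. The approach I will try first is to follow the wire types explicitly and show that such a configuration would require some output wire of a preparation to be consumed twice. That is forbidden, since each wire connects to exactly one input. This yields joint tetrahedron-embeddability.
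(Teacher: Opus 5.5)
Your proposal is correct and is essentially the paper's argument. Same-type pairs are excluded because each test has a single output of each system type. For $A_i,A_j$ with $i\neq j$, the subcase you flag as the main obstacle closes exactly as you anticipate: the shared preparation has a unique output wire into $\neg(A_kA_l)$, so both observations must pass through the same $\neg(A_kA_l)$ and hence use the same copy of the other preparation. The paper states this contrapositively, as the need to broadcast system type $4$ to two copies of $\neg(A_3A_4)$. Your separate treatment of a shared transformation, and your explicit use of individual embeddability (one occurrence of each preparation in each observation's past), only spell out bookkeeping that the paper leaves implicit.
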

\begin{proof}
We first note that one cannot construct a closed circuit containing two tetrahedron-embeddable observations of the same type, say $A_i$, whose causal pasts are distinct, yet which are not independent. Such a configuration would necessarily require the broadcasting of nonclassical information from the preparations. This follows directly from the fact that each preparation and transformation has only a single output (and input in the case of transformations) of each system type.

Let us then consider the case of two distinct observations, $A_i$ and $A_j$ with $i\neq j$. For a matter of simplicity, we consider $i=1$ and $j=2$, noting that every distinct pair $(i,j)$ is analogous. Since both observations are assumed to be tetrahedron-embeddable, each has three preparations in its causal past, concretely $A_1$ has $\neg (A_2), \neg (A_3), \neg (A_4)$ and $A_2$ has $\neg (A_1), \neg (A_3), \neg (A_4)$. Therefore, they may share at most two of them, i.e., their past coincide at most in $\neg (A_3)$ and $\neg (A_4)$. Hence, there are two possibilities for them to be non-independent: either they share both preparations or only one of them. In the former case, they are jointly tetrahedron-embeddable, as the closed circuit would be the one illustrated in Fig.~\ref{fig:S1}. However, the latter case cannot happen, as it would entail the cloning of nonclassical information. To see this, consider, without loss of generality, that $A_1$ and $A_2$ only share $\neg (A_3)$. In that case, the closed circuit must contain two copies of the preparation, $\neg (A_4)$. Then, there must be two transformations of the type $\neg (A_3A_4)$, so that the different copies of $\neg (A_4)$ are in the past of $A_1$ and $A_2$ respectively. Hence, as there is a unique preparation $\neg (A_3)$ in the past of both observations, one would need to broadcast the nonclassical information (concretely, the system type $4$)  so that both transformations of the type $\neg (A_3A_4)$ have the required inputs, proving that this case is not possible.

\end{proof}

Once the elementary case of single observations is established, we turn to the case of larger tetrahedron-embeddable sets. The same reasoning applies: if two tetrahedron-embeddable sets are non-independent, then their union must also be jointly tetrahedron-embeddable. %

\begin{proposition}
     Within $\mathcal{T}$, if two distinct non-independent sets of observations $S_1, S_2$ are tetrahedron-embeddable separately, then $S_1 \cup S_2$ is also tetrahedron-embeddable.
     \label{prop:closed_circuits}
\end{proposition}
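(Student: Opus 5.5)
The plan is to reduce Proposition~\ref{prop:closed_circuits} to the pairwise case of Proposition~\ref{singletons_proposition}, together with a "no-broadcasting" argument showing that the preparations and transformations are uniquely determined. Write $S_1\cup S_2$ and consider the closed circuit that contains it. Since $S_1$ and $S_2$ are non-independent, there exist $X\in S_1$ and $Y\in S_2$ sharing a common cause, meaning some preparation or transformation lies in both of their pasts. If $X$ and $Y$ are the same observation, the two sets already overlap. Otherwise, Proposition~\ref{singletons_proposition} shows that $\{X,Y\}$ is jointly tetrahedron-embeddable.

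The main step is then to propagate joint embeddability from this connecting pair to every other pair. Take any $X'\in S_1$ and $Y'\in S_2$. Because $S_1$ is tetrahedron-embeddable, $X$ and $X'$ share exactly the preparations they would share in the two-layer tetrahedron; for distinct types these are the two common preparations $\neg(A_k),\neg(A_l)$, and the same holds for $Y,Y'$ and $X,Y$. I will argue that each preparation type appears at most once in the past of the whole union. The key observation is that any two observations in $S_1\cup S_2$ lie in a single connected component of the "shares a preparation" relation. Since each preparation and each transformation has only one output per system type, a preparation of type $\neg(A_m)$ that feeds two observations through transformations must be the same physical copy for all of them; otherwise, as in the proof of Proposition~\ref{singletons_proposition}, one would need two copies of some intermediate transformation fed by a single system wire, which amounts to broadcasting nonclassical information.

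With that in hand, I will show that the pasts of $X'$ and $Y'$ intersect in a tetrahedron-consistent way. Every preparation in the past of $X'$ or $Y'$ is the unique copy of its type appearing in the connected circuit, and likewise every transformation $\neg(A_kA_l)$ is unique, because its two inputs are unique preparations and each output wire of type $13$--$24$ can feed only one observation. The causal past of the union therefore matches a sub-DAG of the two-layer tetrahedron, which is exactly joint tetrahedron-embeddability. Two subtleties need care. First, $S_1\cup S_2$ may contain two observations of the same type; the first paragraph of the proof of Proposition~\ref{singletons_proposition} excludes this, since two such observations would need to share pasts while each has only one input wire of each type. Second, the connection between $X'$ and $Y'$ may be indirect, via a chain $X'\!-\!X\!-\!Y\!-\!Y'$. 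I will handle this by induction on the length of such chains: at each step, a shared preparation forces equality of the remaining preparations by the pairwise argument.

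I expect the main obstacle to be this chain and uniqueness step, namely ruling out a circuit in which $X'$ and $Y'$ share one preparation copy but use different copies of another. The first thing I would try is to replicate the "shared only one preparation" case analysis of Proposition~\ref{singletons_proposition}, applied to the pair $(X',Y')$ after first establishing, via the chain, that they share at least one preparation copy. When the chain passes through observations that have no preparation in common with $X'$ or $Y'$ directly, I would instead fall back on a counting argument: there are only four preparation types, and each observation uses three of them.
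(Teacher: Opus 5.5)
Your argument is correct but organized differently from the paper's. The paper first notes that $S_1\cup S_2$ contains at most one observation of each type. It then grows $S_1$ one element $s_2$ at a time, splitting into pair/pair and triplet/pair cases. In each case it re-runs a transformation-level no-broadcasting argument: the transformation linking $t\in S_1$ to the type of $s_2$ is fed by a preparation already fixed by $s_1$, so it is unique, and this forces the remaining preparation $\neg(A_i)$ to coincide. You instead use Proposition~\ref{singletons_proposition} as a black box to make every cross pair $(X',Y')$ jointly embeddable, and then pass from pairwise to joint embeddability. That step holds because every pair of observations of types $\neq m$ shares one copy of $\neg(A_m)$, and each transformation is then unique, being fed by a single output wire of a unique preparation. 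Your route avoids the case split on set sizes and confines the no-broadcasting reasoning to Proposition~\ref{singletons_proposition}; the paper's route makes the gluing of preparations explicit. Two points need tightening. First, the obstacle you single out ($X'$ and $Y'$ sharing one copy but not another) is exactly what Proposition~\ref{singletons_proposition} already excludes. The real work is producing \emph{some} shared copy, which your second-paragraph ``key observation'' only asserts. Second, that copy must be obtained one link at a time. If $Z$ is jointly embeddable with both $X'$ and $Z'$ and the three types are distinct, the fourth preparation type is shared by both pairs through $Z$'s single copy. Hence $X'$ and $Z'$ are non-independent and Proposition~\ref{singletons_proposition} applies. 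If types repeat, you get a non-independent same-type pair (unless two observations coincide), which the first paragraph of that proof excludes. Counting over the whole chain $X'\!-\!X\!-\!Y\!-\!Y'$ at once fails when the four types are distinct. The copy assignment of Table~\ref{tab:quantum_cut4} satisfies all three link constraints for $A_1\!-\!A_2\!-\!A_3\!-\!A_4$ while $A_1$ and $A_4$ share nothing. It is excluded only by no-broadcasting, which enters through the intermediate pair $(A_1,A_3)$.
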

\begin{proof}
    Note that an observation, $A_i$, can be jointly tetrahedron-embeddable with at most one observation of each different type, $A_j$ $(j\neq i)$ . The reason is that the causal past of $A_i$ contains three transformations that, respectively, can link it to at most one observation of each other type; otherwise, one would be forced to broadcast nonclassical information\footnote{Note that this argument can be made as well based on the impossibility of broadcasting the nonclassical information from the preparations.}. Together with the fact that two tetrahedron-embeddable observations of the same type must be independent (as explained in the proof of Proposition~\ref{singletons_proposition}), this implies that the union $S_1 \cup S_2$ contains at most one observation of each type.
    
    Also, note that the case where one of the sets is a subset of the other is trivial, hence let us consider the remaining possibilities. 

Now, as $S_1$ and $S_2$ are not independent, there must exist \emph{some} observation $s_1\in S_1$ that is non-independent with \emph{some} observation $s_2\in S_2\setminus S_1$. By Proposition~\ref{singletons_proposition}, the pair $(s_1,s_2)$ is therefore jointly tetrahedron-embeddable. However, for our purposes it is not enough to know that $s_2$ is jointly tetrahedron-embeddable with a single element $s_1$: we must show that $s_2$ is jointly tetrahedron-embeddable with the \emph{entire set} $S_1$. Establishing this intermediate step yields a recursive proof of Proposition~\ref{prop:closed_circuits}. Indeed, once $s_2$ is shown to be jointly tetrahedron-embeddable with all of $S_1$, we can enlarge the set by defining $S_1' = S_1 \cup \{s_2\}$ and repeat the same argument, iterating until $S_1$ grows to $S_1 \cup S_2$.

Hence, let us now prove such intermediate step, namely that $s_2$ is jointly tetrahedron-embeddable with $S_1$. This reduces to the following claim: if two distinct \emph{overlapping} sets of observations are tetrahedron-embeddable, namely the current set $S_1$ (which is updated at each iteration) and the pair $\{s_1,s_2\}$, then their union $S_1\cup\{s_1,s_2\}$ is also tetrahedron-embeddable.

Since $S_1\cup S_2$ contains at most four elements, and since the case $|S_1|=1$ is trivial (as $S_1$ would be a subset of $\{s_1,s_2\}$), only two nontrivial cases can arise in the recursive algorithm: either both overlapping sets are pairs, or one set is a triplet while the other is a pair. Importantly, in either of the two cases, proving that the union $S_1 \cup \{s_1,s_2\}$ is tetrahedron-embeddable requires showing that the tetrahedron-embeddable realizations of $S_1$ and of $\{s_1,s_2\}$ can be taken to share the \emph{same} underlying preparations. In other words, the preparations used to build the subcircuit influencing $\{s_1,s_2\}$ must coincide with those already used for $S_1$. We now treat the two cases separately:

\begin{adjustwidth}{0.1\columnwidth}{0pt} %
\noindent\textbf{Case 1: both sets are pairs.}
Assume both $S_1$ and $\{s_1,s_2\}$ are pairs and overlap in a single observation, say $s_1 = A_i$. Since $A_i$ appears in both circuits, its entire causal past is shared by both sets. In particular, the three preparations in the past of $A_i$ that are not $\neg(A_i)$ must coincide in the realizations of $S_1$ and of $\{s_1,s_2\}$.

\noindent Then, the only missing step is to show that the preparation $\neg(A_i)$ is also the same in both realizations. Let $t := S_1\setminus\{s_1\}$ denote the other element of the pair $S_1$. By construction, the causal past of $t$ contains three transformations, one of which links $t$ to an observation of the same type as $s_2$. This transformation must depend on one of the preparations already fixed above because $S_1$ is tetrahedron-embeddable. Equivalently, $s_2$ has such a transformation in its past which has to use the same fixed preparation (as $\{s_1,s_2\}$ is tetrahedron-embeddable). Then, as the fixed preparation cannot feed two different transformations of such type (one leading to $t$ and another leading to $s_2$) unless nonclassical information was broadcast, the relevant transformation must be the same in both realizations. This forces $\neg(A_i)$ to coincide as well. Hence, $S_1\cup\{s_1,s_2\}$ is tetrahedron-embeddable.

\noindent \textbf{Case 2: one set is a pair and the other a triplet.}
  Assume $S_1$ is a triplet and $\{s_1,s_2\}$ is a pair. By construction, the two sets overlap in exactly one observation, say $s_1=A_i$. As in Case~1, the presence of $A_i$ in both subcircuits forces the three preparations in its causal past to be identical for both realizations. That is, the three preparations distinct from $\neg (A_i)$ must be the same.

  \noindent It remains to show that both realizations use the \emph{same} $\neg (A_i)$ preparation. Let $S_1\setminus\{s_1\}=\{t,t'\}$ be the two other observations in the triplet. Applying the same argument as in the previous case to each of the pairs $\{s_1,t\}$ and $\{s_1,t'\}$, we conclude that both $t$ and $t'$ must share the preparation $\neg (A_i)$ with $s_2$. Therefore, all four preparations used to realize $S_1$ and $\{s_1,s_2\}$ are the same, and hence the union $S_1\cup\{s_1,s_2\}$ is tetrahedron-embeddable.
\end{adjustwidth}

\end{proof}

We now provide the probability rule that specifies the distributions generated by all possible closed circuits in the OPT $\mathcal{T}$.

\begin{definition}
    The probability rule of the OPT $\mathcal{T}$ is given by:
        \begin{itemize}
            \item Non-tetrahedron-embeddable observations: the first outcome occurs with probability 1 (and the second with probability 0). 
            \item Tetrahedron-embeddable sets of observations: the outcomes follow a shared random bit distribution, i.e., with probability $\tfrac{1}{2}$ all observations in the set yield the first outcome, and with probability $\tfrac{1}{2}$ all observations yield the second outcome. 
        \end{itemize}
\end{definition}

To see that the proposed probability rule is consistent, we have to check that it satisfies the principles of No-Signaling and Independence for all closed circuits. First, observe that even though observations that are not tetrahedron-embeddable may fail to be causally independent from other parties, the probability rule guarantees that their marginal distribution always factorizes from the rest of the parties: they output the first outcome with probability one regardless of their causal past. Then, the remaining part of any closed circuit consists of independent sets of tetrahedron-embeddable observations. For these, the principles of No-Signaling and Independence are ensured, since the two-layer tetrahedron causal structure imposes no factorization constraints on the probability distribution within each independent set. Moreover, it is straightforward to verify that the joint probability distribution factorizes across the partition into independent sets, in accordance with the prescribed rule.%

We remark that, as the probability rule yields the shared random bit distribution among the parties contained in each of the tetrahedron-embeddable sets of observations, the OPT $\mathcal{T}$ allows for perfect coordination in the causal structure of the two-layer tetrahedron (Fig.~\ref{fig:circuit-types}). Thus, this is a theory that allows for perfect coordination among four observations even though they do not share a common cause.

Finally, we show that the OPT $\mathcal{T}$ is causal in the sense of Ref.~\cite{d2017quantum}. This follows directly from the fact that the assigned probability rule is such that summing over any outcome yields the same marginal regardless of the observation associated to such outcome. This is equivalent to the fact that $\mathcal{T}$ possesses a unique deterministic effect for any system (recall that a deterministic effect is defined as the sum of all the events in an observation test).

\section{GHZ state requires a quantum common cause}
\label{fullyquantum_section}

In this section, we turn to the question of whether a genuinely quantum coordination task is achievable without a quantum common cause. To that end, note that we change the paradigm from having \emph{classical} observed nodes to \emph{quantum} observed nodes. This means that instead of producing a probability distribution, the parties (which now are described as quantum channels instead of measurements) produce a quantum state.

Previously, we defined perfect coordination as the task of outputting the probability distribution of a shared random bit. Then, the natural extension to the fully quantum case is to identify the idea of quantum perfect coordination as the task of producing a GHZ state,

\begin{equation}
    \ket{GHZ}_{A_1,...,A_N} = \frac{1}{\sqrt{2}}(\ket{0...0}+\ket{1...1}).
\end{equation}

Therefore, in this section, we provide a proof for the following theorem.

\begin{theorem}
    In quantum theory, obtaining a $\mathrm{GHZ}$ state requires a quantum common cause.
    \label{GHZtheorem}
\end{theorem}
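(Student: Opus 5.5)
The plan is to reduce the fully quantum statement to Theorem~\ref{th1_revisited} (technical version), using the extremality of the GHZ state to remove any classical shared randomness, and then to derive a noise-robust witness. First I fix the setting. The $N$ observed nodes are now quantum channels outputting systems $A_1,\dots,A_N$. The latent structure is any DAG in which no \emph{quantum} latent node is a common cause of all $N$ parties. On top of this we allow a classical source $\Lambda$, which may be a common cause of all of them, together with arbitrary intermediate transformations. By the argument of Lemma~\ref{lemma:multilayerG}, I would take the quantum part to be $g^*_N$ without loss of generality. The classical variable $\Lambda$ is copyable, so it may feed every source, transformation and observed channel.

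\emph{Analytic step.} Since $\Lambda$ is classical, the global output state decomposes as $\rho_{A_1\cdots A_N}=\sum_\lambda p(\lambda)\,\rho_\lambda$. Here $\rho_\lambda$ is the state produced by the network $g^*_N$ in which every node has its parameters fixed to $\lambda$, so $\rho_\lambda$ is itself producible in $g^*_N$ with no common cause at all. Because $\ket{GHZ}$ is pure, it is an extreme point of the state space. Hence $\rho=\ket{GHZ}\!\bra{GHZ}$ forces $\rho_\lambda=\ket{GHZ}\!\bra{GHZ}$ for every $\lambda$ with $p(\lambda)>0$.

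Fix one such $\lambda$ and append a computational-basis measurement to each observed channel. This turns each quantum observed node into a classical one without adding any common cause, so the resulting DAG lies in $\mathcal{G}_N$. On $\ket{GHZ}$, the $Z$-measurements yield exactly the shared random bit of Eq.~\eqref{randbit}. This contradicts Theorem~\ref{th1_revisited}, which proves Theorem~\ref{GHZtheorem} for perfect GHZ preparation.

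\emph{Noise-robust step.} The $Z$-correlations alone cannot give a robust witness once shared randomness is allowed, since a random shared bit already reproduces them classically. The witness must therefore also use a coherence term such as the $X^{\otimes N}$ parity, which is what pins $\rho_\lambda$ near GHZ. I would therefore look for an inequality of the form
\[
\alpha\sum_{i=1}^{N-1}\langle Z_iZ_{i+1}\rangle+\beta\,\langle X_1\cdots X_N\rangle-\gamma\,\langle Z_1\rangle\langle Z_N\rangle\le C_N .
\]
The first candidate route is to bound the GHZ fidelity of each $\rho_\lambda$ through a fidelity witness built from these terms. Then I would combine two facts: for each fixed $\lambda$, the $Z$-statistics obey Eq.~\eqref{Ineq_thN}; and a nonlinear product term does not survive mixing over $\lambda$. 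To handle the second fact, I would keep per-$\lambda$ bounds that are linear in the $\lambda$-conditioned correlators. Alternatively, I would replace $\langle Z_1\rangle\langle Z_N\rangle$ by a linear term, exploiting the fact that $\langle X^{\otimes N}\rangle$ close to $1$ forces $\langle Z_1\rangle$ close to $0$.

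The cleaner alternative is to build the quantum cut inflation of $g^*_N$ with $\Lambda$ shared by all copies, which is legitimate because $\Lambda$ is classical. The Heisenberg-picture operators would include both $Z$ and $X$ measurements. I would then solve the first-level NPA relaxation to extract the inequality and check that $\ket{GHZ}$ violates it for every $N$.

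The main obstacle is this last step: obtaining a bound that stays valid under classical mixing, in the sense of being convex or affine in the per-$\lambda$ statistics, yet is still violated by $\ket{GHZ}$. I expect the $X$-parity term is essential there and that its coefficient must be tuned against the $\cos(\pi/2(N-1))$ bound.
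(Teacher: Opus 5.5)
Your analytic step is exactly the paper's proof of the technical version of this theorem. You reduce to $g^*_{Q,N}$, use extremality of the pure GHZ state to force every $\rho_\lambda$ to be GHZ, and then $Z$-measure to obtain a shared random bit in $g^*_N$, contradicting the technical version of Theorem~\ref{theoremPRL}; this already establishes the statement, and your noise-robust sketch is extra (the paper proves a device-independent version separately, via a CHSH game between $A_1,A_2$ conditioned on the product of the other parties' outcomes, CHSH monogamy, and the cut-inflation lower bound on $\langle A_1^0A_N^0\rangle$). Note that your linear $Z$/$X$-parity witness can only work with trusted Pauli measurements, where your anticommutation bound $\langle Z_1\rangle_\lambda^2+\langle X_1\cdots X_N\rangle_\lambda^2\le 1$ plus concavity over $\lambda$ does go through; with untrusted settings, a shared $\lambda$ with outputs $(-1)^\lambda$ on $Z$ and $+1$ on $X$ reproduces every GHZ value appearing in it.
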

We first give a proof for the perfect GHZ state and later a noise-robust version in terms of a Bell-like inequality. The noise-robust version is mostly motivated by the idea that the perfect GHZ state is practically impossible to achieve in experiments and hence an inequality is needed to certify the existence of a quantum common cause.

\subsection{Ideal GHZ state}

Note that the results of Sec.~\ref{sec:perfect_correlation} directly rule out the possibility of producing the GHZ states in the scenarios considered there (when we only change the observed nodes from classical to quantum), that is, scenarios in which the parties do not have a shared common cause. This can be understood because if the GHZ states were producible in such scenarios, the parties could simply measure their corresponding qubits in the Z basis and obtain the distribution of a shared random bit, thereby becoming perfectly coordinated. This would contradict our previous impossibility result. However, the set of DAGs that we study in this section are those in which all the parties do not share a \emph{quantum} common cause. In other words, a shared \emph{classical} common cause among all parties is allowed, which makes the problem nontrivial. Let us define this set formally:

\begin{definition}
    We define $\mathcal{G}_{Q,N}$ as the set of DAGs with $N$ quantum observed nodes and only classical and/or quantum latent nodes such that the quantum-observed nodes do not all share a quantum common cause.
    \label{DefinitionGQ4}
\end{definition}

Hence, following the same proof strategy we used in the case of classical perfect coordination, we first need to find a particular DAG which can generate every state achievable by some DAG in $\mathcal{G}_{Q,N}$. Naturally, this particular DAG, which we term $g^*_{Q,N}$ for each value of $N$, consists of the same causal structure described in Definition~\ref{def:gNstar} (considering now quantum observed nodes) but including a classical common cause, i.e., shared randomness between all the parties. Formally,

\begin{definition}\label{def:gQNstar}For a given integer $N$, we define the DAG $\textit{g}^*_{Q,N}$ to be $(V_Q,E_Q)$, where $V_Q$ is the set of all nodes in $\textit{g}^*_{Q,N}$ and $E_Q \subseteq V_Q \times V_Q$ the set of directed edges, each oriented from the first node in the pair to the second. Given $(V,E)$ as those defined in Definition~\ref{def:gNstar}:
\begin{itemize}
    \item $V_Q = \hat{V}\cup\{\lambda\}$,
    \item $E_Q = E\cup\left\{(\lambda,A_i)|\; i\in\{1,2,3\dots,N\} \right\}$,
\end{itemize}
where $\hat{V}$ is defined by taking the nodes in $V$ and imposing that they be quantum nodes and $\lambda$ is a classical latent node.
\end{definition}

The particular case of four parties, i.e., $g^*_{Q,4}$, is illustrated in Fig.~\ref{quantum-tetrahedron}. The proof of the fact that this is the most general DAG is completely analogous to the one given for Lemma~\ref{lemma:multilayerG4} in Appendix~\ref{app:proof1} so we just state the following lemma for completeness.

\begin{lemma}
    Any quantum state $\rho$ produced in a DAG $g \in \mathcal{G}_{Q,N}$ can also be produced in $g_{Q,N}^*$
    \label{quantum-contained}
\end{lemma}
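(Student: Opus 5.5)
The plan is to show that any $g\in\mathcal{G}_{Q,N}$ can be simulated inside $g^*_{Q,N}$ by a sequence of transformations, each of which preserves the set of producible output states. This mirrors the argument for Lemma~\ref{lemma:multilayerG4}.

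First we split the latent nodes of $g$ into classical and quantum ones. Every classical latent node can be absorbed into the single global classical source $\lambda$. The reason is that classical information can be copied freely. So all classical sources, and the classical randomness used by any classical intermediate node, can be merged into one random variable $\lambda$, which we broadcast to every quantum-observed node $A_i$ and, through it, to any quantum node that needs it. Concretely, we condition every quantum channel on the relevant copy of $\lambda$. A classical intermediate node whose inputs include quantum systems is a measurement followed by classical post-processing. We rewrite it as a quantum channel (a measure-and-prepare map). This is allowed because it does not create a new quantum common cause: its outputs reach the same observed nodes as before. After this step the only classical node is $\lambda$, and every remaining node is quantum.

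Next we handle the quantum part, repeating the reduction of Appendix~\ref{app:proof1}. For each quantum latent node $v$ we let $S(v)$ be the set of observed nodes in its future. By hypothesis, $S(v)\neq\{A_1,\dots,A_N\}$ for every quantum source, and hence for every quantum node whose quantum ancestry starts at such sources. We purify every source and dilate every channel to an isometry. We then group the quantum sources by the label $\neg(S)$ of the complement of their future; tensor products of sources with the same label form one source. The sources with the largest futures, of cardinality $N-1$, sit in the bottom layer of $g^*_{Q,N}$. Sources whose futures are smaller are embedded by promoting them to a layer-$(N-1)$ source and letting the intermediate transformations pass their systems along unchanged (identity channels) down to the right observed nodes. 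Likewise, any intermediate quantum channel of $g$ acts on systems whose joint future is some set $T$. It is either placed in the node $\neg(T)$, or, when it needs inputs from sources not directly connected to that node, it is postponed to a later layer. This is possible because $g^*_{Q,N}$ contains every downward-closed path between subsets. Finally each $A_i$ absorbs any remaining processing into its own channel, controlled by $\lambda$.

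The main obstacle is the scheduling of intermediate quantum channels. We must show that each channel of $g$ can be placed at a node of $g^*_{Q,N}$ whose incoming systems include everything it needs, without violating the rule that a node $\neg(S)$ only reaches the parties outside $S$. The key observation is that a channel whose output reaches only the parties in $T$ can be postponed to node $\neg(T)$. All its inputs come from nodes whose futures contain $T$. Each such input therefore travels through a chain of supersets down to $T$, and along that chain it can be forwarded by identity channels. Topological ordering of $g$ guarantees there are no cycles, so the channels can be applied consistently in the order given by the DAG. Since $\lambda$ is the only extra ingredient and it is classical, the output state $\rho$ on $A_1\cdots A_N$ is reproduced exactly.
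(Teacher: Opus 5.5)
\textbf{Genuine gap: how you use $\lambda$.} In $g^*_{Q,N}$ the only children of $\lambda$ are the observed nodes $A_i$, and these are terminal. So no latent quantum node of $g^*_{Q,N}$ can depend on $\lambda$. Forwarding $\lambda$ to ``any quantum node that needs it'' and conditioning every quantum channel on a copy of $\lambda$ therefore simulates $g$ in a DAG with extra edges from $\lambda$ into the latent layers, not in $g^*_{Q,N}$.

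This case cannot be avoided. A classical node that is a common cause of all $N$ parties must have purely classical ancestry, since a quantum ancestor would be a quantum common cause; so it has to be put into $\lambda$. Yet it may feed quantum sources or transformations, e.g.\ shared randomness deciding which entangled states the bipartite sources emit.

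The missing idea is recorded in the caption of Fig.~\ref{quantum-tetrahedron} and uses the unbounded latent dimensions. Each source emits $\bigotimes_\lambda \rho_\lambda$ and each transformation applies $\bigotimes_\lambda \Phi_\lambda$ on separate registers. The $\lambda$-th registers then carry exactly the state of $g$ conditioned on $\lambda$. Each $A_i$, which does receive $\lambda$, applies its $\lambda$-conditioned channel to the $\lambda$-th registers and discards the rest, and averaging over $\lambda$ returns $\rho$. Treating every classical node whose future is not all $N$ parties as a quantum node lets you shrink $\lambda$ to the all-party classical nodes, but the latent descendants of those still need this trick.

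Apart from this, your treatment of the quantum nodes is the paper's construction in Appendix~\ref{app:proof1}. You label each node by its set of observed descendants, map it to the lattice node with that future, and route edges down chains of supersets with identity forwarding. This is what the paper's merging of equal labels and adding of subset edges does implicitly. Some smaller points:
\begin{itemize}
\item In the paper's notation the node with future $T$ is $\neg$ of the complement of $T$, not $\neg(T)$.
\item A channel with future $T$ can never be ``postponed to a later layer'' below that node, since it would lose access to part of $T$. Your key-observation paragraph, which forwards all inputs down to that node, is the correct version.
\item You never treat observed nodes that have children; the paper removes these first via Proposition~\ref{terminalN}. Your scheduling rule does extend to them, by placing $A_i$'s channel at the latent node whose future is $A_i$ together with its observed descendants.
\item A classical node with quantum inputs may be made quantum because its future is contained in that of its quantum parent, and hence is not all $N$ parties. ``Its outputs reach the same observed nodes as before'' does not by itself rule out creating a quantum common cause.
\end{itemize}
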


\begin{figure*}
    \centering
    \begin{tikzpicture}[
  node distance=1cm and 1.5cm,
  every node/.style={draw, circle, minimum size=6mm, inner sep=0pt},
  level 1/.style={sibling distance=15mm},
  level 2/.style={sibling distance=10mm},
  arrow/.style={<-, >=latex}
]
  \definecolor{lightblue}{RGB}{173,216,230}
  \definecolor{lightpink}{RGB}{255,182,193}
  \definecolor{emerald}{RGB}{150, 240, 150}

  \node[draw, rectangle,  fill=lightpink] (A) {$A_1$};
  \node[draw,  rectangle, fill=lightpink, right=0.8cm of A] (B) {$A_2$};
  \node[draw,  rectangle, fill=lightpink, right= 0.8cm of B] (C) {$A_3$};
  \node[draw,  rectangle, fill=lightpink, right=0.8cm of C] (D) {$A_4$};

  \node[fill=lightpink, below left=1.2cm and 1.55cm of A] (AB) {$\neg (A_3A_4)$};
  \node[fill=lightpink, right=0.5cm of AB] (AC) {$\neg (A_2A_4)$};
  \node[fill=lightpink, right=0.5cm of AC] (AD) {$\neg (A_2A_3)$};
  \node[fill=lightpink, right=0.5cm of AD] (BC) {$\neg (A_1A_4)$};
  \node[fill=lightpink, right=0.5cm of BC] (BD) {$\neg (A_1A_3)$};
  \node[fill=lightpink, right=0.5cm of BD] (CD) {$\neg (A_1A_2)$};

  \node[fill=lightpink, below=3.4cm of A] (ABC) {$\neg (A_4)$};
  \node[fill=lightpink, below=3.4cm of B] (ABD) {$\neg (A_3)$};
  \node[fill=lightpink, below=3.4cm of C] (ACD) {$\neg (A_2)$};
  \node[fill=lightpink, below=3.4cm of D] (BCD) {$\neg (A_1)$};

 \node[draw, fill=emerald, above right=0.5cm and 0.18cm of B] (lambda) {$\lambda$};

  \foreach \s/\d in {A/AB, A/AC, A/AD, B/AB, B/BC, B/BD, C/AC, C/BC, C/CD, D/AD, D/BD, D/CD, A/lambda, B/lambda, C/lambda, D/lambda} {\draw[arrow] (\s) -- (\d);}

  \foreach \s/\d in {AB/ABC, AB/ABD, AC/ABC, AC/ACD, AD/ABD, AD/ACD, BC/ABC, BC/BCD, BD/ABD, BD/BCD, CD/ACD, CD/BCD} {\draw[arrow] (\s) -- (\d);}

\end{tikzpicture}
    \caption{The most general scenario in which four quantum observed nodes do not share a quantum common cause which we name $g^*_{Q,4}$. Note that sending the classical global common cause to the parties instead of the sources is completely general as we are not putting bounds on the dimensions of the latent nodes.%
    }
    \label{quantum-tetrahedron}
\end{figure*}

Equipped with the previous definition and lemma, we can prove the technical version of Theorem~\ref{GHZtheorem}.

\begin{theoremrevisited}{GHZtheorem}\label{analyticalghz}
Consider a DAG $g$ $\in \mathcal{G}_{Q,N}$. Then, the $N-$partite GHZ state is not achievable in $g$.
\end{theoremrevisited}

\begin{proof}
    By Lemma~\ref{quantum-contained}, we just need to show that the GHZ state is not achievable in the DAG $g_{Q,N}^*$. Now, recall that, as mentioned, the results of Sec.~\ref{sec:perfect_correlation} directly rule out the possibility of producing the GHZ states in the most general scenario (but with quantum observed nodes) considered there, i.e., $g_{N}^*$ (with quantum observed nodes). Then, by purity of the GHZ state, it directly follows that it is not producible in $g_{Q,N}^*$.

    To see this more concretely, note that $g_{Q,N}^*$ differs from $g_{N}^*$ (with quantum observed nodes) only by allowing additional shared randomness among all parties. Such shared randomness can at most produce convex mixtures of the states attainable without it. %
    Yet the GHZ state, being pure, is an extreme point of the convex set of states: any convex decomposition that yields GHZ must be trivial, i.e., supported entirely on GHZ states. Therefore, if for every value of the shared random variable the GHZ state is unattainable, then no mixture over that variable can produce a GHZ state.
    \end{proof}

\subsection{Inequality formulation}

Although the result for the perfect GHZ state is interesting from the theoretical and foundational point of view, the application of the result in experiments must come from a noise-robust version of it, as there are always some errors and finite statistics. Thus, we provide an inequality that must be satisfied by the correlations one would obtain by measuring the quantum observed nodes of any DAG in $\mathcal{G}_{Q,N}$, but that some measurements on the GHZ state would violate.

Before presenting the inequality, let us first explain the task from which the inequality is derived (that is the same as the one proposed in Ref.~\cite{Coiteux2021any}). The task consists in winning two games simultaneously. To play the two games, all parties perform two dichotomic measurements except for the second party, $A_2$, who performs three different measurements.
  
  The first game is the CHSH game between $A_1$ and $A_2$ (using the first two settings) conditioned on the product of the rest of the parties measuring the second setting. Then, defining the product as $\tilde{A}_{\text{rest}}^1 = A_3^1\cdot A_4^1 \cdots A_N^1$, we check a different instance of the CHSH game depending on the two possible values $1$ or $-1$. Concretely,

    \begin{equation}
\begin{split}
        I_{CHSH}^{\tilde{A}_{\text{rest}}^1=1} \circ \left\{ A_1 A_2\right\}= & \langle A^0_1 A^0_2 \rangle ^{\tilde{A}_{\text{rest}}^1=1} + \langle A^0_1 A^1_2 \rangle ^{\tilde{A}_{\text{rest}}^1=1} \\
       + & \langle A^1_1 A^0_2 \rangle ^{\tilde{A}_{\text{rest}}^1=1} - \langle A^1_1 A^1_2 \rangle ^{\tilde{A}_{\text{rest}}^1=1}
\end{split}
\end{equation}

   \begin{equation}
\begin{split}
        I_{CHSH}^{\tilde{A}_{\text{rest}}^1=-1} \circ \left\{ A_1 A_2\right\} = & \langle A^0_1 A^0_2 \rangle ^{\tilde{A}_{\text{rest}}^1=-1} + \langle A^0_1 A^1_2 \rangle ^{\tilde{A}_{\text{rest}}^1=-1} \\
       - & \langle A^1_1 A^0_2 \rangle ^{\tilde{A}_{\text{rest}}^1=-1} + \langle A^1_1 A^1_2 \rangle ^{\tilde{A}_{\text{rest}}^1=-1}
\end{split}
\end{equation}
  The second game is the ``same'' game among all the parties for setting 0 except for $A_2$ which uses input 2,
 \begin{equation}
        I_{\text{same}}= \langle A_1^0 A_2^2\rangle + \langle A_2^2 A_3^0\rangle  + \langle A_3^0 A_4^0 \rangle + \cdots + \langle A_{N-1}^0 A_N^0 \rangle.
    \end{equation}

Then, the noise-robust version of Theorem \ref{GHZtheorem} is

\begin{theoremnoise}{GHZtheorem}\label{last}
 Any correlation achieved by measuring a state produced in a DAG $g \in \mathcal{G}_{Q,N}$ according to some individual settings must satisfy

    \begin{equation}
     \begin{split}
\left(\frac{1-\langle \tilde{A}_{\operatorname{rest}}^1\rangle }{2}\right)^2 \left(I_{CHSH}^{\tilde{A}_{\operatorname{rest}}^1=-1}\circ \left\{ A_1A_2\right\}\right)^2 \\ + \left( \frac{1+\langle \tilde{A}_{\operatorname{rest}}^1 \rangle}{2}\right)^2 \left(I_{CHSH}^{\tilde{A}_{\operatorname{rest}}^1=1}\circ \left\{ A_1A_2\right\}\right)^2 \\ + 2(I_{\text{same}}\operatorname{cosec}(\frac{\pi}{2(N-1)})  - (N-1)\operatorname{cotan}(\frac{\pi}{2(N-1)}))^2 + 
\\ \leq \left[\left( \frac{1-\langle \tilde{A}_{\operatorname{rest}}^1 \rangle}{2}\right)^2 + \left( \frac{1+\langle \tilde{A}_{\operatorname{rest}}^1 \rangle}{2}\right)^2\right] 8.
\label{convexinequality}
\end{split}
\end{equation}
Specific measurements on the N-partite GHZ state violate the previous inequality: $LHS=4 + 2(N-1)^2(\operatorname{cosec}(\frac{\pi}{2(N-1)}) - \operatorname{cotan}(\frac{\pi}{2(N-1)}))^2 > 4=RHS$.
\end{theoremnoise}

\begin{proof}
By Lemma~\ref{quantum-contained}, showing that the previous inequality is valid for the scenario $g_{Q,N}^*$ (where each party measures the produced quantum state according to some settings) is a complete proof. Therefore, the particular DAG we consider in the case of $N=4$ is the one depicted in Fig.~\ref{GHZ_scenario}.

\begin{figure*}
    \centering
    \begin{tikzpicture}[
  node distance=1cm and 1.5cm,
  every node/.style={draw, circle, minimum size=6mm, inner sep=0pt},
  level 1/.style={sibling distance=15mm},
  level 2/.style={sibling distance=10mm},
  arrow/.style={<-, >=latex}
]
  \definecolor{lightblue}{RGB}{173,216,230}
  \definecolor{lightpink}{RGB}{255,182,193}
  \definecolor{emerald}{RGB}{150, 240, 150}

  \node[draw, rectangle,  fill=lightblue] (A) {$A_1$};
  \node[draw,  rectangle, fill=lightblue, right=0.8cm of A] (B) {$A_2$};
  \node[draw,  rectangle, fill=lightblue, right= 0.8cm of B] (C) {$A_3$};
  \node[draw,  rectangle, fill=lightblue, right=0.8cm of C] (D) {$A_4$};
  \node[draw, rectangle,  fill=lightblue, above=1cm of A] (X) {$X$};
  \node[draw,  rectangle, fill=lightblue, right=0.8cm of X] (Y) {$Y$};
  \node[draw,  rectangle, fill=lightblue, right= 0.8cm of Y] (Z) {$Z$};
  \node[draw,  rectangle, fill=lightblue, right=0.8cm of Z] (W) {$W$};

  \node[fill=lightpink, below left=1.2cm and 1.55cm of A] (AB) {$\neg (A_3A_4)$};
  \node[fill=lightpink, right=0.5cm of AB] (AC) {$\neg (A_2A_4)$};
  \node[fill=lightpink, right=0.5cm of AC] (AD) {$\neg (A_2A_3)$};
  \node[fill=lightpink, right=0.5cm of AD] (BC) {$\neg (A_1A_4)$};
  \node[fill=lightpink, right=0.5cm of BC] (BD) {$\neg (A_1A_3)$};
  \node[fill=lightpink, right=0.5cm of BD] (CD) {$\neg (A_1A_2)$};

  \node[fill=lightpink, below=3.4cm of A] (ABC) {$\neg (A_4)$};
  \node[fill=lightpink, below=3.4cm of B] (ABD) {$\neg (A_3)$};
  \node[fill=lightpink, below=3.4cm of C] (ACD) {$\neg (A_2)$};
  \node[fill=lightpink, below=3.4cm of D] (BCD) {$\neg (A_1)$};

 \node[draw, fill=emerald, above right=0.5cm and 0.18cm of B] (lambda) {$\lambda$};

\draw[arrow] (A) -- (X);
\draw[arrow] (B) -- (Y);
\draw[arrow] (C) -- (Z);
\draw[arrow] (D) -- (W);
  \foreach \s/\d in {A/AB, A/AC, A/AD, B/AB, B/BC, B/BD, C/AC, C/BC, C/CD, D/AD, D/BD, D/CD, A/lambda, B/lambda, C/lambda, D/lambda} {\draw[arrow] (\s) -- (\d);}

  \foreach \s/\d in {AB/ABC, AB/ABD, AC/ABC, AC/ACD, AD/ABD, AD/ACD, BC/ABC, BC/BCD, BD/ABD, BD/BCD, CD/ACD, CD/BCD} {\draw[arrow] (\s) -- (\d);}

\end{tikzpicture}
    \caption{Scenario arising from measuring the observed nodes in $g_{Q,4}^*$ according to some settings.}
    \label{GHZ_scenario}
\end{figure*}

The proof is based on the quantum inflation method, so we start by providing the particular quantum inflation we construct. As explained in Sec.~\ref{sec:perfect_correlation}, we can specify it by just providing the assignment of which
copies of the sources are in the causal past of each observed node (that is, specify the set of copy indices of
each observed node). Note that, in this case, we are considering a scenario with settings in each party and hence we need to specify the set of copy indices for each setting of each observed node. The quantum inflation is specified by Table~\ref{inflation_indices}.

Note that, in the inflation, the classical common cause is distributed to all the parties of the inflation as classical information can be broadcasted. Importantly, the constructed inflation includes the original structure for the operators involved in the conditioned CHSH game (that is, the operators $A_1^0, A_1^1,A_2^0,A_2^1,A_3^1,A_4^1,...,A_N^1$ share the same causal past as in the original scenario) as well as the \emph{quantum cut inflation}
for all the operators involved in the ``same'' game (that is, the part of the inflation involving the operators $A_1^0,A_2^2,A_3^0,A_4^0,...,A_N^0$ is the same as the one used in Sec.~\ref{sec:perfect_correlation} which can be graphically represented for the case of four parties as in Fig.~\ref{fig:inflation-tetrahedron}). This fact means that the operators involved in the conditioned CHSH game form an injectable set %
and that all adjacent pairs of operators involved in the ``same'' game also form injectable sets.

\begin{table}[h]
    \center
\begin{tabular}{|c|c|c|c|c|c|c|}
\hline
                     & $\neg (A_1)$ & $\neg (A_2)$ & $\neg (A_3)$ & ... & $\neg A_N$ & $\lambda$ \\ \hline
$A_1^{0,1}$ & - & 1 & 1 & 1   & 1 & 1 \\ \hline
$A_2^{0,1}$ & 1 & - & 1 & 1   & 1 & 1 \\ \hline
$A_3^{1}$ & 1 & 1 & - & 1   & 1 & 1 \\ \hline

...                  & 1 & 1 &  1 & -   & 1 & 1 \\ \hline
$A_N^{1}$ & 1 & 1 & 1 & 1   & - & 1 \\ \hline
$A_2^{2}$ & 2 & - & 1 & 1   & 1 & 1 \\ \hline
$A_3^{0}$ & 2 & 2 & - & 1   & 1 & 1 \\ \hline
...                  & 2 & 2 &  2 & -   & 1 & 1 \\ \hline
$A_N^{0}$ & 2 & 2 & 2 & 2   & - & 1 \\ \hline
\end{tabular}
\caption{Copy indices of the sources (columns) in the causal past of every observed node receiving a specific setting (rows). The table describes the quantum inflation used to prove Theorem~\nameref{last}.}
\label{inflation_indices}
\end{table}

Now, we follow similar steps as in Appendix D of Ref.~\cite{cao2022experimental}. We shall start from the quantum monogamy inequality between the violation of the CHSH inequality and perfect coordination of one of its players with a different party introduced in Ref.~\cite{augusiak2014elemental}. %
In our case, we condition on $\tilde{A}_{\text{rest}}^1=1$ or $-1$,

\begin{equation}
\left(I_{CHSH}^{\tilde{A}_{\text{rest}}^1=1}\circ \left\{ A_1A_2\right\}\right)^2 + 4\langle A_1^0 A^0_N \rangle_{\tilde{A}_{\text{rest}}^1=1}^2 \leq 8.
\end{equation}
\begin{equation}
\left(I_{CHSH}^{\tilde{A}_{\text{rest}}^1=-1}\circ \left\{ A_1A_2\right\}\right)^2 + 4\langle A_1^0 A_N^0 \rangle_{\tilde{A}_{\text{rest}}^1=-1}^2 \leq 8.
\end{equation}
 Then, doing a positive linear combination of the previous inequalities weighted by the squares of the probabilities of $\tilde{A}_{\text{rest}}^1=1$ or $-1$, respectively, we obtain

\begin{equation}
 \begin{split}
\left(\frac{1-\langle \tilde{A}_{\text{rest}}^1\rangle }{2}\right)^2 \left[\left(I_{CHSH}^{\tilde{A}_{\text{rest}}^1=-1}\circ \left\{ A_1A_2\right\}\right)^2 + 4\langle A_1^0 A_N^0 \rangle_{\tilde{A}_{\text{rest}}^1=-1}^2 \right] \\ + \left( \frac{1+\langle \tilde{A}_{\text{rest}}^1 \rangle}{2}\right)^2 \left[\left(I_{CHSH}^{\tilde{A}_{\text{rest}}^1=1}\circ \left\{ A_1A_2\right\}\right)^2 + 4\langle A_1^0 A_N^0 \rangle_{\tilde{A}_{\text{rest}}^1=1}^2 \right]  \\ \leq \left[\left( \frac{1-\langle \tilde{A}_{\text{rest}}^1 \rangle}{2}\right)^2 + \left( \frac{1+\langle \tilde{A}_{\text{rest}}^1 \rangle}{2}\right)^2\right] 8.
\label{positive_combination_inequality}
\end{split}
\end{equation}

Manipulating the previous expression, one can write it as

\begin{equation}
     \begin{split}
\left(\frac{1-\langle \tilde{A}_{\text{rest}}^1\rangle }{2}\right)^2 \left(I_{CHSH}^{\tilde{A}_{\text{rest}}^1=-1}\circ \left\{ A_1A_2\right\}\right)^2 \\ + \left( \frac{1+\langle \tilde{A}_{\text{rest}}^1 \rangle}{2}\right)^2 \left(I_{CHSH}^{\tilde{A}_{\text{rest}}^1=1}\circ \left\{ A_1A_2\right\}\right)^2 \\ + 2\langle A_1^0 A_N^0 \rangle^2 + 2\langle A_1^0 A_N^0 \tilde{A}_{\text{rest}}^1 \rangle^2
\\ \leq \left[\left( \frac{1-\langle \tilde{A}_{\text{rest}}^1 \rangle}{2}\right)^2 + \left( \frac{1+\langle \tilde{A}_{\text{rest}}^1 \rangle}{2}\right)^2\right] 8.
\label{positive_combination_inequality_2}
\end{split}
\end{equation}
where $\langle A_1^0 A_N^0 \tilde{A}_{\text{rest}}^1 \rangle$ is the tripartite correlator.

Now, we use lower bounds for the terms $\langle A_1^0 A_N^0 \rangle^2$ and $\langle A_1^0 A_N^0 \tilde{A}_{\text{rest}}^1 \rangle^2$. Concretely, for the latter, we use positivity of a square and, for the former, we use the SDP-based inequality (explained in Appendix \ref{SDP_appendix}) coming from the part of the inflation that corresponds to the \emph{quantum cut inflation},

\begin{equation}
\label{NewIneq}
\begin{split}
\langle A_1^0 A_N^0 \rangle \geq \\ I_{\text{same}}\text{cosec}\left(\frac{\pi}{2(N-1)}\right)
- (N-1)\text{cotan}\left(\frac{\pi}{2(N-1)}\right).
\end{split}
\end{equation}
Note, however, that the previous inequality is a bound for $\langle A_1^0 A_N^0\rangle$. Therefore, to have a lower bound on its square, we can just square the previous inequality under the assumption that both sides are positive. That means that we assume that we are in the range of values of $I_{same}$ where the right-hand side of the previous inequality is positive. %
Therefore, substituting $\langle A_1^0 A_N^0 \rangle^2$ in Eq.~\eqref{positive_combination_inequality_2} and using this bound, we obtain the inequality of Eq.~\eqref{convexinequality}.

Remarkably, all the probabilities involved in the inequality of Eq.~\eqref{convexinequality} correspond to probabilities of injectable sets, and therefore, the inequality applies to the original scenario.

Finally, let us recall the protocol proposed in \cite{Coiteux2021any} to achieve the quantum maximum in both games simultaneously. The strategy consists of all players measuring in the Z basis for the inputs corresponding to the ``same'' game and, for the inputs of the \textit{CHSH} game, the last $N-2$ players measure in the X basis while the first two players use optimal measurements for the standard Bell game, centered on $A_1$ measuring in the Z basis on input $X=0$. Then, the resulting value for the left-hand side of Eq.~\eqref{convexinequality} is  $4 +2(N-1)^2(\text{cosec}(\frac{\pi}{2(N-1)})-\text{cotan}(\frac{\pi}{2(N-1)}))^2$ while the right-hand side is $4$, thus violating the inequality.
\end{proof}

\subsection{Noise tolerance analysis}
\label{noise_section}

Finally, we provide a noise tolerance study of the inequality presented to certify the existence of a quantum common cause (Eq.~\eqref{convexinequality}).
In our analysis, we consider the white noise model for the GHZ state:
\begin{equation}
    \rho_{v} = v|GHZ_N\rangle\langle GHZ_N|+(1-v)\frac{I_N}{2^N},
    \label{noise_model_GHZ}
\end{equation}
where $v\in[0,1]$ is the noise parameter. In particular, $v=1$ corresponds to the perfect GHZ state while $v=0$ to the maximally mixed state.

The fidelity of $\rho_{v}$ with respect to the GHZ is given by $f\coloneqq\langle GHZ_N|\rho_{v}|GHZ_N\rangle$, therefore,
\begin{equation}
    f = \frac{1+v(2^N-1)}{2^N}.
\end{equation}

Now, if we apply to the noisy state the measurements that yield the maximal violation of Eq.~\eqref{convexinequality} for a perfect GHZ state, we obtain the values $I_{CHSH}^{\tilde{A}^1_{\text{rest}}=\pm1}\circ \left\{ A_1A_2\right\}[\rho_v]=v2\sqrt{2}$, $I_{\text{same}}[\rho_v]=v(N-1)$ and $\langle \tilde{A}_{\text{rest}}^1\rangle=0$. Thus, the inequality of Eq.~\eqref{convexinequality} is violated if

\begin{equation}\label{noisetol_ineq}
\begin{split}
v^2 + \frac12\Bigl(v(N-1)
\text{cosec}\!\left(\frac{\pi}{2(N-1)}\right)
\\-(N-1)\text{cotan}\!\left(\frac{\pi}{2(N-1)}\right)
\Bigr)^2 -1 > 0.
\end{split}
\end{equation}
That is, for a given number of parties $N$, finding the violation reduces to studying the positivity of a second order polynomial.

It is important to recall that in our derivation of Eq.~\eqref{convexinequality}, we assumed the positivity of the lower bound of $\langle A_1^0 A_N^0 \rangle$, i.e., positivity of the right-hand side of the inequality of Eq.~\eqref{NewIneq}. Hence, our noise analysis must restrict to the values of the parameter noise which satisfy the following restriction:

\begin{equation}
    v \geq \text{cos} \left(\frac{\pi}{2(N-1)}\right).
\end{equation}

In the region of the noise parameter where the previous restriction is satisfied, there is only one root of the second order polynomial of Eq.~\eqref{noisetol_ineq} which we call $v_{\text{min},N}$. Then, as the leading coefficient of the polynomial is positive, $ v_{\text{min},N}$ is the limiting value of $v$ to observe a violation of inequality~\eqref{convexinequality}, i.e., any value $v>v_{\text{min},N}$ ensures the presence of a quantum common cause. The corresponding value of the fidelity to the noise parameter $v_{\text{min},N}$ is denoted as $f_{\text{min},N}$. The inequality violations for $N=4,5,\dots,10$ are depicted in Fig.~\ref{fig:IneqViolation}. Numerical values (up to the fourth decimal) for $v_{\text{min},N}$ and $f_{\text{min},N}$ are provided in table \ref{tab:p_f_min}.

\begin{table}[h]
    \centering
\begin{tabular}{|c|c|c|}
\hline
$N$ & $v_{\text{min},N}$ & $f_{\text{min},N}$ \\ \hline
4  & 0.9439 & 0.9474 \\ \hline
5  & 0.9612 & 0.9624 \\ \hline
6  & 0.9717 & 0.9721 \\ \hline
7  & 0.9785 & 0.9787 \\ \hline
8  & 0.9831 & 0.9832 \\ \hline
9  & 0.9864 & 0.9865 \\ \hline
10 & 0.9889 & 0.9889 \\ \hline
\end{tabular}
\caption{Numerical values (up to the fourth decimal) of $p_{\min}$ and $f_{\min}$ for different values of $N$.}
\label{tab:p_f_min}
\end{table}

\begin{figure*}[]
    \centering
    \includegraphics[width=10cm]{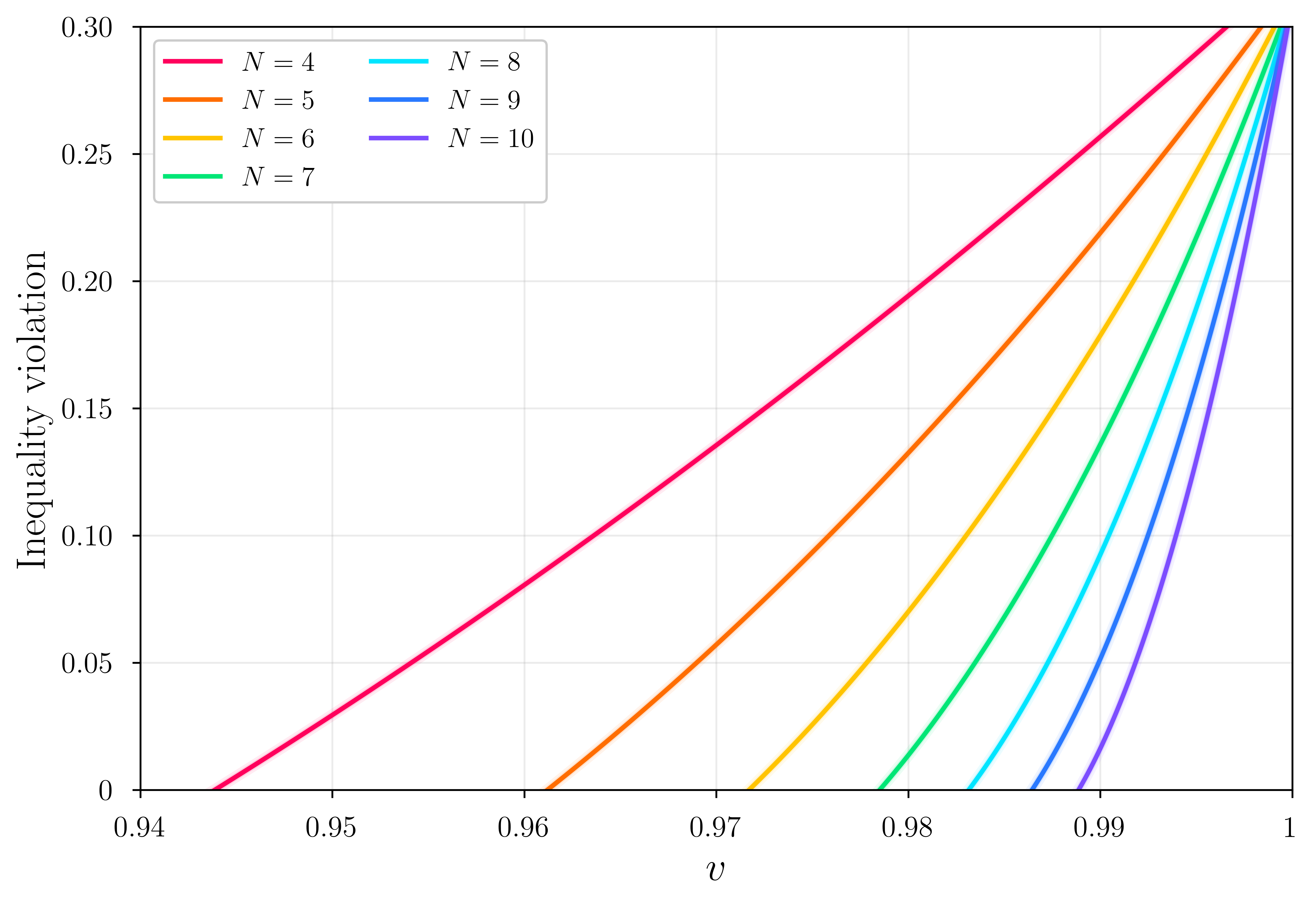}
    \caption{Value of the inequality violation, i.e., value of the left-hand side of Eq.~\eqref{noisetol_ineq}, as a function of the noise parameter $v$ for different values of $N$. Note that the x-intercept correspond to to $v_{\text{min},N}$.}
    \label{fig:IneqViolation}
\end{figure*}

In appendix~\ref{appendix_alternative}, we provide an inequality that slightly improves the noise tolerance only for the case of $N=4$, concretely $v_{\text{min},4}=0.9417$ and $f_{\text{min},4}=0.9454$.

For $N\geq 5$, the required threshold is, to the best of our knowledge, beyond current experimental capabilities. In contrast, the case of $N=4$ is compatible with the fidelities reported in recent multipartite GHZ experiments~\cite{cao2022experimental}. However, existing four-photon GHZ implementations rely on strong post-selection, retaining only those experimental runs in which all photons are successfully detected. As a result, the effective selection of valid events is not guaranteed to be independent of the measurement process, and thus such experiments do not provide a certification of a common cause.

A conclusive test of our inequality would instead require an event-ready, heralded GHZ source, in which the heralding signal is independent of all measurement settings and outcomes. While heralded GHZ generation has been demonstrated for three parties~\cite{cao2024photonic,chen2024heralded}, extending these schemes to four parties remains an outstanding experimental challenge.

\section{Discussion}
\label{discussion}

In this work, we addressed the question of whether there exist causal principles, beyond No-Signaling and Independence, that further constrain the space of physically reasonable theories
compatible with the correlations observed in nature. To this end, we introduced the Coordination principle, which can be viewed as a genuinely multipartite extension of the Reichenbach principle (or equivalently, of its contrapositive, the Independence principle). It states that $N$ variables can be perfectly coordinated only if they share a common cause. We showed that this principle is not implied by No-Signaling and Independence alone by constructing an explicit theory of information (formulated within the framework of operational probabilistic theories) that satisfies them while nevertheless enabling perfect coordination among parties without a common cause. Crucially, this separation relies on considering the most general causal scenarios, in particular allowing intermediate transformations, which are essential for the post-quantum counterexample.

In contrast, we proved that quantum theory satisfies the Coordination principle. We did so by analyzing the most general causal scenarios in which the observed parties do not all share a common cause. Using the quantum inflation method, we derived Bell-like inequalities that certify the presence of a shared common cause in a noise-tolerant way. These inequalities, however, should be viewed as a first step toward the strongest operational constraints implied by the Coordination principle in quantum theory: our bounds were derived from a specific inflation together with a first-level NPA relaxation. Certainly, tighter inequalities can be found by considering larger inflations or higher levels of the NPA hierarchy but scalability is a major bottleneck: pushing to higher levels rapidly becomes extremely expensive computationally.

Lastly, we turned to a genuinely quantum coordination task: the preparation of a $N-$partite GHZ state. We showed that even granting the parties arbitrary shared \emph{classical} randomness, perfect quantum coordination still requires a \emph{quantum} common cause. In other words, a $N-$partite GHZ state cannot be generated in the most general scenarios in which the parties do not all share a quantum common cause. Motivated by experimental realities, where noise is unavoidable, we also derived Bell-like inequalities that witness the presence of a quantum common cause in a noise-robust way.

An interesting direction for future work is to clarify the relation between our results and Ref.~\cite{coiteux2025genuinely}, which introduces a notion of genuinely multipartite entanglement enabled by intermediate transformations. That work focuses on a specific network with transformations (used to represent communication or interaction among close systems) rather than the fully general causal scenarios considered here. It would be valuable to formulate an analogous classification of multipartite entanglement relative to the most general scenarios, and to compare it with the more restricted class studied in Ref.~\cite{coiteux2025genuinely}.

More importantly, our results motivate the search for a \emph{quantitative} version of the Coordination principle. That is, a principle that not only avoids the possibility of \emph{perfect} correlation when there is no common cause but that also quantitatively constrains how well $N$ parties which do not share a common cause can coordinate in any reasonable causal information theory. However, this task seems to be challenging as developing such bounds will require new technical tools beyond those currently established (namely, non-fanout inflation).

Finally, let us recall that throughout this work we instantiated perfect coordination as the task of producing a shared random bit. While this is arguably the canonical coordination task, other natural notions of multipartite coordination can be considered.
We speculate that the quantitative version of the Coordination principle should constrain such alternative coordination tasks as well. Nevertheless, the operational probabilistic theory we provided here can be straightforwardly modified to accommodate other coordination targets.

\begin{acknowledgments}
We thank Victor Gitton, Tein van der Lugt, Marina Maciel Ansanelli, Roberto D. Baldij\~{a}o, Fatemeh Moradi, Xiangling Xu, Peter Brown, Augustin Vanrietvelde, and Renato Renner for fruitful discussions. AC would like to thank Prof. Marc-Olivier Renou and Prof. Renato Renner for supervising his master's thesis, which in part led to the present publication. 
A.C., L.T. and M.-O. R. acknowledge funding by the ANR for the JCJC grants LINKS (No. ANR-23-CE47-0003), the T-ERC QNET (No. ANR-24-ERCS-0008), the project QUANTINT, as well as the European Union’s Horizon 2020 Research and Innovation Programme under QuantERA Grant Agreements No. 731473 and No. 101017733. This work was funded by the European Union under the Marie Skłodowska-Curie Actions (MSCA) through the QNETS project (grant agreement ID: 101208259). Views and opinions expressed are, however, those of the author(s) only and do not necessarily reflect those of the European Union or the European Education and Culture Executive Agency (EACEA). Neither the European Union nor EACEA can be held responsible for them.
MCA and DC also acknowledge support from the Natural Sciences and Engineering Research Council of Canada (grants 50505-11449 and 50505-11450). Research at Perimeter Institute is supported in part by the Government of Canada through the Department of Innovation, Science and Economic Development and by the Province of Ontario through the Ministry of Colleges and Universities.

\end{acknowledgments}

\bigskip
\nocite{apsrev42Control}
\setlength{\bibsep}{1pt plus 1pt minus 1pt}
\bibliographystyle{apsrev4-2-wolfe}
\nocite{MasterThesisAntoine}
\bibliography{Refs}

\newpage

\appendix
\section{Proof of Lemma \ref{lemma:multilayerG4}}
\label{app:proof1}

In this appendix we formally prove Lemma~\ref{lemma:multilayerG4}. However, let us first show a proposition that serves as an intermediate step in order to demonstrate Lemma~\ref{lemma:multilayerG4}. Note that the following proposition is written for the general case of N parties but, to show Lemma~\ref{lemma:multilayerG4}, we just need the case of N=4.

\begin{proposition}
    Any DAG \textit{g} $\in \mathcal{G}_N$ \footnote{$\mathcal{G}_N$ is defined in Definition~\ref{def:setN}.}, in which not all the observed nodes are terminal\footnote{A \emph{terminal} node is a node with no children, i.e., it is not the cause of any other. Consequently, a DAG in which not all observed nodes are terminal necessarily contains at least one directed edge between observed nodes, representing a direct causal influence among them.} is observationally contained in another DAG \textit{g'} $\in \mathcal{G}_N$ where all the observed nodes are terminal.
    \label{terminalN}
\end{proposition}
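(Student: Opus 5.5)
The construction I would use is the standard ``observed nodes become terminal copies'' trick. Given $g\in\mathcal{G}_N$, build $g'$ by replacing every observed node $A_i$ that has children by a latent node $\tilde A_i$ with the same parents. I treat $\tilde A_i$ as classical, or as a quantum instrument when $A_i$ is quantum. Then I add a fresh observed node $A_i'$ whose only parent is $\tilde A_i$. In $g'$, the node $\tilde A_i$ performs exactly the measurement of $A_i$. It sends the classical outcome both to the new terminal node $A_i'$ and to every former child of $A_i$. Terminal observed nodes of $g$ are left untouched. Classical information can be copied, so $\tilde A_i$ legitimately broadcasts its outcome to all its children, and $A_i'$ simply outputs the value it receives. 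Hence every distribution realizable in $g$ is realizable in $g'$ on the new observed nodes, which gives observational containment.

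The steps, in order, are these. First, define $g'$ formally as above and check that it is a DAG. Acyclicity is preserved because we only split nodes along existing edges and append sink nodes. Second, show that the realizing model transfers. Take any classical/quantum model on $g$, meaning states for the sources, channels for the latents, and measurements or instruments for the observed nodes. Assign to $\tilde A_i$ the instrument of $A_i$ with its classical output copied; if $A_i$ in $g$ passes on a post-measurement quantum system to its children, $\tilde A_i$ forwards that system too. Assign the identity readout to $A_i'$. The joint outcome distribution of $(A_1',\dots,A_N')$ then equals that of $(A_1,\dots,A_N)$. By the purification/Naimark remarks in the main text, no dimension issues arise.

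Third, verify that $g'\in\mathcal{G}_N$, i.e.\ that the new observed nodes still do not all share a common cause. I expect this step to be the main obstacle, since the footnote's notion of cause is any directed path. The ancestors of $A_i'$ in $g'$ are $\tilde A_i$ together with the ancestors of $A_i$ in $g$, with each observed node renamed to its latent copy. So a common cause $C$ of all $A_j'$ in $g'$ is either an image of a node of $g$ that is a common cause of all $A_j$ in $g$, or it is some $\tilde A_k$. In the first case we contradict $g\in\mathcal{G}_N$. In the second case, $\tilde A_k$ is an ancestor of $A_k'$ only through its own edge, so $\tilde A_k$ would have to be an ancestor of every other $A_j$ in $g$. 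That alone does not contradict $g\in\mathcal{G}_N$, since $A_k$ itself is not a cause of $A_k$. To close this gap, I would use the parents of $A_k$. If $A_k$ has a parent $P$, then $P$ is an ancestor of $A_k$ and, through $A_k$, of all the other $A_j$, contradicting $g\in\mathcal{G}_N$. If $A_k$ is parentless and is an ancestor of all the others, I would simply give $\tilde A_k$ no edge to $A_k'$ and instead let $A_k'$ be fed by an independent copy. This fails too, so I would rather argue directly that this configuration must be handled by the convention that a parentless observed node counts as its own exogenous cause. I expect this corner case, and getting the convention to match the definition of $\mathcal{G}_N$, to be where the care is needed. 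Finally, iterate over all non-terminal observed nodes, or do them simultaneously, to obtain $g'$ with all observed nodes terminal.
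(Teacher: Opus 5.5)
Your proof is correct, given the convention you identify, but it takes a different and simpler route than the paper.

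\textbf{The paper's route.} It keeps the original observed nodes and deletes the arrows between them. First it adds a nonclassical latent common cause for every maximal set of observed nodes that share a cause. It then iteratively adds intermediate latents for maximal sets of common descendants. Finally it removes each edge $A\to B$, arguing that there is a unique latent parent $\mathcal{L}$ of $A$ and $B$ whose causal past contains all other ancestors of $A$. That $\mathcal{L}$ performs $A$'s measurement and broadcasts the outcome to $A$ and $B$. Chains are treated edge by edge, with extra forwarding.

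\textbf{Your route.} You split each non-terminal $A_i$ into a classical latent $\tilde A_i$ with exactly the same parents and children, plus a fresh terminal observed child. This buys several things. The realizing model transfers verbatim, so you never need to argue that $\mathcal{L}$ can get hold of the systems $A$ would have measured; the paper leaves that rerouting implicit. Chains need no special handling. Membership of $g'$ in $\mathcal{G}_N$ reduces to your short ancestor computation. The paper's construction yields latents already organized as common causes of maximal sets, closer to the genealogical form of the next lemma. That lemma, however, only needs terminal observed nodes, so nothing is lost with your version.

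\textbf{The corner case.} Your convention is forced, not merely convenient. Under the literal path-based notion of cause, take a parentless $A_k$ that outputs a uniform bit which every other party copies. This DAG lies in $\mathcal{G}_N$ yet achieves perfect coordination, contradicting Theorem~\ref{theoremPRL}. So $A_k$ must count as a cause of itself (equivalently, each observed node carries an implicit exogenous parent), and that rules out exactly your problematic configuration. Alternatively, if outcomes are deterministic given inputs, a parentless $A_k$ is constant, and you can simply delete its outgoing edges and hard-code its value in its children. The paper's proof relies on the same convention without stating it: for an edge $A\to B$ with $A$ parentless, its first step provides no common latent parent $\mathcal{L}$.
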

\begin{proof}
The proof consists in following some steps to construct $g'$ starting from $g$. Then, we justify why each of these steps does not constrain the set of achievable probability distributions. Therefore, showing that $g'$ observationally contains $g$. The steps are the following:
\begin{enumerate}
    \item Take all the largest sets of observed nodes that share a common cause and, for each of them, add a nonclassical latent node which has all the nodes in the set as children.
    \item Find the largest sets of common descendants for all sets of the \emph{added} nonclassical common causes (note that as we are considering common descendants, the sets of common causes has size two or more). For each distinct maximal set of common descendants of size larger or equal than two, $\mathcal{S}_i$ (where $i$ is an index for each of these sets), take the set of all their common causes among the \emph{added} ones in the first step, $\mathcal{C}_{\mathcal{S}_i}$. Then, add an intermediate latent node which has as parents all the common causes in $\mathcal{C}_{\mathcal{S}_i}$ and, as children, all the observed nodes in $\mathcal{S}_i$. Then, we shall repeat this step considering only the recently added generation of intermediate latent nodes until the maximal sets of common descendants of any pair of intermediate nodes are only of size 1. 
    \item Remove direct arrows between observed nodes.
\end{enumerate}

The first and second steps only add components to the causal scenario and, therefore, the resulting DAGs ($g_1$ and $g_2$, respectively) observationally contain the initial DAG $g$. Hence, the missing task is to prove that the DAG after the third step, $g'$, observationally contains $g_2$.

In order to show it, note first that if two observed nodes are directly connected, say $A\rightarrow B$, then every ancestor of $A$ is also an ancestor of $B$, i.e., $Anc(A)\subseteq Anc(B)$, where $Anc(X)$ means the set of ancestors of $X$. By construction, this entails the following structural feature for any directed edge between observed nodes in $g_2$: There is a unique latent node, $\mathcal{L}$, that is a parent of $A$ and $B$ and whose causal past contains all other ancestors of $A$.
To see it more clearly, if $A$ had two distinct such latent parents, then both of them would also lie in the causal past of $B$. In that case, these two latent nodes would share at least two common descendants, namely $A$ and $B$, and the construction would trigger step 2 again; hence such a situation is excluded.

Once this is established, we can explain why removing arrows between observed nodes does not change the set of achievable probability distributions (i.e. why $g'$ contains $g_2)$ by virtue of the following protocol. %
The measurement that $A$ would have applied locally (to the system $A$ would have received) is performed in $\mathcal{L}$ who then sends the resulting outcome to both $A$ and $B$ (who also receives its corresponding nonclassical system). Then $A$ simply outputs this value, while $B$ uses it exactly as it would have used the message received from $A$ along the edge $A\rightarrow B$. This reproduces the original behavior while eliminating the need of the direct causal arrow between the observed nodes.

The same idea applies to chains of observed nodes, e.g., $A\rightarrow B \rightarrow C$. One applies the above replacement edge by edge, in causal order. For each edge except the last, the corresponding unique latent parent forwards the simulated outcome not only to the immediate observed nodes, but also to a latent node that lies in the causal past of the remaining nodes in the chain, ensuring that downstream dependencies are preserved.

\end{proof}

We shall now turn to proving Lemma~\ref{lemma:multilayerG4}. We state it again here for the convenience of the reader.

\begin{replemma}{lemma:multilayerG4}
Any DAG $\textit{g} \in \mathcal{G}_4$ is observationally contained in the \hyperref[2-layer-tetrahedron]{two-layer tetrahedron}.
\end{replemma}

\begin{proof}
We first invoke Proposition~\ref{terminalN} for the case of $N=4$ to say that, without loss of generality, we can consider only DAGs $\textit{g} \in \mathcal{G}_4$ which have only terminal observed nodes. Then, for any such DAG, we give a general procedure to extend it to the \hyperref[2-layer-tetrahedron]{two-layer tetrahedron}. Therefore, as we are just adding more components to the causal scenario, we conclude that any DAG $\textit{g} \in \mathcal{G}_4$ is observationally contained in the \hyperref[2-layer-tetrahedron]{two-layer tetrahedron}.

The mentioned extension is the following. First, we arrange $\textit{g}$ into a special form which corresponds to the intuitive notion of a genealogical tree, in which we arrange our nodes into generations according to the following rules:

\begin{itemize}
\item We label the observed nodes by a number: $1,2,3,4$.

\item The latent nodes are labeled by the set of its observed descendants, $S$. If $S$ is the same for $k \in \mathbb{N}$ nodes, we distinguish them by labeling them by a pair $(S,i)$ where $i=1,2,...k$.

\item We order the DAG by generations with generation $j \in {1,2,3}$ containing all nodes with sets of cardinality $j$ on their label. (Notice that, since there is no shared cause between all observed parties, there can be at most $3$ generations).
\end{itemize}

The above procedure is always possible given that $\textit{g}$ is a DAG. It is noteworthy that, by virtue of the chosen labeling of the nodes, a causal edge propagating within the same generation can only connect nodes that share the same set of descendants (otherwise, the influencing node would inherit additional descendants and would belong to a different generation).

Once \textit{g} has been genealogically rearranged, we are free to add components (nodes and/or edges) to the causal scenario because doing so does not restrict the set of probability distributions compatible with the DAG. We build $g'$ such that it extends $g$ as follows:
\begin{itemize}
\item We merge all nodes with the same set $S$ in their label, denoting by $S$ the node obtained from the merger. 

\item We allow all present latent nodes to be quantum latent nodes.
\item We add quantum latent nodes to the graph such that all elements in $\mathcal{P}\left(\{1,2,3,4\}\right)\backslash \left\{\emptyset, \{1,2,3,4\}\right\}$ are represented, where $\mathcal{P}\left(\{1,2,3,4\}\right)$ denotes the power set of the set $\{1,2,3,4\}$.
\item We add arrows from each node $S$ to all the nodes $S'$ if $S'$ is a subset of $S$ with cardinality $|S|-1$.
\end{itemize}

Finally, notice that $g'$ is exactly the causal structure of the \hyperref[2-layer-tetrahedron]{two-layer tetrahedron}. Therefore, $g$ is observationally contained in the \hyperref[2-layer-tetrahedron]{two-layer tetrahedron}.
\end{proof}

Remarkably, the previous proof can be straightforwardly generalized to the case of $N$ parties which constitute the proof of Lemma~\ref{lemma:multilayerG}. Additionally, it does not rely on the classicality of observed nodes but rather on the topology of the causal structure of the DAGs. Therefore, it is also a valid proof for the case of having observed quantum nodes, i.e., showing Lemma~\ref{quantum-contained}.

\section{SDP inequalities}
\label{SDP_appendix}

In this appendix, we present the formal proofs of Theorems~\ref{noiserobusttheorem4} and~\nameref{noiserobusttheorem} as well as the inequality used in the proof of Theorem~\nameref{last}. The technical core of the argument consists in deriving inequalities as certificates from a semidefinite program (SDP). 

We begin by explaining how to obtain an inequality involving only two-party correlators of adjacent parties, together with the correlator between the first and last parties because such inequalities are the ones used in the proof of Theorem~\nameref{last}. We then specialize these general constructions to derive the certificates of Theorems~\ref{noiserobusttheorem4} and~\nameref{noiserobusttheorem}.

Let us first provide the details for the four-partite case and then generalize it for $N$ parties. We consider a moment matrix where the only expressible sets are the two-party correlators of adjacent pairs plus the pair of first and last parties, that is, $(A_i,A_{i+1}) \forall i \in \{1,2,3\}$ and $(A_1,A_4)$. (Note that this is the case for the \emph{quantum cut inflation} defined by Table~\ref{tab:quantum_cut4}). Concretely, we use the first level of the NPA hierarchy, that is, we construct a moment matrix including only sequences of operators of length one. In particular,
\begin{equation*}
\begin{aligned}
& \text{\hspace{10pt}$A_1$ \hspace{10pt}$A_2$ \hspace{19pt}$A_3$ \hspace{22pt}$A_4$} \\
 \Gamma=&\left(\begin{array}{cccc}
1 & \langle A_1 A_2\rangle & x_1 & \langle A_1 A_4\rangle \\
& 1 & \langle A_2 A_3\rangle & x_2 \\
& & 1 & \langle A_3 A_4\rangle \\
& & & 1
\end{array}\right),
\end{aligned}    
\end{equation*}
where we only give the upper triangular part of $\Gamma$ since it is symmetric. We write the correlators for all expressible sets and leave as variables, $x_i$, the elements for which we lack information. As $\Gamma$ is positive semidifinite (PSD), the matrix resulting from multiplying $\Gamma$ with any other PSD matrix will be also PSD; therefore, it must have positive trace.  Then, in order to obtain an inequality, we need to find a positive semidefinite matrix, i.e., the witness $W$, such that the trace of the multiplication only involves the correlators of expressible sets. The witness can be found using the duality of the semidefinite program for the positive completion problem of $\Gamma$ (notice that the witness is not unique). In particular, we find

\begingroup
\renewcommand{\arraystretch}{1.5}
\begin{equation*}
 W_4=\left(\begin{array}{cccc}
\frac{1}{2}\cos(\frac{\pi}{6}) & -\frac{1}{2} & \phantom{-}0 & \phantom{-}\frac{1}{2}\sin(\frac{\pi}{6}) \\
-\frac{1}{2} & \phantom{-}\cos(\frac{\pi}{6}) & -\frac{1}{2} & \phantom{-}0 \\
\phantom{-}0 & -\frac{1}{2} & \phantom{-}\cos(\frac{\pi}{6}) & -\frac{1}{2} \\
\phantom{-}\frac{1}{2}\sin(\frac{\pi}{6}) & \phantom{-}0 & -\frac{1}{2} & \frac{1}{2}\cos(\frac{\pi}{6}) \\
\end{array}\right).
\end{equation*}
\endgroup

One can easily check that $W_4\succeq 0$ as its eigenvalues are all nonnegative. Finally, we compute the certificate as 
\begin{equation}
    \text{Tr}(\Gamma W_4) \geq 0.
\end{equation}

Hence,

\begin{equation}
\langle A_1A_2 \rangle + \langle A_2A_3 \rangle + \langle A_3A_4 \rangle \leq \sin(\frac{\pi}{6})\langle A_1 A_4 \rangle + 3\cos(\frac{\pi}{6}).
\end{equation}

Let us now turn to the $N$-partite case. We follow the same idea, consider a moment matrix corresponding to NPA level 1 where the only expressible sets are the two-party correlators of adjacent pairs plus the pair of first and last parties, i.e., $(A_i,A_{i+1}) \forall i \in \{1,...,N-1\}$ and $(A_1,A_N)$. (Note that this is the case for the \emph{quantum cut inflation} defined by Table~\ref{tab:quantum_cut}). Concretely,

\begin{equation*}
\begin{aligned}
& \text{\hspace{10pt}$A_1$ \hspace{10pt}$A_2$ \hspace{50pt}  $\cdots$ \hspace{75pt}$A_N$} \\
 \Gamma=&\left(\begin{array}{cccccc}
1 & \langle A_1 A_2 \rangle & x & \cdots & x & \langle A_1 A_N\rangle \\
& 1 & \langle A_2 A_3\rangle & x & \cdots & x \\
& &  \ddots  & \ddots & \ddots &  \vdots \\
& & & & & \\

& & & 1 & \text{\footnotesize{ $\langle A_{N-2} A_{N-1}\rangle$}} & x \\ 
& & & & 1 & \text{\footnotesize{$\langle A_{N-1} A_{N}\rangle$}} \\
& & & & & 1
\end{array}\right),
\end{aligned}    
\end{equation*}\\
where we write all the variables as $x$ although they may take different values and use the correlators of the expressible sets. 

The elements of the witness matrix are now functions of the number of parties we consider, $N$. The concrete $N\times N$ witness matrix is of the form

\begin{widetext}
\begin{equation*}
 W_N=\left(\begin{array}{cccccc}
\frac{1}{2}\cos(\frac{\pi}{2(N-1)}) & -\frac{1}{2} & 0 & \cdots & 0 & \frac{1}{2}\sin(\frac{\pi}{2(N-1)}) \\
 & \cos(\frac{\pi}{2(N-1)}) & -\frac{1}{2} & 0 & \cdots & 0 \\
 &  & \ddots  & \ddots & \ddots & 0 \\
& & & & & \\
& & & \cos(\frac{\pi}{2(N-1)}) & -\frac{1}{2} & 0 \\ 
& & & & \cos(\frac{\pi}{2(N-1)}) & -\frac{1}{2} \\
& & & & & \frac{1}{2}\cos(\frac{\pi}{2(N-1)})
\end{array}\right)
\end{equation*}
\end{widetext}
where we only specify the upper part because it is symmetric. One can prove that it is PSD, $W_N \succeq 0$, using the Sylvester's criterion (i.e., showing that the determinant of all the principal minors are nonnegative). We provide the explicit proof in this appendix as Proposition~\ref{Sylvester}. Finally, notice that the trace of $\Gamma W$ will only involve the correlators of the expressible sets (or in other words, only the known variables of the moment matrix). Therefore, we derive an inequality that is valid for the original scenario.

\begin{equation}
\begin{split}
\sum_{i=1}^{N-1}\langle A_i A_{i+1} \rangle \leq \\ \sin(\frac{\pi}{2(N-1)})\langle A_1 A_N \rangle + (N-1)\cos(\frac{\pi}{2(N-1)}),
\end{split}
    \label{sdpN_inequality}
\end{equation}

\begin{proposition}
\label{Sylvester}
    Matrix $W_N$ is positive semidefinite for any $N\geq 3$
\end{proposition}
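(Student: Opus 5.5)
The plan is to avoid checking all principal minors directly. Instead I will combine two facts: an explicit two-dimensional kernel of $W_N$, and positive definiteness of its leading $(N-2)\times(N-2)$ block. Cauchy interlacing then forces the remaining spectrum to be nonnegative. Throughout, write $\theta=\frac{\pi}{2(N-1)}$, $c=\cos\theta$, $s=\sin\theta$. With this notation $W_N$ has diagonal $(c/2,c,\dots,c,c/2)$, entries $-1/2$ on the first off-diagonals, corner entries $(W_N)_{1N}=(W_N)_{N1}=s/2$, and zeros elsewhere.

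\emph{Step 1: kernel.} I claim that $v_i=\cos((i-1)\theta)$ and $w_i=\sin((i-1)\theta)$, for $i=1,\dots,N$, both satisfy $W_Nv=W_Nw=0$.

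For interior rows $2\le i\le N-1$, row $i$ of $W_N v$ equals $c\,v_i-\tfrac12(v_{i-1}+v_{i+1})$. This vanishes by the identity $\cos(\alpha-\theta)+\cos(\alpha+\theta)=2c\cos\alpha$, and the same identity with sines handles $w$.

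The boundary rows are where the corner entry and the choice of $\theta$ enter, through $(N-1)\theta=\pi/2$:
\begin{itemize}
\item Row $1$ applied to $v$ gives $\tfrac c2-\tfrac c2+\tfrac s2\cos\tfrac\pi2=0$.
\item Row $N$ applied to $v$ gives $\tfrac c2\cdot 0-\tfrac12\cos(\tfrac\pi2-\theta)+\tfrac s2=0$.
\item Row $1$ applied to $w$ gives $0-\tfrac s2+\tfrac s2\sin\tfrac\pi2=0$.
\item Row $N$ applied to $w$ gives $\tfrac c2-\tfrac12\sin(\tfrac\pi2-\theta)+0=0$.
\end{itemize}
Since $v_1=1$ and $w_1=0$, while $w_2=s\neq 0$, the vectors $v$ and $w$ are linearly independent. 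Hence $\dim\ker W_N\ge 2$.

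\emph{Step 2: the leading block is positive definite.} Let $D_k$ denote the $k$-th leading principal minor of $W_N$ for $k\le N-1$. These minors involve only the tridiagonal part, since the corner entry sits in row and column $N$. Expanding along the last row gives the recurrence
\[
D_k=c\,D_{k-1}-\tfrac14 D_{k-2}.
\]
One checks directly that $D_1=c/2=\tfrac12\cos\theta$ and $D_2=\tfrac{c^2}{2}-\tfrac14=\tfrac14\cos 2\theta$. The recurrence is equivalent to the Chebyshev relation $\cos k\theta=2c\cos((k-1)\theta)-\cos((k-2)\theta)$, so induction gives
\[
D_k=2^{-k}\cos(k\theta).
\]
For $1\le k\le N-2$ we have $0<k\theta<\pi/2$, so $D_k>0$. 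By Sylvester's criterion, applied here only to leading minors, the leading $(N-2)\times(N-2)$ block $B$ is positive definite. This requires $N\ge 3$, so that the block is nonempty.

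\emph{Step 3: interlacing.} By Cauchy's interlacing theorem applied to the principal submatrix $B$, the $k$-th largest eigenvalue of $W_N$ is at least the $k$-th largest eigenvalue of $B$ for $k=1,\dots,N-2$. Hence $W_N$ has at least $N-2$ strictly positive eigenvalues. By Step 1, it also has at least two zero eigenvalues. Together these account for all $N$ eigenvalues, so the spectrum is $N-2$ positive values plus a double zero, and $W_N\succeq 0$.

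The main obstacle is Step 1, namely finding the kernel vectors and seeing that the corner entry $s/2$ is exactly what makes the boundary rows vanish. Once those vectors are identified, everything else is a routine recurrence together with a standard interlacing argument. The same argument specialises to $W_4$ with $\theta=\pi/6$.
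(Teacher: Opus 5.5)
Your proof is correct, and it takes a different route from the paper's at the decisive step.

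The paper shares your first two ingredients. It computes the leading principal minors via the Chebyshev recurrence, $f_m=2^{-m}\cos(m\theta)$ for $m\le N-1$, and exhibits the single null vector $v_j=\cos((j-1)\theta)$ to get $\det W_N=0$. It then concludes $W_N\succeq 0$ directly from the nonnegativity of all leading principal minors.

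That last inference is not a valid form of Sylvester's criterion. For semidefiniteness one needs all principal minors, not only the leading ones. For example, $\left(\begin{smallmatrix}1&1&0\\1&1&0\\0&0&-1\end{smallmatrix}\right)$ has leading minors $1,0,0$, which is exactly the paper's sign pattern, yet it has eigenvalue $-1$. In general, knowing $f_1,\dots,f_{N-2}>0$, $f_{N-1}=0$ and $\det W_N=0$ gives, by interlacing, $N-2$ positive eigenvalues and at least one zero. It still leaves room for the last eigenvalue to be negative.

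Your second null vector $w_j=\sin((j-1)\theta)$ closes precisely this gap. Combined with strict Sylvester on the $(N-2)\times(N-2)$ leading block and Cauchy interlacing, it forces the spectrum to consist of $N-2$ positive eigenvalues and a double zero. Your route therefore buys a complete argument, and also the exact rank $N-2$ of $W_N$, at the cost of checking two more boundary rows. The paper's route is shorter, but as written it rests on a criterion that does not hold for semidefiniteness.
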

\begin{proof}
    We show that $W_N \succeq 0$ using the Sylvester criterion, i.e., by showing that all their leading principal minors are positive. Note that the first two principal minors are straightforward to check and that the third through the $(N-1)^{th}$ minors are tridiagonal matrices. Hence, we use the well known recurrence formula to compute the determinant of a tridiagonal matrix to show the posiivity of all the minors of size $m\leq N-1$. Such formula is 

    \begin{equation}
        f_m = a_m f_{m-1} - c_{m-1}b_{m-1}f_{m-2},
    \end{equation}
    where $f_m$ is the determinant of the minor of size $m$ and $a_m$, $b_m$, $c_m$ are the entries of the main, upper and lower diagonals respectively. The starting values for the recurrence are  $f_{-1}=0$, $f_0=1$. Note that in our case, we have that $b_m=c_m=-\frac{1}{2} \forall m \in \{1,...,N-2\}$, $a_1=\frac{1}{2}\text{cos}(\frac{\pi}{2(N-1)})$ and $a_m=\text{cos}(\frac{\pi}{2(N-1)}) \forall m \in \{2,...,N-1\}$. Substituting this in the recurrence formula:

    \begin{equation}
        f_m = \text{cos}(\frac{\pi}{2(N-1)}) f_{m-1} - \frac{1}{4}f_{m-2}  \quad  \text{for} \quad  m\geq2.
    \end{equation}

Also, the recurrence formula for the Chebysev polynomials of the first kind is

\begin{equation}
    T_m(u) = 2u T_{m-1}(u) - t_{m-2}(u)
\end{equation}
where the starting point of the recurrence is $T_0(u)=1$ and $T_1(u)=u$. Note that the Chebysev polynomials of the first kind satisfy the relation

\begin{equation}
T_m(cos(\theta))=cos(m\theta).
\end{equation}

Therefore, we can establish a relation between the recurrence formula for the determinant of a triangular matrix with the one for the Chebysev polynomials:

\begin{equation}
    f_m = 2^{-m}T_m(\text{cos}(\frac{\pi}{2(N-1)})),
\end{equation}
which is easy to verify. Then, we directly see that for any $m \in \{1,...,N-1\}$, $f_m=2^{-m}\text{cos}(m\frac{\pi}{2(N-1)})\geq 0$ with equality in the case of $m=N-1$. This shows that all leading principal minors of size less than $N$ are semipositive.

Finally, we need to check that the determinant of $W_N$ is indeed semipositive as well. To show it, we provide an eigenvector with eigenvalue 0, thus, proving that $Det(W_N)=0$. Let us first provide all the entries of the eigenvector and then check that indeed its eigenvalue is 0. The entries are given by

\begin{equation}
    v_j = \text{cos}((j-1)\frac{\pi}{2(N-1)}) \quad \text{for} \quad j=1,...,N.
\end{equation}

Now, we check the entries of the vector $W_Nv$:
\begin{itemize}
    \item First entry:
    \begin{equation*}
    \begin{split}
        (W_Nv)_1= \frac{1}{2}\text{cos}(\frac{\pi}{2(N-1)})v_1 -\frac{1}{2}v_2 + \frac{1}{2}\text{sin}(\frac{\pi}{2(N-1)})v_N \\= \frac{1}{2}(\text{cos}(\frac{\pi}{2(N-1)}) - \text{cos}(\frac{\pi}{2(N-1)}))=0
        \end{split}
        \end{equation*}

        \item Middle entries, $2\leq j\leq N-1$:
          \begin{equation*}
    \begin{split}
        (W_Nv)_j= -\frac{1}{2}v_{j-1} + \text{cos}(\frac{\pi}{2(N-1)})v_j -\frac{1}{2}v_{j+1} \\= -\frac{1}{2}(v_{j-1}+v_{j+1}) + \text{cos}(\frac{\pi}{2(N-1)})v_j=0
        \end{split}
        \end{equation*}
        where we have used that \begin{equation}
       (v_{j-1}+v_{j+1})=2\text{cos}(\frac{\pi}{2(N-1)})v_j
        \end{equation}

        \item Last entry:
        \begin{equation*}
    \begin{split}
        (W_Nv)_N \\=  \frac{1}{2}\text{sin}(\frac{\pi}{2(N-1)})v_1 -\frac{1}{2}v_{N-1} + \frac{1}{2}\text{cos}(\frac{\pi}{2(N-1)})v_N \\= \frac{1}{2}(\text{sin}(\frac{\pi}{2(N-1)}) - \text{sin}(\frac{\pi}{2(N-1)}))=0
        \end{split}
        \end{equation*}
\end{itemize}

There, as we have shown that all the leading principal minors are semipositive, we conclude that $W_N \succeq 0$.

\end{proof}

\subsection{Proof of Theorems~\ref{noiserobusttheorem4} and \nameref{noiserobusttheorem}}

Finally, we give the formal proof of Theorems~\ref{noiserobusttheorem4} and \nameref{noiserobusttheorem}. We write it only for the $N$-partite case, as they are completely analogous.

\begin{proof}    
By Lemma~\ref{lemma:multilayerG}, any inequality that we derive for the set of probability distributions compatible with $g_N^*$ is an inequality that must be satisfied by any DAG $g \in \mathcal{G}_N$. Then, we consider the cut quantum inflation defined by Table~\ref{tab:quantum_cut}. As explained in the main text, the set of the first and last parties is expressible because they are causally independent. Hence, their correlator factorize,

\begin{equation}
    \langle A_1A_N \rangle = \langle A_1 \rangle \langle A_N \rangle.
    \label{factors}
\end{equation}

Therefore, substituting Eq.~\eqref{factors} in Eq.\eqref{sdpN_inequality}, we obtain
\begin{equation*}
\sum_{i=1}^{N-1}\langle A_i A_{i+1} \rangle \leq \sin(\frac{\pi}{2(N-1)}) \langle A_1 \rangle \langle A_N \rangle + (N-1)\cos(\frac{\pi}{2(N-1)})
 \label{thm_noiseinequality_appendix}
\end{equation*}
where all the correlators correspond to injectable set (as all adjacent pairs of nodes constitute an injectable set), thus, yielding an inequality valid for the original scenario.
\end{proof}

\section{Alternative SDP inequality}
\label{appendix_alternative}

In this appendix, we provide an alternative SDP inequality that grants a slight improvement to the amount of white noise one could add to the GHZ state of four parties and still detect a violation. In this appendix, we provide the general inequality for the $N$-partite case and show that, indeed, only in the case of $N=4$ we obtain better noise values.

The SDP inequality is obtained analogously to the one explained in Appendix~\ref{SDP_appendix}. Hence, we directly provide the witness matrices, $W'_4$ and its generalization to $N$ parties.

\begin{equation*}
 W'_4=\left(\begin{array}{cccc}
\phantom{-}2 & -3 & \phantom{-}0 & \phantom{-}1 \\
-3 & \phantom{-}6 & -3 & \phantom{-}0 \\
\phantom{-}0 & -3 & \phantom{-}6 & -3 \\
\phantom{-}1 & \phantom{-}0 & -3 & \phantom{-}2 \\
\end{array}\right).
\end{equation*}

\begin{equation*}
\resizebox{\columnwidth}{!}{$
W'_N=\left(\begin{array}{cccccc}
N-2 & 1-N & 0 & \cdots & 0 & 1 \\
 & 2(N-1) & 1-N & 0 & \cdots & 0 \\
 &  & \ddots  & \ddots & \ddots & 0 \\
 &  &  & \ddots & \ddots & 0 \\
 &  &  & 2(N-1) & 1-N & 0 \\ 
 &  &  &  & 2(N-1) & 1-N \\
 &  &  &  &  & N-2
\end{array}\right)
$}
\end{equation*}

It is straightforward to check that $W'_4\succeq 0$ and, using the Sylvester criterion, it is also easy to prove that $W'_N \succeq 0$. However, as it does not lead to any improvement in the noise values for any other $N$ beyond $N=4$, we do not provide the proof of semipositivity of $W'_N$.

Therefore, the SDP inequality for the case of $N$ parties (computed as $Tr(W'_N\Gamma)\geq0$) is 

\begin{equation}
    \langle A_1 A_N \rangle \geq 1- (N-1)^2+(N-1)\sum_{i=1}^{N-1}\langle A_i A_{i+1} \rangle,
    \label{sdpN_inequality_alternative}
\end{equation}

Then, following the very same steps as in the proof of the noise robust version of Theorem \ref{GHZtheorem} but using this SDP inequality, one obtains

\begin{equation}
     \begin{split}
\left(\frac{1-\langle \tilde{A}_{\text{rest}}^1\rangle }{2}\right)^2 \left(I_{CHSH}^{\tilde{A}_{\text{rest}}^1=-1}\circ \left\{ A_1A_2\right\}\right)^2 \\ + \left( \frac{1+\langle \tilde{A}_{\text{rest}}^1 \rangle}{2}\right)^2 \left(I_{CHSH}^{\tilde{A}_{\text{rest}}^1=1}\circ \left\{ A_1A_2\right\}\right)^2 \\ + 2(1-(N-1)^2 + (N-1)I_{\text{same}})^2 
\\ \leq \left[\left( \frac{1-\langle \tilde{A}_{\text{rest}}^1 \rangle}{2}\right)^2 + \left( \frac{1+\langle \tilde{A}_{\text{rest}}^1 \rangle}{2}\right)^2\right] 8.
\label{alternativeGHZineq}
\end{split}
\end{equation}

Doing the same noise tolerance analysis with the white noise model parametrized by $v$ as in Eq.~\ref{noise_model_GHZ}, we obtain an analogous table of values for $v_{\text{min},N}$ and $f_{\text{min},N}$ given in Table~\ref{table_alternative}. Finally, to make clear that this alternative inequality only grants an improvement for $N=4$, we plot the violations of both inequalities for $N=4,\dots,10$ in Fig. \ref{fig:IneqViolationCompared}. Note that indeed only for $N=4$ the minimum value for positivity is smaller for the inequality provided in this appendix.

\begin{table}[h]
\centering
\begin{tabular}{|c|c|c|}
\hline
$N$ & $v_{\text{min},N}$ & $f_{\text{min},N}$ \\ \hline
4  & 0.9417 & 0.9454 \\ \hline
5  & 0.9617 & 0.9629 \\ \hline
6  & 0.9730 & 0.9735 \\ \hline
7  & 0.9800 & 0.9802 \\ \hline
8  & 0.9846 & 0.9847 \\ \hline
9  & 0.9878 & 0.9878 \\ \hline
10 & 0.9901 & 0.9901 \\ \hline
\end{tabular}
\caption{Numerical values (up to the fourth decimal) of $v_{\text{min},N}$ and $f_{\text{min},N}$ derived from inequality \ref{sdpN_inequality_alternative} for different values of $N$.}
\label{table_alternative}
\end{table}
\mbox{}\\

\onecolumngrid

\begin{figure*}[h]
    \centering
    \includegraphics[width=10cm]{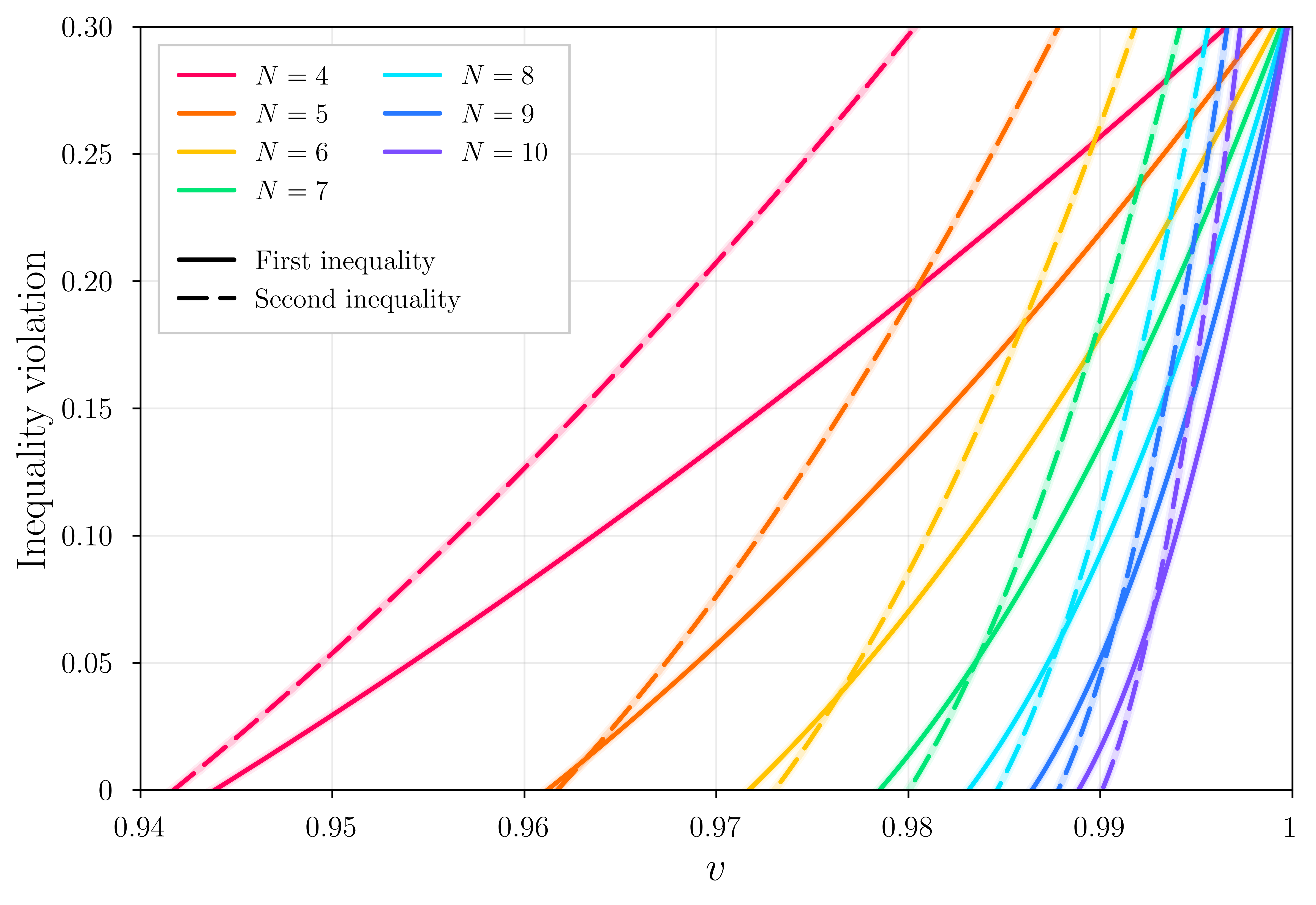}
    \caption{ Value of the inequality violation derived from inequality \ref{sdpN_inequality} (first inequality, corresponding to the curves in full lines) and inequality \ref{sdpN_inequality_alternative} (second inequality, corresponding to the curves in dashed lines), as a function of the noise parameter $v$ for different values of $N$. Note that the x-intercept correspond to to $v_{\text{min},N}$.}
    \label{fig:IneqViolationCompared}
\end{figure*}

\begin{acronym}[CGLMP]\itemsep 1\baselineskip
\acro{AGF}{average gate fidelity}
\acro{AMA}{associated measurement assemblage}

\acro{BOG}{binned outcome generation}

\acro{CGLMP}{Collins-Gisin-Linden-Massar-Popescu}
\acro{CHSH}{Clauser-Horne-Shimony-Holt}
\acro{CP}{completely positive}
\acro{CPT}{completely positive and trace preserving}
\acro{CPTP}{completely positive and trace preserving}
\acro{CS}{compressed sensing} 
\acro{DAG}{directed acyclic graph}

\acro{DFE}{direct fidelity estimation} 
\acro{DM}{dark matter}

\acro{GST}{gate set tomography}
\acro{GPT}{general probabilistic theory}
\acroplural{GPT}[GPTs]{general probabilistic theories}
\acro{GUE}{Gaussian unitary ensemble}

\acro{HOG}{heavy outcome generation}

\acro{JM}{jointly measurable}

\acro{LHS}{local hidden-state model}
\acro{LHV}{local hidden-variable model}
\acro{LOCC}{local operations and classical communication}

\acro{MBL}{many-body localization}
\acro{ML}{machine learning}
\acro{MLE}{maximum likelihood estimation}
\acro{MPO}{matrix product operator}
\acro{MPS}{matrix product state}
\acro{MUB}{mutually unbiased bases} 
\acro{MW}{micro wave}

\acro{NISQ}{noisy and intermediate scale quantum}

\acro{OPT}{operational probabilistic theory}

\acro{POVM}{positive operator valued measure}
\acro{PR}{Popescu-Rohrlich}
\acro{PVM}{projector-valued measure}

\acro{QAOA}{quantum approximate optimization algorithm}
\acro{QML}{quantum machine learning}
\acro{QMT}{measurement tomography}
\acro{QPT}{quantum process tomography}
\acro{QRT}{quantum resource theory}
\acroplural{QRT}[QRTs]{Quantum resource theories}

\acro{RDM}{reduced density matrix}

\acro{SDP}{semidefinite program}
\acro{SFE}{shadow fidelity estimation}
\acro{SIC}{symmetric, informationally complete}
\acro{SM}{Supplemental Material}
\acro{SPAM}{state preparation and measurement}

\acro{RB}{randomized benchmarking}
\acro{rf}{radio frequency}

\acro{TT}{tensor train}
\acro{TV}{total variation}

\acro{UI}{uninformative}

\acro{VQA}{variational quantum algorithm}

\acro{VQE}{variational quantum eigensolver}

\acro{WMA}{weighted measurement assemblage}

\acro{XEB}{cross-entropy benchmarking}

\end{acronym}

\end{document}